\title{A New Method and a New Scaling For Deriving Fermionic Mean-field Dynamics}
\author{
S\"oren Petrat\footnote{Institute of Science and Technology Austria (IST Austria), Am Campus 1, 3400 Klosterneuburg, Austria. E-mail: soeren.petrat@ist.ac.at}\ \ and
Peter Pickl\footnote{Mathematisches Institut, Ludwig-Maximilians-Universit\"at, Theresienstr.\ 39, 80333 M\"unchen, Germany. E-mail: pickl@math.lmu.de}
}
\theoremstyle{plain}\newtheorem{theorem}{Theorem}[section]
\theoremstyle{plain}\newtheorem{lemma}[theorem]{Lemma}
\theoremstyle{plain}
\theoremstyle{plain}\newtheorem{assumption}[theorem]{Assumption}
\theoremstyle{plain}\newtheorem{proposition}[theorem]{Proposition}
\theoremstyle{definition}\newtheorem{definition}[theorem]{Definition}
\newcommand{\scp}[2]{\left\langle #1 , #2 \right\rangle}
\newcommand{\SCP}[2]{\langle #1 , #2 \rangle}
\newcommand{\bigSCP}[2]{\Big\langle #1 , #2 \Big\rangle}
\newcommand{\bra}[1]{\langle #1 |}
\newcommand{\ket}[1]{| #1 \rangle}
\newcommand{\ketbra}[2]{| #1 \rangle \langle #2 |}
\newcommand{\ketbr}[1]{| #1 \rangle \langle #1 |}
\newcommand{\norm}[2][]{\left|\left| #2 \right|\right|_{#1}}
\newcommand{\Hilbert}{\mathscr{H}}
\newcommand{\Span}{\mathrm{span}}
\renewcommand{\Im}{\mathrm{Im}}
\newcommand{\AAA}{\mathcal{A}}
\newcommand{\QQQ}{\mathbb{Q}}
\newcommand{\RRR}{\mathbb{R}}
\newcommand{\CCC}{\mathbb{C}}
\newcommand{\NNN}{\mathbb{N}}
\newcommand{\ZZZ}{\mathbb{Z}}
\newcommand{\id}{\mathbbm{1}}
\newcommand{\floor}[1]{\left\lfloor #1 \right\rfloor}
\newcommand{\dir}{\mathrm{dir}}
\newcommand{\exch}{\mathrm{exch}}
\newcommand{\tr}{\mathrm{tr}}
\newcommand{\HS}{\mathrm{HS}}
\newcommand{\op}{\mathrm{op}}
\newcommand{\sym}{\mathrm{sym}}
\newcommand{\as}{\mathrm{as}}
\newcommand{\kin}{\mathrm{kin}}
\newcommand{\Var}{\mathrm{Var}}
\newcommand{\mf}{\mathrm{mf}}
\newcommand{\Term}{\mathrm{Term}}
\newcommand{\bigO}{\mathcal{O}}
\newcommand{\eqexp}[1]{\text{\scriptsize [#1]~~~}}
\newcommand{\eqexpl}[1]{\text{\scriptsize [#1]\!\!\!\!\!\!\!\!\!\!}}
\newcounter{remarks}
\begin{document}

\maketitle

\begin{abstract}
We introduce a new method for deriving the time-dependent Hartree or Hartree-Fock equations as an effective mean-field dynamics from the microscopic Schr\"odinger equation for fermionic many-particle systems in quantum mechanics. The method is an adaption of the method used in \cite{pickl:2011method} for bosonic systems to fermionic systems. It is based on a Gronwall type estimate for a suitable measure of distance between the microscopic solution and an antisymmetrized product state. We use this method to treat a new mean-field limit for fermions with long-range interactions in a large volume. Some of our results hold for singular attractive or repulsive interactions. We can also treat Coulomb interaction assuming either a mild singularity cutoff or certain regularity conditions on the solutions to the Hartree(-Fock) equations. In the considered limit, the kinetic and interaction energy are of the same order, while the average force is subleading. For some interactions, we prove that the Hartree(-Fock) dynamics is a more accurate approximation than a simpler dynamics that one would expect from the subleading force. With our method we also treat the mean-field limit coupled to a semiclassical limit, which was discussed in the literature before, and we recover some of the previous results. All results hold for initial data close (but not necessarily equal) to antisymmetrized product states and we always provide explicit rates of convergence.
\end{abstract}

\textbf{MSC class:} 35Q40, 35Q55, 81Q05, 82C10

\textbf{Keywords:} mean-field limit, fermionic mean-field dynamics, reduced Hartree-Fock, many-body quantum mechanics

\tableofcontents

\section{Introduction}
The behavior of an interacting many-body system in classical or quantum mechanics can be very complicated and the microscopic equations governing its behavior are usually practically impossible to solve for more than three or four particles. For large interacting systems it is therefore essential to use a statistical description in order to make statements about the typical behavior of a large number of particles. One type of such a statistical description is to approximate the microscopic dynamics by an effective one-body dynamics, i.e., replace the microscopic evolution equation with very many degrees of freedom by a simpler, usually non-linear equation with very few degrees of freedom. Famous examples are the Boltzmann, Navier-Stokes and Vlasov equations in classical mechanics, and the Hartree, Hartree-Fock and Gross-Pitaevskii equations in quantum mechanics; different regimes can lead to different effective evolution equations, e.g., kinetic equations or mean-field equations. Apart from the study of these equations and their interesting properties and consequences, it is an ongoing project of mathematical physics to \emph{derive} them from the microscopic dynamics, i.e., to prove rigorously that the solutions to the effective equations approximate the solutions to the microscopic equations well in certain situations. Only some cases of such a rigorous derivation are known. In classical mechanics, this could for example be shown for the Vlasov equation \cite{hepp:1977}, however, for the Boltzmann equation it has been shown only for very short times \cite{lanford:1975}, and for the Navier-Stokes equation it is still an open problem (see \cite{spohn:1991} for an excellent overview). In quantum mechanics, starting in the 70s and by a series of recent works, the derivation of effective dynamics for bosons near a condensate is well understood, see \cite{hepp:1974,spohn:1980,erdoes:2001,froehlich:2009,rodnianski:2009,pickl:2011method,pickl:2010hartree} for the case of the Hartree equation and \cite{erdoes:2006,erdoes:2007,erdoes:2007_2,erdoes:2009,erdoes:2010,pickl:2010gp_pos,pickl:2010gp_ext,benedikter:2012} for the Gross-Pitaevskii equation. Derivations of mean-field dynamics for fermions have been discussed in \cite{narnhofer:1981,spohn:1981,bardos:2003,erdoes:2004,froehlich:2011,benedikter:2013,benedikter:2014}; in particular \cite{benedikter:2013} treats an interesting scaling limit very comprehensively. In general, the Fermi statistics makes the situation more complicated, see the discussion in Sections~\ref{sec:density_O_N} and \ref{sec:density_O_1}.

For fermionic many-particle systems in quantum mechanics, the microscopic evolution is given by the Schr\"odinger equation (we set $\hbar=1=2m$ throughout this work)
\begin{equation}\label{Schr_intro}
i \partial_t \psi^t = H \psi^t
\end{equation}
for antisymmetric complex-valued $N$-particle wave functions $\psi^t \in L^2(\RRR^{3N})$ (for simplicity, we neglect spin and always work in three dimensions). Antisymmetry means that $\psi^t(\ldots,x_j,\ldots,x_k,\ldots) = - \psi^t(\ldots,x_k,\ldots,x_j,\ldots)$ $\forall j \neq k$. We consider Hamiltonians
\begin{equation}\label{Schr_intro_H}
H = \sum_{j=1}^N H_j^0 + \sum_{i<j} v^{(N)}(x_i-x_j),
\end{equation}
where $H_j^0$ acts only on $x_j$ and $v^{(N)}(x)=v^{(N)}(-x)$ is a real-valued pair-interaction potential (the superscript $(N)$ denotes a possible scaling with the particle number $N$ which is explained below). According to \eqref{Schr_intro}, the unitary time evolution of an initial wave function $\psi^0$ is given by $\psi^t = e^{-iHt}\psi^0$ if $H$ is self-adjoint which we henceforth assume. Note that for antisymmetric initial conditions $\psi^0$, the wave function $\psi^t$ remains antisymmetric under the Schr\"odinger evolution \eqref{Schr_intro} with Hamiltonian \eqref{Schr_intro_H} for all times. For the desired mean-field description, consider $N$ orthonormal one-particle wave functions (also called orbitals) $\varphi_1^t, \ldots, \varphi_N^t \in L^2(\RRR^3)$ which are solutions to the \emph{fermionic Hartree equations} (sometimes called \emph{reduced Hartree-Fock equations}). These are the coupled system of non-linear differential equations
\begin{equation}\label{hartree_intro}
i \partial_t \varphi_j^t = H^0 \varphi_j^t + \Big(v^{(N)} * \rho_N^t \Big) \, \varphi_j^t,
\end{equation}
for $j=1,\ldots,N$, where $*$ denotes convolution, and $\rho_N^t = \sum_{i=1}^N |\varphi_i^t|^2$ is the spatial density. Note that for orthonormal initial conditions $\varphi_1^0, \ldots, \varphi_N^0$, \eqref{hartree_intro} preserves the orthonormality for all times. The term
\begin{equation}\label{intro_direct}
\Big( v^{(N)} * \rho_N^t \Big)(x) = \int_{\RRR^3 } v^{(N)}(x-y) \rho_N^t(y) \, d^3y
\end{equation}
is called the \emph{mean-field}. It can be viewed as the average value of the interaction potential at point $x$, created by particles distributed according to the density $\rho_N^t$. Note that closely related effective equations for fermions are the Hartree-Fock equations, where an additional exchange term
\begin{equation}\label{exch_intro}
- \sum_{k=1}^N \Big(v^{(N)} * ({\varphi_k^t}^*\varphi_j^t) \Big)(x) \, \varphi_k^t(x)
\end{equation}
is present on the right-hand side of \eqref{hartree_intro}. In general, the Hartree-Fock equations are expected to be a better approximation than the fermionic Hartree equations. However, the exchange term is always smaller than the direct term \eqref{intro_direct}, and, as we see later, is negligibly small for the scaling limits that we consider in the sense that including the exchange term does not improve our error estimates. It would be interesting to study other scaling limits where taking into account the exchange term gives a more accurate approximation.

Now suppose that some initial $\varphi_1^0, \ldots, \varphi_N^0$ are given. Let the initial $N$-particle wave function be $\psi^0 \approx \bigwedge_{j=1}^N \varphi_j^0$, where 
\begin{equation}
\Bigg(\bigwedge_{j=1}^N \varphi_j\Bigg)(x_1,\ldots,x_N) = (N!)^{-1/2} \sum_{\sigma \in S_N} (-1)^{\sigma} \prod_{j=1}^N \varphi_{\sigma(j)}(x_j),
\end{equation}
is the antisymmetrized product of $\varphi_1, \ldots, \varphi_N$, with $S_N$ the symmetric group and $(-1)^\sigma$ the sign of the permutation $\sigma$. Then, under the Schr\"odinger evolution \eqref{Schr_intro}, this initial wave function evolves to $\psi^t = e^{-iHt} \psi^0$. We want to compare this $\psi^t$ to the wave function $\bigwedge_{j=1}^N \varphi_j^t$, where the $\varphi_j^t$ are the solutions to the fermionic Hartree equations \eqref{hartree_intro}. In other words, if still $\psi^t \approx \bigwedge_{j=1}^N \varphi_j^t$ at some time $t$, then the Schr\"odinger dynamics is approximated well by the Hartree dynamics. To show such a statement is the goal of this article.

Note that in the presence of an interaction potential $v^{(N)}$ it is in general never true that $e^{-iHt} \bigwedge_{j=1}^N \varphi_j^0 = \bigwedge_{j=1}^N \varphi_j^t$, since the interaction leads to correlations between the particles, i.e., the wave function evolves into a superposition of many antisymmetrized product states. These correlations are caused by deviations from the mean-field behavior, i.e., by fluctuations around the mean-field. For a rigorous derivation we thus need to consider a regime where the fluctuations become small for large $N$, on the relevant time scales. Such an appropriate limit $N\to\infty$ is then called a mean-field limit.

\subsection{Density $\propto N$ Regime}\label{sec:density_O_N}
One such regime describes a system of fermions with very high density proportional to $N$. The microscopic wave function $\psi^t(x_1,\ldots,x_N)$ is a solution to the Schr\"odinger equation (here for non-relativistic particles)
\begin{equation}\label{Schr_scaled_sc}
i N^{-1/3} \partial_t \psi^t = \left( \sum_{j=1}^N \left( -N^{-2/3} \Delta_{x_j} + w^{(N)}(x_j) \right) + N^{-1} \sum_{i<j} v(x_i-x_j) \right) \psi^t,
\end{equation}
and $\varphi_1^t,\ldots,\varphi_N^t$ are solutions to the corresponding fermionic Hartree equations
\begin{equation}\label{hartree_scaled_sc}
i N^{-1/3} \partial_t \varphi_j^t = \left( -N^{-2/3} \Delta + w^{(N)} + N^{-1} \left(v * \rho_N^t \right) \right) \varphi_j^t,
\end{equation}
for $j=1,\ldots,N$ (recall $\rho_N^t=\sum_{i=1}^N |\varphi_i^t|^2$), where $w^{(N)}$ is a real-valued external field. The evolution \eqref{Schr_scaled_sc} is considered for initial data in a volume of order $1$\footnote{Here, the phrase that a quantity is ``of order x'' means that it is bounded from below and above by some constant times $x$. It is not meant in the sense of the Landau symbol $\bigO(x)$, where a function $f(x)$ is $\bigO(g(x))$, if there is a constant $C$ such that $|f(x)| \leq C |g(x)|$ for $x$ large enough.} which can be realized if the particles are confined by a nice external trapping potential $w^{(N)}$. Note, that due to the antisymmetry of $\psi^t$, the total kinetic energy for particles in a volume of order $1$ is always bigger or equal $C N^{5/3}$ (which can, e.g., be read off from the kinetic energy inequality in Section~\ref{sec:energy inequalities}). For states close to the ground state and nice external fields, the kinetic energy is of order $N^{5/3}$ (e.g., for free particles in a box of constant side length, this can easily be checked). Then, the kinetic term on the right-hand side of \eqref{Schr_scaled_sc} is of order $N^{-2/3} N^{5/3} = N$. The interaction term is of the same order, since $N^2$ terms are scaled down with $N^{-1}$.

Let us describe heuristically two ways of how to arrive at the scaling limit of \eqref{Schr_scaled_sc} (and let us for simplicity disregard external fields here). One way is to start with initial conditions in a volume of order $1$ and kinetic energy of order $N^{5/3}$ and then add a scaling for the interaction, such that the scaled interaction and the kinetic energy are of the same order. The Schr\"odinger equation is then
\begin{equation}\label{Schr_scaled_sc_deriv}
i \partial_{\tilde{t}} \psi^{\tilde{t}} = \left( -\sum_{j=1}^N \Delta_{x_j} + N^{-1/3} \sum_{i<j} v(x_i-x_j) \right) \psi^{\tilde{t}}.
\end{equation}
Since the average momentum per particle is of order $N^{1/3}$, but the average scaled interaction or force per particle is of order $N^{-1/3}N = N^{2/3}$, one can, for large $N$, expect a nice limiting equation only for short times $\tilde{t}$ of order $N^{-1/3}$ (change in momentum $=$ force $\times$ time). By introducing $t = N^{1/3} \tilde{t}$ (and then dividing both sides of the equation by $N^{2/3}$), we get Equation~\eqref{Schr_scaled_sc}. Let us describe another, more physical way to arrive at Equation~\eqref{Schr_scaled_sc}, which, however, only works for Coulomb interaction. Consider initial data in a small volume of order $N^{-1}$ and kinetic energy of order $N^{7/3}$. This is realized, e.g., for electrons in the ``core'' region of a large-$Z$ atom, see \cite{lieb:1981}. The unscaled Schr\"odinger equation for Coulomb interacting particles is
\begin{equation}\label{Schr_scaled_sc_deriv2}
i \partial_{\tilde{t}} \psi^{\tilde{t}} = \left( -\sum_{j=1}^N \Delta_{\tilde{x}_j} + \sum_{i<j} |\tilde{x}_i-\tilde{x}_j|^{-1} \right) \psi^{\tilde{t}}.
\end{equation}
A calculation for the free ground state in a box shows that the average force per particle is of order $N^{5/3}$. Since the average momentum is of order $N^{2/3}$, from ``change in momentum $=$ force $\times$ time'' we can expect nice behavior for large $N$ for times of order $N^{-1}$. If we now introduce rescaled variables $t = N \tilde{t}$ and $x = N^{1/3} \tilde{x}$ (such that in the variables $t,x$ we consider the same kind of initial conditions as in \eqref{Schr_scaled_sc}), we arrive at Equation~\eqref{Schr_scaled_sc}.

Note that Equation~\eqref{Schr_scaled_sc} has a semiclassical structure due to the $N^{-1/3}$ factors, which play the role of a very small parameter, like the $\hbar$ in the Schr\"odinger equation with units. As a consequence, the solutions to the Schr\"odinger equation \eqref{Schr_scaled_sc} are close to solutions to the classical Vlasov equation, e.g., in the sense that the Wigner transform of a solution to \eqref{Schr_scaled_sc} is close to a classical phase space density $\rho^t(x,p)$ that solves the Vlasov equation. Such a derivation of the Vlasov equation from the microscopic Schr\"odinger equation has been considered in \cite{narnhofer:1981} and improved in \cite{spohn:1981}. Several other works \cite{lions:1993,markowich:1993,gasser:1998,pezzotti:2009,athanassoulis:2011,amour:2013,amour:2013_2,benedikter:2015} also study the derivation of the Vlasov equation starting from the Hartree equation \eqref{hartree_scaled_sc} or the Hartree-Fock equation. However, the mean-field dynamics \eqref{hartree_scaled_sc} is a better approximation to the dynamics \eqref{Schr_scaled_sc} than the Vlasov dynamics. A derivation of the mean-field dynamics \eqref{hartree_scaled_sc} from the microscopic dynamics \eqref{Schr_scaled_sc} has first been given in \cite{erdoes:2004} for bounded analytic interactions and short times. In \cite{benedikter:2013,benedikter:2014}, the derivation was crucially improved in the sense that it was shown to hold for all times, for initial data with a certain semiclassical structure. Furthermore, the result holds for pseudo-relativistic free Hamiltonians and with fewer regularity assumptions on the interaction. The interaction is, however, still assumed to be bounded. Since many applications concern electrons and in light of the discussion around Equation~\eqref{Schr_scaled_sc_deriv2}, a derivation for Coulomb interaction would be desirable. Let us also mention \cite{benedikter:2015_2} where the analysis of \cite{benedikter:2013} is extended to fermionic mixed states.

\subsection{Density $\propto 1$ Regime}\label{sec:density_O_1}
Let us now introduce a new scaling limit for fermions. The inspiration for this new limit is twofold. On the one hand, we want to study a scaling limit, where the solution to the Schr\"odinger equation is not close to a solution to the Vlasov equation. In such a limit, one would see more quantum effects like interference of wave packets. On the other hand, we are inspired by the application of the Hartree-Fock equations to large molecules, where the size of the system is of order $N$. This is the case for the scaling limit we now introduce.

We consider initial conditions in a volume of order $N$, i.e., with average density of order $1$. The idea is that for such initial data a mean-field approximation is valid for \emph{long-range interactions} like Coulomb interaction. The microscopic evolution for the wave function $\psi^t(x_1,\ldots,x_N)$ is given by the Schr\"odinger equation (for non-relativistic particles)
\begin{equation}\label{scaling_Schr}
i \partial_t \psi^t = \left( \sum_{j=1}^N \left( -\Delta_{x_j} + w^{(N)}(x_j) \right) + N^{-\beta} \sum_{i<j} v(x_i-x_j) \right) \psi^t,
\end{equation}
and the corresponding fermionic Hartree equations are
\begin{equation}\label{scaling_hartree}
i \partial_t \varphi_j^t = \left( -\Delta + w^{(N)} + N^{-\beta} \left(v * \rho_N^t \right) \right) \varphi_j^t,
\end{equation}
for $j=1,\ldots,N$, $\rho_N^t=\sum_{i=1}^N |\varphi_i^t|^2$, \emph{scaling exponent} $\beta \in \RRR$ and $v(x)=v(-x)$ a real-valued long-range interaction. The Schr\"odinger equation \eqref{scaling_Schr} is interesting for initial data with total kinetic energy of order $N$. (Note that by the kinetic energy inequality (see Section~\ref{sec:energy inequalities}), this implies that the system volume is at least of order $N$.) Now the crucial observation is that the total unscaled interaction energy for long-range $v$ is not of order $N^2$, as one might expect from the order $N^2$ terms in the double sum $\sum_{i<j}$ in \eqref{scaling_Schr}. It is in fact smaller, namely of order $N^{1+\tilde{\beta}}$, with some $0\leq \tilde{\beta} \leq 1$ depending on the long-range behavior of $v$. This can be seen heuristically by considering the mean-field for constant density and interactions $|x|^{-s}$, since ($V_N$ denotes a volume of order $N$)
\begin{equation}\label{heuristics_mf}
\big(|\cdot|^{-s} * \rho_N^t\big) \approx C \int_{V_N} |x|^{-s} \,d^3x \approx C \int_0^{N^{1/3}} r^{-s}\, r^2 dr \propto N^{1-s/3} =: N^{\tilde{\beta}},
\end{equation}
for appropriate $s>0$ (see Lemma~\ref{lem:scaling_x-s}). Thus, if we set $\beta=\tilde{\beta}$ in Equation~\eqref{scaling_Schr}, the kinetic term and the interaction term are both of order $N$.

Let us now consider the average force per particle. For simplicity, let us discuss it heuristically in the mean-field approximation (a similar consideration can be done for the microscopic wave function $\psi^t$). There, the average force, including the scaling with $\tilde{\beta}=1-s/3$, is given by the gradient of the mean-field, i.e., it is of the order
\begin{equation}\label{heuristics_mf_force}
N^{-\tilde{\beta}} \nabla \big(|\cdot|^{-s} * \rho_N^t\big) \approx N^{-\tilde{\beta}} \big(|\cdot|^{-s-1} * \rho_N^t\big) \approx C N^{-\tilde{\beta}}  \int_0^{N^{1/3}} r^{-s-1}\, r^2 dr \propto N^{-\tilde{\beta}} N^{2/3-s/3} = N^{-1/3}.
\end{equation}
Thus, the mean-field is almost constant on scales of order $1$ and only varies over the whole system size. One would therefore heuristically expect free evolution to leading order. This classically inspired heuristics is actually only almost right, since orbitals which are spread over the whole system should feel an effect coming from the mean-field. We expect, however, that this could be easily taken care of by adding a time and space dependent phase to the freely evolving orbitals. Those orbitals can be expected to approximate the Schr\"odinger equation \eqref{scaling_Schr}. However, similar to before, one would expect that the fermionic Hartree equations \eqref{scaling_hartree} provide a more accurate approximation to the dynamics. In fact, we prove this in Section~\ref{sec:main_theorem_mf_x-s} for $0<s<3/5$, where we also discuss relevant observables.

Note that one can repeat the heuristic calculation \eqref{heuristics_mf_force} and ask for the variation of the force. One finds, e.g., that
\begin{equation}\label{heuristics_mf_force_variation}
N^{-\tilde{\beta}} \Delta \big(|\cdot|^{-s} * \rho_N^t\big) \approx N^{-\tilde{\beta}} \big(|\cdot|^{-s-2} * \rho_N^t\big) \approx C N^{-\tilde{\beta}} \int_0^{N^{1/3}} r^{-s-2}\, r^2 dr \propto N^{-\tilde{\beta}} N^{1/3-s/3} = N^{-2/3},
\end{equation}
for $s<1$, and
\begin{equation}\label{heuristics_mf_force_variation_s_1}
N^{-\tilde{\beta}} \Delta_x \big(|\cdot|^{-1} * \rho_N^t\big)(x) = N^{-\tilde{\beta}} \big(\Delta |\cdot|^{-1} * \rho_N^t\big)(x) = N^{-\tilde{\beta}} \rho_N^t(x) \approx N^{-\tilde{\beta}} \approx N^{-2/3}.
\end{equation}
for Coulomb interaction (where one would actually expect an $\ln N$ correction to the variation of the force). Thus, while the leading order in this scaling limit is free dynamics, the next to leading order is a constant force (at least when taking appropriate phase factors into account, see the discussion after Equation~\eqref{heuristics_mf_force}).

For interactions $|x|^{-s}$, we can compare the new scaling limit to the one from Section~\ref{sec:density_O_N}. Let us choose $s=1$ here. The new scaling limit for initial conditions $\tilde{\psi}^0$ in a volume of order $N$ is then
\begin{equation}\label{scaling_Schr_again}
i \partial_{\tilde{t}} \tilde{\psi}^{\tilde{t}} = \left( -\sum_{j=1}^N \Delta_{\tilde{x}_j} + N^{-2/3} \sum_{i<j} |\tilde{x}_i-\tilde{x}_j|^{-1} \right) \tilde{\psi}^{\tilde{t}}.
\end{equation}
If we rescale the spatial variables such that the initial conditions are in a volume of order $1$, i.e., we introduce $x=N^{-1/3}\tilde{x}$, then the rescaled wave function $\psi^{\tilde{t}}$ solves
\begin{equation}\label{scaling_Schr_again_resc}
i \partial_{\tilde{t}} \psi^{\tilde{t}} = \left( -N^{-2/3} \sum_{j=1}^N \Delta_{\tilde{x}_j} + N^{-1} \sum_{i<j} |\tilde{x}_i-\tilde{x}_j|^{-1} \right) \psi^{\tilde{t}}.
\end{equation}
This is similar to Equation~\eqref{Schr_scaled_sc}, but on a different time scale. The time scales are related by $\tilde{t} = N^{1/3} t$. In other words, if we formulate the new scaling limit for initial conditions in a volume of order $1$, then the time scales are much shorter than those considered in Section~\ref{sec:density_O_N}. The important difference between the two scaling limits lies in the kind of initial conditions one would naturally consider. In \eqref{Schr_scaled_sc}, the initial conditions would vary over spatial scales of order $1$, i.e., the whole system size, and not over scales of order, say $N^{-1/3}$. This is also assumed in \cite{erdoes:2004,benedikter:2013}. On the other hand, the initial conditions for \eqref{scaling_Schr} would naturally vary over scales of order $1$. If we would rescale those initial conditions to a volume of size $1$, then they would have a small scale structure on spatial scales of order $N^{-1/3}$. In fact, in our main results about the new scaling limit, we have no restrictions for the initial conditions, except that the kinetic energy is appropriately bounded by $CN$.

The described regime, to our knowledge, has not been considered in the literature before for a derivation of mean-field dynamics. Note, however, that \cite{bardos:2003,bardos:2004,bardos:2007,froehlich:2011} consider the case $\beta=1$ which, for Coulomb interaction, leads to an interaction term of order $N^{-1/3}$. Compared to that, the results in this article are an improvement by a factor $N^{1/3}$ in the interaction strength. However, in \cite{froehlich:2011} Coulomb interaction without any cutoff is considered, while we have to introduce a very mild cutoff on scales much shorter than the average particle distance, in order to treat the Coulomb singularity. Note that in \cite{bach:2015}, by using the method introduced in this article and additional techniques like the Fefferman-de la Llave decomposition of the Coulomb potential, also Coulomb interaction without cutoff can be dealt with.

\subsection{Outline}
Next, we present the main results of this article. The results for the new scaling \eqref{scaling_Schr}, which are the focus of this work, are presented and discussed in Section~\ref{sec:main_theorem_mf_x-s}. In Section~\ref{sec:estimates_semiclassical}, we derive the mean-field equations \eqref{hartree_scaled_sc} for a class of bounded interactions and initial states with a semiclassical structure. For this case, we reproduce some of the results from \cite{benedikter:2013}, with minor improvements on the initial conditions. We give this derivation in order to demonstrate the generality of our approach and since the proof is very short. In Section~\ref{sec:counting_functional}, we introduce our method for deriving fermionic mean-field dynamics. In Section~\ref{sec:dens_mat_summary}, we explain the connection between the functional we use in the proof and reduced density matrices. In Section~\ref{sec:main_theorem_mf_general_v}, we state a slightly more general version of our main result in terms of the newly defined functional and in Section~\ref{sec:sketch_of_proof} we sketch the proof of these theorems. All proofs are given in Sections \ref{sec:notation} to \ref{sec:proof_sc_scaling}. Finally, let us remark that the length of this article is due to the fact that we introduce our method in great detail and generality, in particular in Sections \ref{sec:notation} to \ref{sec:estimates_projectors}. The core parts of the proof are given in Sections \ref{sec:alpha_m_dot_rigorous} and \ref{sec:proof_sc_scaling}.

\section{Main Results}\label{sec:main_results}
In order to state the main results of this article, we need a measure for the closeness of a many-body wave function $\psi \in L^2(\RRR^{3N})$ to the antisymmetrized product state $\bigwedge_{j=1}^N\varphi_j$, with $\varphi_1,\ldots,\varphi_N \in L^2(\RRR^3)$. A natural choice is a trace norm estimate on the reduced density matrices. For any normalized antisymmetric $\psi \in L^2(\RRR^{3N})$, the reduced one-particle density matrix is defined by its integral kernel
\begin{equation}\label{definition_dens_mat_one_part}
\gamma_1^{\psi}(x;y) = \int \psi(x,x_2,\ldots,x_N) \psi^*(y,x_2,\ldots,x_N) \,d^3x_2 \ldots d^3x_N.
\end{equation}
If we want to have control over the statistics of one-body observables $A:L^2(\RRR^3)\to L^2(\RRR^3)$, we need to control the trace norm difference of the reduced one-particle density matrices of $\psi$ and $\bigwedge \varphi_j$, since, e.g., for bounded $A$,
\begin{equation}
\tr A \gamma^{\psi}_1 - \tr A \gamma^{\bigwedge \varphi_j}_1 \leq \norm[\op]{A} \Big|\Big| \gamma^{\bigwedge \varphi_j}_1 - \gamma^{\psi}_1 \Big|\Big|_{\tr},
\end{equation}
where $\norm[\op]{\cdot}$ denotes the operator norm and $\norm[\tr]{\cdot}$ the trace norm (see also Section~\ref{sec:density_matrices}). Thus, we express our main results in terms of the trace norm difference above. Note that we actually prove slightly stronger convergence statements, see Section~\ref{sec:counting_functional}.

\subsection{Density $\propto 1$ Regime with Interactions $|x|^{-s}$}\label{sec:main_theorem_mf_x-s}
In this section we explicitly consider the non-relativistic Schr\"odinger equation \eqref{scaling_Schr} and the corresponding fermionic Hartree equations \eqref{scaling_hartree}, as discussed in Section~\ref{sec:density_O_1}. The results in this subsection are concerned with interaction potentials $|x|^{-s}$ with $0<s<6/5$, sometimes with singularity weakened or cutoff, and the corresponding $\beta=1-s/3$ (see also Lemma~\ref{lem:scaling_x-s}). For the following results we assume the existence of solutions to the equations \eqref{scaling_Schr} and \eqref{scaling_hartree} and that the total kinetic energy of the solutions to \eqref{scaling_hartree} (but not necessarily to \eqref{scaling_Schr}) is bounded by $AN$, i.e., $\sum_{i=1}^N ||\nabla \varphi_i^t||^2 \leq AN$. There are several works about solution theory to the Hartree(-Fock) equations \cite{bove:1974,chadam:1975,chadam:1976,bove:1976} which establish existence and uniqueness of solutions even for singular (attractive or repulsive) interactions and external fields like $|x|^{-1}$. A blowup of solutions is only expected for strong attractive interactions (e.g., for gravitating fermions) with semirelativistic free Hamiltonian, see, e.g., \cite{froehlich:2007,hainzl:2009,hainzl:2010}. (Indeed, as is shown in \cite{froehlich:2007}, even for bounded interactions, $v*\rho_N^t$ can become infinite in the limit of $t\to T$, for some $T < \infty$.) Therefore, for non-relativistic Hamiltonians, due to the conservation of the total Hartree energy, $\sum_{i=1}^N ||\nabla \varphi_i^t||^2 \leq AN$ always holds if it holds for the initial states $\varphi_1^0, \ldots, \varphi_N^0$ and if the external field is nice enough (e.g., scaled external Coulomb fields generated by nuclei with some $N$-independent distances to each other are ok).

\begin{theorem}\label{thm:E_kin_only}
Let $t\in[0,T)$ for some $0<T\in \RRR\cup\infty$. Let $\psi^t \in L^2(\RRR^{3N})$ be a solution to the Schr\"odinger equation \eqref{scaling_Schr} with antisymmetric initial condition $\psi^0 \in L^2(\RRR^{3N})$. Let $\varphi_1^t,\ldots,\varphi_N^t \in L^2(\RRR^3)$ be solutions to the fermionic Hartree equations \eqref{scaling_hartree} with orthonormal initial conditions $\varphi_1^0,\ldots,\varphi_N^0 \in L^2(\RRR^3)$, and with
\begin{equation}\label{E_kin_AN}
\sum_{i=1}^N \norm{\nabla \varphi_i^t}^2 \leq AN
\end{equation}
for some $A>0$ and all $t<T$. Then there are positive constants $C$, such that
\begin{enumerate}[(a)]
\item\label{case_a} for interactions
\begin{equation}\label{int_first_case}
v(x) = \pm |x|^{-s},
\end{equation}
with $0<s<3/5$ and $\beta=1-s/3$ we have
\begin{equation}\label{main_alpha_ineq_n_applied1}
\Big|\Big| \gamma^{\bigwedge \varphi_j^t}_1 - \gamma^{\psi^t}_1 \Big|\Big|_{\tr} \leq 2e^{Ct} \, \Big|\Big| \gamma^{\bigwedge \varphi_j^0}_1 - \gamma^{\psi^0}_1 \Big|\Big|_{\tr}^{1/2} + \left( 8\left(e^{Ct}-1\right) \right)^{1/2} N^{-1/2},
\end{equation}
with $C \propto A^{s/2}$;
\item\label{case_b} for interactions $v = \pm v_{s,\delta}$ with
\begin{align}
0 \leq v_{s,\delta}(x) \left\{\begin{array}{cl} \leq DN^{\delta s} &, \, \text{for } |x|\leq N^{-\delta} \\ =|x|^{-s} &, \, \text{for } |x|>N^{-\delta} , \end{array}\right.
\end{align}
with $D>0$, $0<s<6/5$, $\beta = 1 - s/3$ and $\delta < (3-2s)/(6s)$ we have
\begin{equation}\label{main_alpha_ineq_m_applied1}
\Big|\Big| \gamma^{\bigwedge \varphi_j^t}_1 - \gamma^{\psi^t}_1 \Big|\Big|_{\tr} \leq 2e^{Ct} N^{1/2-\gamma/2}\, \Big|\Big| \gamma^{\bigwedge \varphi_j^0}_1 - \gamma^{\psi^0}_1 \Big|\Big|_{\tr}^{1/2} + \left( 8\left(e^{Ct}-1\right) \right)^{1/2} N^{-\gamma/2},
\end{equation}
for all $0 < \gamma \leq 1 - 2\delta s - 2s/3$;
\item\label{case_c} for interactions $v(x)=\pm |x|^{-1}$ and $\beta=2/3$, under the condition that for all $t$ and some $\varepsilon>0$ there are $C_1(t),C_2(t)$ (independent of $N$) such that
\begin{equation}\label{cond_for_coulomb}
\sum_{j=1}^N \norm{\nabla^{3/2+\varepsilon} \varphi_j^t}^2 \leq C_1(t) \, N ~~~~ \text{or} ~~~~ \norm[\infty]{\rho_N^t} \leq C_2(t),
\end{equation}
we have 
\begin{equation}\label{main_alpha_ineq_n_under_cond}
\Big|\Big| \gamma^{\bigwedge \varphi_j^t}_1 - \gamma^{\psi^t}_1 \Big|\Big|_{\tr} \leq 2e^{\int_0^t C(s)ds} \, \Big|\Big| \gamma^{\bigwedge \varphi_j^0}_1 - \gamma^{\psi^0}_1 \Big|\Big|_{\tr}^{1/2} + \left( 8\left(e^{\int_0^t C(s)ds}-1\right) \right)^{1/2} N^{-1/2},
\end{equation}
for some $C(t)$ independent of $N$.
\end{enumerate}
\end{theorem}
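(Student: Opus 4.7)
The plan is to adapt the counting-functional method of Pickl \cite{pickl:2011method} from the bosonic to the antisymmetric setting. Introduce the orthogonal projections $p^t := \sum_{j=1}^N \ketbra{\varphi_j^t}{\varphi_j^t}$ and $q^t := \id - p^t$ on $L^2(\RRR^3)$, and denote by $p_i^t, q_i^t$ their action on the $i$-th coordinate of $L^2(\RRR^{3N})$. Define the functional
\begin{equation*}
\alpha^t := \scp{\psi^t}{q_1^t \psi^t},
\end{equation*}
which, by antisymmetry of $\psi^t$, equals $\tr(q^t\gamma_1^{\psi^t})$. A standard comparison inequality (cf.\ Section~\ref{sec:density_matrices}) bounds $\norm[\tr]{\gamma_1^{\bigwedge\varphi_j^t}-\gamma_1^{\psi^t}}$ by a constant times $(\alpha^t)^{1/2}$ plus a term of order $N^{-1/2}$; this is the source of the square root and the additive $N^{-1/2}$ on the right of \eqref{main_alpha_ineq_n_applied1}. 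It therefore suffices to prove a Gronwall-type bound $\alpha^t \le e^{Ct}(\alpha^0 + N^{-1})$ in (a) and (c), and the analogue with $N^{-\gamma}$ in (b) using a weighted version of $\alpha^t$.

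The main task is controlling $\dot\alpha^t$. Differentiating and inserting \eqref{scaling_Schr} for $\psi^t$ and \eqref{scaling_hartree} for $\varphi_j^t$ (which determines $\dot p^t$), the one-body part $-\Delta_{x_1}+w^{(N)}(x_1)$ cancels between the two contributions, leaving
\begin{equation*}
\dot\alpha^t = -i\,\scp{\psi^t}{\Big[\,N^{-\beta}\!\!\sum_{i<j}v(x_i-x_j) - \sum_{k}(N^{-\beta}v*\rho_N^t)(x_k),\,q_1^t\,\Big]\psi^t}.
\end{equation*}
Inserting $\id = p_2^t+q_2^t$ on the second coordinate and using antisymmetry to reduce to a single pair of particles, the commutator decomposes into terms containing at least one $q$ on each side, with the ``all-$p$'' piece exactly cancelled by the mean field. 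A Cauchy--Schwarz step peels off a factor $(\alpha^t)^{1/2}$ from the outer $q_1^t$, leaving operator-norm bounds such as $\norm[\op]{p_2^t v(x_1-x_2)}^2 \le \norm[\infty]{v^2*\rho_N^t}$ as the remaining work.

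The case-specific input is the control of $\norm[\infty]{v^2*\rho_N^t}$ by the kinetic-energy bound. For (a) with $v=\pm|x|^{-s}$, Hardy--Littlewood--Sobolev gives $\norm[\infty]{|\cdot|^{-2s}*\rho_N^t} \le C\norm[3/(3-2s)]{\rho_N^t}$; interpolating between $\norm[1]{\rho_N^t}=N$ and the Lieb--Thirring bound $\norm[5/3]{\rho_N^t}^{5/3}\le C\sum_j\norm{\nabla\varphi_j^t}^2 \le CAN$ yields $\norm[\infty]{|\cdot|^{-2s}*\rho_N^t}\le CA^s N^{1-2s/3}$. The condition $s<3/5$ is precisely what the interpolation requires. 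Multiplying by the scaling $N^{-2\beta}=N^{-2+2s/3}$ from the pair interaction leaves an $O(1)$ coefficient proportional to $A^{s/2}$, matching the theorem. For (b), the cutoff splits $v_{s,\delta}$ into a short-range part bounded trivially by $DN^{\delta s}$ and a long-range part handled as in (a); balancing the two against a weight of order $N^{-\gamma}$ in a refined functional produces the constraints $\delta<(3-2s)/(6s)$ and $\gamma\le 1-2\delta s-2s/3$. For (c), the HLS step at $s=1$ is the endpoint, so one invokes \eqref{cond_for_coulomb} instead: either via Sobolev $H^{3/2+\varepsilon}\hookrightarrow L^\infty$ applied orbital-wise, or directly through $\norm[\infty]{\rho_N^t}\le C_2(t)$, which turns $\norm[\infty]{|\cdot|^{-2}*\rho_N^t}$ into an integrable quantity.

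The principal obstacle is that the bare mean field has size $N^\beta$ (cf.\ \eqref{heuristics_mf}), and this large factor is compensated by the scaling $N^{-\beta}$ in front of the pair interaction only after the singular weight $v$ is attached to the correct projection and the convolution bound on $\rho_N^t$ is invoked. Careless decomposition of the commutator produces terms of size $N^{+\epsilon}$ which close no estimate. For Coulomb this compensation is borderline and the endpoint exponent of HLS forces the additional regularity \eqref{cond_for_coulomb}; removing that assumption, as in \cite{bach:2015}, requires the Fefferman--de la Llave decomposition, which lies outside the scope of the present argument.
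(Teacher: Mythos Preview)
Your overall strategy---the counting functional $\alpha^t=\langle\psi^t,q_1^t\psi^t\rangle$, Gronwall, and the control of $\sup_y((v^{(N)})^2*\rho_N^t)(y)$ via Lieb--Thirring---is exactly the paper's, and your reason for the constraint $s<3/5$ in (a) is correct. The gap lies in the passage from $\dot\alpha^t$ to a bound of the form $C(\alpha^t+N^{-1})$. After inserting $p_2+q_2$ one obtains three terms (Lemma~\ref{lem:alpha_derivative}), and your claim ``at least one $q$ on each side'' fails for two of them: $(I)_n=2\,\Im\langle\psi,q_1((N-1)p_2v_{12}^{(N)}p_2-V_1^{(N)})p_1\psi\rangle$ has only the single $q_1$, and $(II)_n=2\,\Im\langle\psi,q_1q_2(N-1)v_{12}^{(N)}p_1p_2\psi\rangle$ has both $q$'s on the left. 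Peeling off one $q_1$ by Cauchy--Schwarz and bounding the remainder in operator norm yields only $\dot\alpha\leq C\sqrt{\alpha}$, which integrates to $\alpha(t)\leq(\sqrt{\alpha(0)}+Ct)^2$ and carries no $N$-decay. Equally, the bound $\|v_{12}p_2\|_{\op}^2\leq\|v^2*\rho_N\|_\infty$ you invoke is weaker by a full factor of $N$ than the fermionic estimate $\langle\psi,p_2h_{12}p_2\psi\rangle\leq(N-1)^{-1}\sup(h*\rho_N)\,\|\psi\|^2$ for antisymmetric $\psi$ (Lemma~\ref{lem:psi_op_psi_diag_tr}); the $(N-1)^{-1}$ comes from Pauli exclusion via Lemma~\ref{lem:projector_norms} and is indispensable.

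The paper extracts the missing second smallness factor as follows. For $(I)_n$ one uses antisymmetry to write $(N-1)p_2v_{12}p_2=\sum_{m\geq2}p_mv_{1m}p_m$ on $\psi$, diagonalises $p_mv_{1m}p_m=\sum_i\lambda_i(x_1)p_m^{\chi_i^{x_1}}$, and observes that subtracting $V_1=\sum_i\lambda_i$ leaves $-\sum_i\lambda_i\, q_{\neq1}^{\chi_i^{x_1}}$, furnishing a hidden $q$-type projector and hence a second factor $\sqrt{\alpha+N^{-1}}$ (see \eqref{term_a_estimate2}--\eqref{term_a_estimate_epsilon}). For $(II)_n$ one rewrites the interaction as $\sum_{m\geq2}q_mv_{1m}p_1p_m\psi$, squares, and uses antisymmetry to shift a $q$ to the right at the price of a diagonal term $O(N^{-1})$; see \eqref{term_b_estimate}. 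For case~(\ref{case_b}) you genuinely need the weighted functional $\alpha_{m^{(\gamma)}}$ of \eqref{weight_m_gamma} together with Lemma~\ref{lem:q_root_f}, whose bound $\|q_1q_2\widehat{m'^{(1)}}^{1/2}\psi\|^2\leq 3N^{\gamma-2}\alpha_{m^{(\gamma)}}$ is the mechanism that trades convergence rate for weaker control of $\sup(v^{(N)})^2*\rho_N$; the constraints on $\gamma,\delta$ come out of verifying Assumption~\ref{ass:for_main_thm}, not just a ``balance''. Finally, in (\ref{case_c}) orbital-wise Sobolev gives only $\|\rho_N^t\|_\infty\leq CN$, not $O(1)$; the paper instead repeats the H\"older splitting of Lemma~\ref{lem:scaling_x-s} with an $L^q$-norm of $\rho_N^t$ for some $q>3$, which \eqref{cond_for_coulomb} supplies.
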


\begin{proof}
See Section~\ref{sec:proofs_main_theorems}. \phantom\qedhere
\end{proof}

\vspace{3mm}

\noindent\textbf{Remarks.}
\begin{enumerate}
\setcounter{enumi}{\theremarks}

\item Let us put the three different cases of Theorem~\ref{thm:E_kin_only} into perspective. In \eqref{case_a} we treat singular interactions, but with weaker singularity than Coulomb. Case \eqref{case_b} includes Coulomb interaction with the singularity cut off on a ball with radius much smaller than the average particle distance. However, in case \eqref{case_b} the convergence rate is only $N^{-\gamma/2}$, with $\gamma<1/3$ in the Coulomb case, depending on the cutoff distance $N^{-\delta}$. In case \eqref{case_c} we treat Coulomb interaction without cutoff. We get the convergence rate $N^{-1/2}$. For this we need additional regularity assumptions on the solutions to the fermionic Hartree equations. We expect these conditions to hold for all times under suitable assumptions on the initial data, but we refrain from proving that in this paper.

\item Using Lemmas~\ref{lem:density_conv} and \ref{lem:density_conv_alpha_f} we can also deduce bounds in Hilbert-Schmidt norm. For example, in case \eqref{case_a} we have
\begin{equation}
\sqrt{N}\Big|\Big| \gamma^{\bigwedge \varphi_j^t}_1 - \gamma^{\psi^t}_1 \Big|\Big|_{\HS} \leq \sqrt{2} e^{Ct} \, \left( \sqrt{N} \Big|\Big| \gamma^{\bigwedge \varphi_j^0}_1 - \gamma^{\psi^0}_1 \Big|\Big|_{\HS}\right)^{1/2} + \left( \sqrt{2}\left(e^{Ct}-1\right) \right)^{1/2} N^{-1/2}.
\end{equation}
(Note that our choice of normalization is $\big|\big|\gamma^{\bigwedge \varphi_j^t}_1\big|\big|_{\tr} = 1$ and $\sqrt{N}\big|\big|\gamma^{\bigwedge \varphi_j^t}_1\big|\big|_{\HS} = 1$.)

\item If we set the external field $w^{(N)}=0$, then, for the interactions in this theorem, the bound \eqref{E_kin_AN} holds for all times, if it holds for the initial conditions. Furthermore, we have existence and uniqueness for the solutions to \eqref{scaling_Schr} and \eqref{scaling_hartree} in this case. Thus, for $w^{(N)}=0$, all the results in this theorem hold under the sole assumption that \eqref{E_kin_AN} holds at $t=0$.

\item\label{rem:initial_data_in_cells} Note that in the setup of Theorem~\ref{thm:E_kin_only}, the kinetic energy per particle is of order $1$, i.e., a particle can on average travel a distance of order $1$, while the diameter of the system is of order $N^{1/3}$. Heuristically speaking this means that the system looks static on very large scales. The results above are therefore relevant for observables that are sensitive for properties on small scales. Let us give a simple example of such an observable. We divide the volume of the system into $2N$ cells, where each cell has a volume of order $1$. As initial state we choose smooth wave packets with disjoint supports, each with kinetic energy of order $1$. Each packet occupies one of the cells, and we leave a neighboring cell unoccupied. Then a suitable observable $A$ would be the number of particles in the unoccupied cells. After a short time of order $1$, due to the diffusion of wave packets, there will be order $N$ particles in the previously unoccupied cells. Furthermore, the wave packets will show interference effects.

\item Let us follow up on the example of the previous remark and discuss what result we would expect if we compare the Schr\"odinger evolution with the free evolution. As discussed in Section~\ref{sec:density_O_1}, the average force per particle is of order $N^{-1/3}$. This would be the expected relative error we make in estimating the expectation value of the observable $A$ from above: after some time $t$ the unoccupied cells contain order $N \pm N^{2/3}$ particles. On the other hand, the estimates \eqref{main_alpha_ineq_n_applied1} and \eqref{main_alpha_ineq_n_under_cond} would give a prediction of order $N \pm N^{1/2}$. Thus, in the cases \eqref{case_a} and \eqref{case_c}, the prediction from the fermionic Hartree equations is better than what we could expect from the free evolution (or the free evolution with a time and space dependent phase).

\end{enumerate}
\setcounter{remarks}{\theenumi}

Let us conclude this subsection with a remark about Coulomb interaction with $N^{-1}$ scaling, and with $N^{-2/3}$ scaling for times smaller than order $1$. In the first case, we have free evolution for times of order $N^{1/3-\varepsilon}$, for any $\varepsilon>0$. In the second case, we even have closeness to any state, without evolution, if, of course, the initial conditions are close. We summarize this in the following proposition which for simplicity we only state for Coulomb interaction without external fields.

\begin{proposition}\label{pro:coulomb_free_N_epsilon}
Let $\psi^t \in L^2(\RRR^{3N})$ be a solution to the Schr\"odinger equation 
\begin{equation}
i \partial_t \psi^t = H \psi^t = \left( -\sum_{j=1}^N \Delta_{x_j} + N^{-\delta} \sum_{i<j} |x_i-x_j|^{-1} \right) \psi^t,
\end{equation}
with antisymmetric initial condition $\psi^0 \in L^2(\RRR^{3N})$ with total energy $\SCP{\psi^0}{H\psi^0} = E$. Then,
\begin{enumerate}[(a)]
\item if $\varphi_1^t,\ldots,\varphi_N^t \in L^2(\RRR^3)$ are solutions to the free equations $i \partial_t \varphi_j^t(x) = -\Delta \varphi_j^t(x)$ with orthonormal initial conditions $\varphi_1^0,\ldots,\varphi_N^0 \in L^2(\RRR^3)$, there is a $C>0$ such that for all $t>0$,
\begin{equation}\label{proposition_result}
\Big|\Big| \gamma^{\bigwedge \varphi_j^t}_1 - \gamma^{\psi^t}_1 \Big|\Big|_{\tr} \leq 2 \, \Big|\Big| \gamma^{\bigwedge \varphi_j^0}_1 - \gamma^{\psi^0}_1 \Big|\Big|_{\tr}^{1/2} + C \left(N^{1/6-\delta} E^{1/2}\, t\right)^{1/2},
\end{equation}
\item given time-independent $\varphi_1,\ldots,\varphi_N \in L^2(\RRR^3)$ with $\sum_{j=1}^N ||\nabla \varphi_j||^2 = E^{\mf}_{\kin}$, there is a $C>0$ such that for all $t>0$,
\begin{equation}\label{proposition_result2}
\Big|\Big| \gamma^{\bigwedge \varphi_j}_1 - \gamma^{\psi^t}_1 \Big|\Big|_{\tr} \leq 2 \, \Big|\Big| \gamma^{\bigwedge \varphi_j}_1 - \gamma^{\psi^0}_1 \Big|\Big|_{\tr}^{1/2} + C \left( \left( EN^{-1} + E^{\mf}_{\kin}N^{-1} + N^{1/6-\delta} E^{1/2} \right) \, t \right)^{1/2}.
\end{equation}
\end{enumerate}
\end{proposition}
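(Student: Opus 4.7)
The strategy is to adapt the counting-functional argument of Section~\ref{sec:counting_functional} with the reference dynamics being free propagation (part (a)) or the trivial identity flow (part (b)), instead of the fermionic Hartree dynamics used for Theorem~\ref{thm:E_kin_only}. Set \(p_1(t):=\sum_{k=1}^N|\varphi_k^t\rangle\langle\varphi_k^t|\), \(q_1(t):=\id-p_1(t)\), and define
\[
 \alpha(t):=\SCP{\psi^t}{q_1(t)\psi^t}.
\]
The density-matrix estimates of Section~\ref{sec:dens_mat_summary} give \(\|\gamma^{\bigwedge\varphi_j^t}_1-\gamma^{\psi^t}_1\|_{\tr}\leq 2\sqrt{\alpha(t)}\), and a direct computation using \(\gamma^{\bigwedge\varphi_j^0}_1=p_1(0)/N\) yields \(\alpha(0)\leq\|\gamma^{\bigwedge\varphi_j^0}_1-\gamma^{\psi^0}_1\|_{\tr}\). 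It therefore suffices to establish a linear-in-\(t\) bound \(\alpha(t)\leq\alpha(0)+X\,t\) for a constant \(X\) depending on the case, and conclude via \(\sqrt{a+b}\leq\sqrt a+\sqrt b\).

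Differentiating \(\alpha\) using the Schr\"odinger equation and symmetrizing over particles via antisymmetry of \(\psi^t\) produces, schematically,
\[
 \dot\alpha(t) = -i\SCP{\psi^t}{[H_0,q_1]\psi^t} + \SCP{\psi^t}{\dot q_1\psi^t} -2N^{-\delta}(N-1)\,\Im\SCP{\psi^t}{p_1\,|x_1-x_2|^{-1}\,q_1\psi^t}.
\]
In part (a), the free equation \(i\partial_t\varphi_j^t=-\Delta\varphi_j^t\) gives \(\dot p_1=-i[-\Delta_{x_1},p_1]\), so the first two terms cancel exactly. In part (b), \(\dot p_1=0\) and the free commutator survives; integration by parts combined with the Pauli bound \(\|\gamma^\psi_1\|_{\op}\leq 1/N\), which implies \(\|\nabla_{x_1}p_1\psi^t\|^2\leq E_{\kin}^{\mf}/N\), together with Cauchy--Schwarz, controls this term by \(C\sqrt{E\,E_{\kin}^{\mf}}/N\leq C(E+E_{\kin}^{\mf})/N\), producing the extra \(EN^{-1}+E_{\kin}^{\mf}N^{-1}\) summand in case (b).

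The common principal term is the interaction commutator. The key identity is the Hilbert--Schmidt inequality
\[
 \|p_1\gamma^\psi_1 q_1\|_{\tr}\leq\sqrt{\tr(p_1\gamma^\psi_1)\,\tr(q_1\gamma^\psi_1)}=\sqrt{(1-\alpha)\alpha}\leq\tfrac{1}{2},
\]
which yields, for any one-body multiplication operator \(F(x_1)\), the \(\alpha\)-uniform bound \(|\SCP{\psi^t}{p_1\,F\,q_1\psi^t}|\leq\norm[\infty]{F}\,\sqrt{(1-\alpha)\alpha}\). Replacing the pair potential by its mean-field \(F(x_1):=(|\cdot|^{-1}*\rho^{\psi^t})(x_1)\) (a one-body operator on particle~\(1\)), and applying the Lieb--Thirring kinetic-energy inequality \(\|\rho^{\psi^t}\|_{5/3}^{5/3}\leq CE\) together with the interpolation
\[
 \norm[\infty]{|\cdot|^{-1}*\rho^{\psi^t}}\leq C\,\norm[5/3]{\rho^{\psi^t}}^{5/6}\norm[1]{\rho^{\psi^t}}^{1/6}\leq C\,E^{1/2}N^{1/6},
\]
gives a mean-field contribution of order \(CN^{1/6-\delta}E^{1/2}\). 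The residual two-body fluctuation \(p_1\bigl[(N-1)|x_1-x_2|^{-1}-(|\cdot|^{-1}*\rho^{\psi^t})(x_1)\bigr]q_1\) is handled by a short/long-range decomposition of the Coulomb kernel combined with Hardy's inequality and the Pauli-improved bound \(\|\nabla_{x_j}p_j\psi^t\|^2\leq E_{\kin}^{\mf}/N\), and is of the same or smaller order. Collecting, \(|\dot\alpha(t)|\leq C N^{1/6-\delta}E^{1/2}\) in case (a), with an additional \(C(E+E_{\kin}^{\mf})/N\) in case (b), uniformly in \(\alpha\).

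Integrating in time and invoking \(\|\gamma^{\bigwedge\varphi_j^t}_1-\gamma^{\psi^t}_1\|_{\tr}\leq 2\sqrt{\alpha(t)}\) yields the two asserted bounds. The main obstacle is the fluctuation estimate: the factor \(N^{1/6}\) is the Fermi-statistics-sensitive size of the Coulomb mean-field coming from Lieb--Thirring, and verifying that the two-body fluctuation around the mean-field does not spoil the uniform-in-\(\alpha\) constant-rate bound (which naive Hardy-only estimates would degrade to \(N^{1-\delta}\)) is the technical core of the argument.
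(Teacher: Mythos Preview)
Your overall framework matches the paper: use $\alpha_n(t)=\SCP{\psi^t}{q_1(t)\psi^t}$, differentiate, obtain a bound linear in $t$, integrate, and convert to trace norm via Lemma~\ref{lem:density_conv}. Your treatment of the kinetic commutator in part (b) is also essentially the paper's argument.

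The interaction term, however, is where your sketch diverges from the paper and where it has a genuine gap. The paper does \emph{not} split into a mean-field part and a fluctuation. It simply observes (from \eqref{alpha_derivative_n_remark} with $V^{(N)}_1=0$) that
\[
\partial_t\alpha_n(t)=2(N-1)N^{-\delta}\,\Im\,\bigSCP{\psi^t}{v_{12}\,p_1^t\psi^t},
\]
applies Cauchy--Schwarz once, $\big|\SCP{\psi^t}{v_{12}p_1^t\psi^t}\big|\le \SCP{\psi^t}{v_{12}^2\psi^t}^{1/2}$, and then invokes the \emph{many-particle Hardy inequality for fermions} \cite{frank:fermions,hoffmann-ostenhof:2008}, which for antisymmetric $\psi$ gives
\[
\SCP{\psi^t}{|x_1-x_2|^{-2}\psi^t}\le C\,N^{-2/3}\,\SCP{\psi^t}{(-\Delta_1)\psi^t}\le C\,N^{-5/3}E.
\]
This yields $|\partial_t\alpha_n|\le CN^{1/6-\delta}E^{1/2}$ in two lines. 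The fermionic $N^{-2/3}$ gain over the one-particle Hardy bound is precisely the missing $N^{-1/3}$.

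Your alternative route---splitting off the one-body mean-field $F(x_1)=(|\cdot|^{-1}*\rho^{\psi^t})(x_1)$ and bounding it via Lieb--Thirring---does give the correct order for that piece. But the ``residual two-body fluctuation'' $p_1\big[(N-1)|x_1-x_2|^{-1}-F(x_1)\big]q_1$ is the hard part, and your justification relies on the ``Pauli-improved bound'' $\|\nabla_{x_j}p_j\psi^t\|^2\le E_{\kin}^{\mf}/N$. In part (a) there is \emph{no hypothesis whatsoever} on $\sum_j\|\nabla\varphi_j^0\|^2$, so $E_{\kin}^{\mf}$ is not available and cannot appear in the final bound \eqref{proposition_result}. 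Even setting this aside, the sketch does not indicate how a short/long-range split plus single-particle Hardy would beat the naive $N^{1/2-\delta}E^{1/2}$ one gets from $\SCP{\psi}{|x_1-x_2|^{-2}\psi}\le 4\SCP{\psi}{(-\Delta_1)\psi}$; the antisymmetry of $\psi^t$ has to enter somewhere on the fluctuation side, and you have not said where. The many-particle fermionic Hardy inequality is the clean device the paper uses to encode that antisymmetry directly, bypassing the split entirely.
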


\begin{proof}
See Section~\ref{sec:proofs_main_theorems}. \phantom\qedhere
\end{proof}

\subsection{Density $\propto N$ Regime}\label{sec:estimates_semiclassical}
Let us now state the result for the derivation of the mean-field dynamics for the regime discussed in Section~\ref{sec:density_O_N}. Such a derivation has recently been given in \cite{benedikter:2013} and here, we reproduce some of these results. We actually use estimates about the propagation of properties of the initial data from \cite{benedikter:2013} (see Lemma~\ref{lem:sc_prop_sc}). A slight improvement of our result is that our conditions on the closeness of the initial data are more transparent and general, see Remark~\ref{rem:comparison}. In contrast to \cite{benedikter:2013}, we express our result solely in terms of the one-particle reduced density matrices. We consider the Schr\"odinger equation \eqref{Schr_scaled_sc} (for simplicity, without external fields) and the corresponding fermionic Hartree equations \eqref{hartree_scaled_sc}. Note that for the interactions we consider the exchange term can easily be shown to be subleading (compared to the error term in the bound \eqref{main_alpha_ineq_sc}), so the theorem can be proven directly for both the fermionic Hartree and for the Hartree-Fock equations
\begin{equation}\label{outline_hartree_fock_scaled_sc_app}
i N^{-1/3} \partial_t \varphi_j^t = \left( -N^{-2/3} \Delta + N^{-1} \left(v * \rho_N^t \right) \right) \varphi_j^t - N^{-1} \sum_{k=1}^N \left(v * (\varphi_k^{t*}\varphi_j^t)\right) \varphi_k^t,
\end{equation}
for $j=1,\ldots,N$ (recall $\rho_N^t = \sum_{i=1}^N |\varphi_i^t|^2$). Note that here we do not have to use the long-range behavior of the interaction, since we are interested in solutions in a volume of order $1$. For technical reasons, we consider a class of interactions that is in particular bounded. Our theorem is analogous to \cite[Thm.\ 2.1]{benedikter:2013}. We denote the trace norm by $\norm[\tr]{\cdot}$ (see also Section~\ref{sec:density_matrices}) and the commutator by $[A,B]=AB-BA$.

\begin{theorem}\label{thm:sc_main_thm}
Set $w^{(N)} = 0$. Let $\psi^t \in L^2(\RRR^{3N})$ be a solution to the Schr\"odinger equation \eqref{Schr_scaled_sc} with antisymmetric initial condition $\psi^0 \in L^2(\RRR^{3N})$. Let $\varphi_1^t,\ldots,\varphi_N^t \in L^2(\RRR^3)$ be solutions to the fermionic Hartree equations \eqref{hartree_scaled_sc} or to the Hartree-Fock equations \eqref{outline_hartree_fock_scaled_sc_app}, with orthonormal initial conditions $\varphi_1^0,\ldots,\varphi_N^0 \in L^2(\RRR^3)$. We assume that $v \in L^1(\RRR^3)$ and
\begin{equation}\label{sc_thm_cond0}
\int d^3k \, (1+|k|^2)\, |\hat{v}(k)| < \infty,
\end{equation}
where $\hat{v}$ is the Fourier transform of $v$, and that the initial conditions $\varphi_1^0,\ldots,\varphi_N^0$ are such, that
\begin{equation}\label{sc_thm_cond1}
\sup_{k\in\RRR^3} (1+|k|)^{-1} \, \Big\lvert\Big\lvert \big[ p^0, e^{ik\cdot x} \big] \Big\rvert\Big\rvert_{\tr} \leq cN^{2/3},
\end{equation}
\begin{equation}\label{sc_thm_cond2}
\Big\lvert\Big\lvert\big[ p^0, \nabla \big] \Big\rvert\Big\rvert_{\tr} \leq c N,
\end{equation}
for some constant $c>0$, where $p^0 = \sum_{j=1}^N \ketbr{\varphi_j^0}$ is the projector on the span of $\varphi_1^0,\ldots,\varphi_N^0$. Then, there are positive $C_1,C_2$, such that for all $t>0$,
\begin{equation}\label{main_alpha_ineq_sc}
\Big|\Big| \gamma^{\bigwedge \varphi_j^t}_1 - \gamma^{\psi^t}_1 \Big|\Big|_{\tr} \leq 2e^{C_1\left(e^{C_2t}-1\right)} \, \Big|\Big| \gamma^{\bigwedge \varphi_j^0}_1 - \gamma^{\psi^0}_1 \Big|\Big|_{\tr}^{1/2} + \left( 8\left(e^{C_1\left(e^{C_2t}-1\right)}-1\right) \right)^{1/2} N^{-1/2}.
\end{equation}
\end{theorem}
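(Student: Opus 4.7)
The plan is to implement the counting-functional strategy of Section~\ref{sec:counting_functional} for the semiclassical scaling, with the crucial input being the propagation along the Hartree flow of the initial commutator bounds \eqref{sc_thm_cond1}--\eqref{sc_thm_cond2}. Introduce $p^t = \sum_{j=1}^N \ketbr{\varphi_j^t}$ and $q^t = \id - p^t$, and write $p_j^t,q_j^t$ for the copies acting on the $j$-th variable. Define the counting functional
$$\alpha(t) = \tr\!\big(q^t\,\gamma_1^{\psi^t}\big) = \SCP{\psi^t}{q_1^t \psi^t},$$
the second equality using antisymmetry of $\psi^t$. By the general facts collected in Section~\ref{sec:dens_mat_summary}, $\alpha(t)$ controls $\norm[\tr]{\gamma_1^{\bigwedge\varphi_j^t}-\gamma_1^{\psi^t}}$ up to a square root and an $N^{-1}$ correction, so \eqref{main_alpha_ineq_sc} follows from a Gronwall estimate of the form $\alpha(t)+N^{-1} \leq (\alpha(0)+N^{-1})\,e^{C_1(e^{C_2 t}-1)}$.

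First I would differentiate $\alpha$. Because $p^t$ evolves by the Hartree (or Hartree-Fock) flow and $\psi^t$ by the Schr\"odinger flow, the one-body kinetic contributions and one copy of the mean field cancel in the spirit of an Ehrenfest identity for $p^t$; what survives is purely the interaction fluctuation. Concretely, using antisymmetry to reduce the sum over pair interactions involving particle~$1$ to a single representative,
$$\dot\alpha(t) = i N^{-2/3}\,\Big\langle\psi^t,\,\big[(N-1)v(x_1-x_2) - (v*\rho_N^t)(x_1),\,q_1^t\big]\psi^t\Big\rangle.$$
Inserting $\id = p_2^t + q_2^t$ on the second particle and applying the algebraic identities catalogued in Section~\ref{sec:alpha_m_dot_rigorous}, the right-hand side decomposes into a dominant $p_1^t q_2^t\,v\,q_1^t p_2^t$ piece (with the mean field absorbing the $p_1^t p_2^t\,v\,p_1^t p_2^t$ contribution) plus smaller $qq$ terms, all controllable in terms of matrix elements of the form $N^{-2/3}|\SCP{\psi^t}{q_1^t\,V_{12}\,p_1^t p_2^t \psi^t}|$ with $V_{12}=v(x_1-x_2)-N^{-1}(v*\rho_N^t)(x_1)$.

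The heart of the estimate is to bound each such matrix element by $\alpha(t)+N^{-1}$ times a suitable time-dependent factor. I would Fourier-decompose $v(x_1-x_2)=\int d^3k\,\hat v(k)\,e^{ikx_1}e^{-ikx_2}$ so that taking the commutator with $p_1^t$ produces the operator $[p^t,e^{ikx}]$. Using $\norm[\op]{e^{-ikx_2}}=1$, Cauchy-Schwarz on the remaining $\psi^t$ factors, and the inequality $|\tr(AB)|\leq\norm[\op]{A}\norm[\tr]{B}$, the control reduces to an integral
$$\dot\alpha(t) \leq C\,N^{-2/3}\int d^3k\,|\hat v(k)|\,\Big\lVert\big[p^t,e^{ikx}\big]\Big\rVert_{\tr}\,\big(\alpha(t)+N^{-1}\big).$$
Here Lemma~\ref{lem:sc_prop_sc}, imported from \cite{benedikter:2013}, supplies the decisive propagation bound $\lVert[p^t,e^{ikx}]\rVert_{\tr}\leq C_1(1+|k|)N^{2/3}e^{C_2 t}$ so that, combined with the hypothesis $\int(1+|k|^2)|\hat v(k)|\,dk<\infty$, the coefficient of $\alpha(t)+N^{-1}$ is bounded by $C_1 e^{C_2 t}$. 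If one wishes to treat the Hartree-Fock exchange term on the same footing, the analogous propagation $\lVert[p^t,\nabla]\rVert_{\tr}\leq C_1 N e^{C_2 t}$ handles that contribution with the same budget. A Gronwall step then gives $\alpha(t)+N^{-1}\leq(\alpha(0)+N^{-1})\,e^{(C_1/C_2)(e^{C_2 t}-1)}$, and \eqref{main_alpha_ineq_sc} follows from the density-matrix equivalence of Section~\ref{sec:dens_mat_summary}.

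The hardest part is the combinatorial bookkeeping in the second paragraph: one must verify that every term in the $(p_1+q_1)(p_2+q_2)V_{12}(p_1+q_2)(p_2+q_2)$ expansion can be rearranged so that the $N^{-2/3}$ prefactor absorbs the combinatorial $N^{2/3}$ from the commutator bound, without leaving any residual factor of $N$. This is precisely where antisymmetry and the $N^{-1}$ normalization built into the mean field must be used, and it is the step that crucially distinguishes fermionic from bosonic mean-field arguments.
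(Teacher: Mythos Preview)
Your overall strategy matches the paper's: control $\alpha_n(t)=\SCP{\psi^t}{q_1^t\psi^t}$, compute its derivative, Fourier-decompose $v$, invoke Lemma~\ref{lem:sc_prop_sc}, and close by Gronwall plus Lemma~\ref{lem:density_conv}. But the central estimate you display is not obtainable from the tools you name, and the power counting behind it is off.

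The interaction prefactor in $\dot\alpha_n$ is $N^{-2/3}(N-1)\approx N^{1/3}$, not $N^{-2/3}$: the $(N-1)$ is present in your formula for $\dot\alpha$ but silently disappears when you pass to the ``controllable matrix element'' $N^{-2/3}|\SCP{\psi}{q_1 V_{12}p_1p_2\psi}|$. After a single use of the commutator bound $\norm[\tr]{[p^t,e^{ikx}]}\leq C(1+|k|)N^{2/3}e^{Ct}$ you are therefore left with $N^{1/3}\cdot N^{2/3}=N$, so there \emph{is} a residual factor of $N$ to kill --- the opposite of what your last paragraph asserts. Moreover, the devices you list (Cauchy--Schwarz, $\norm[\op]{e^{-ikx_2}}=1$, and $|\tr(AB)|\leq\norm[\op]{A}\norm[\tr]{B}$) cannot provide this gain: the relevant object is a many-body expectation value, not a one-body trace, and plain operator-norm estimates never see the antisymmetry of $\psi$.

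What the paper does is different for each of the three terms $(I)_n,(II)_n,(III)_n$ of \eqref{alpha_derivative_n_remark}; they are not reducible to one ``dominant piece''. The common key ingredient is the singular value decomposition $q_1e^{ikx_1}p_1=\sum_\ell\mu_\ell\ketbra{\phi_\ell}{\tilde\phi_\ell}_1$ combined with Lemma~\ref{lem:projector_norms}, which extracts $N^{-1}$ from each rank-one projector acting on an antisymmetric state while $\sum_\ell\mu_\ell=\norm[\tr]{q_1e^{ikx_1}p_1}$. For $(III)_n$ this closes directly. For $(II)_n$ (the $q_1q_2\,v\,p_1p_2$ term) the commutator bound must be used \emph{twice}, once per particle, after the $q$-shifting trick $(N-1)q_2 v_{12}p_2\to\sum_{m\geq 2}q_m v_{1m}p_m$; a single application yields only $(II)_n\leq Ce^{Ct}$ without the factor $\alpha_n+N^{-1}$, which wrecks the Gronwall closure. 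For $(I)_n$ the mean-field cancellation must first manufacture an extra projector $q_{\neq 1}^{\chi_i}$ via diagonalisation of $p_2\cos(kx_2)p_2$ and \eqref{q_neq_1}--\eqref{p_neq_1}, not merely ``absorb'' a $pp\,v\,pp$ contribution. Finally, $\norm[\tr]{[p^t,\nabla]}$ enters only inside Lemma~\ref{lem:sc_prop_sc} to propagate the semiclassical structure; the exchange term itself is bounded directly as in \eqref{V_1_exch} without it.
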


\begin{proof}
See Section~\ref{sec:proof_sc_scaling}. \phantom\qedhere
\end{proof}

\noindent\textbf{Remarks.}
\begin{enumerate}
\setcounter{enumi}{\theremarks}

\item The condition \eqref{sc_thm_cond1} captures the semiclassical structure of the initial data, see \cite{benedikter:2013} for a more detailed discussion. The condition \eqref{sc_thm_cond2} is used to propagate \eqref{sc_thm_cond1} in time. Note that for the initial state from Remark~\ref{rem:initial_data_in_cells} the condition \eqref{sc_thm_cond2} does not hold.

\item The theorem also holds with external fields that are such that they preserve the bounds \eqref{sc_thm_cond1} and \eqref{sc_thm_cond2} for all $t$.
\item The double exponential in the bound \eqref{main_alpha_ineq_sc} comes from using the Gronwall Lemma twice: once for propagating the conditions \eqref{sc_thm_cond1} and \eqref{sc_thm_cond2} in time, and once for the time derivative of a quantity similar to the left-hand side of \eqref{main_alpha_ineq_sc} (see Section~\ref{sec:counting_functional}).

\item Using Lemma~\ref{lem:density_conv} we can also deduce a bound in Hilbert-Schmidt norm, i.e.,
\begin{equation}\label{HS_sc}
\sqrt{N}\Big|\Big| \gamma^{\bigwedge \varphi_j^t}_1 - \gamma^{\psi^t}_1 \Big|\Big|_{\HS} \leq \sqrt{2} e^{C_1\left(e^{C_2t}-1\right)} \, \left( \sqrt{N} \Big|\Big| \gamma^{\bigwedge \varphi_j^0}_1 - \gamma^{\psi^0}_1 \Big|\Big|_{\HS}\right)^{1/2} + \left( \sqrt{2}\left(e^{C_1\left(e^{C_2t}-1\right)}-1\right) \right)^{1/2} N^{-1/2}.
\end{equation}

\item\label{rem:comparison} Let us compare Theorem~\ref{thm:sc_main_thm} to \cite[Thm.\ 2.1]{benedikter:2013}. Our bounds \eqref{main_alpha_ineq_sc} and \eqref{HS_sc} are similar to the bounds \cite[Eq.\ (2.20)]{benedikter:2013} and \cite[Eq.\ (2.19)]{benedikter:2013}. In particular, the convergence rates are the same. The bounds \eqref{main_alpha_ineq_sc} and \eqref{HS_sc} are formulated for more general initial data, in the sense that convergence of the initial trace respectively ($\sqrt{N}$ times) Hilbert-Schmidt norm is sufficient to control the left-hand sides of \eqref{main_alpha_ineq_sc} and \eqref{HS_sc}. In \cite[Thm.\ 2.1]{benedikter:2013}, there are two additional bounds that improve the rate of convergence: in \cite[Eq.\ (2.21)]{benedikter:2013} this is achieved by additional assumptions on the closeness of the initial data to antisymmetric product states and in \cite[Eq.\ (2.22)]{benedikter:2013} by evaluating the trace norm difference only for certain observables. Furthermore, in \cite[Thm.\ 2.2]{benedikter:2013} bounds for the reduced $k$-particle density matrices are proven.

\end{enumerate}
\setcounter{remarks}{\theenumi}

\section{Exposition of The Method}\label{sec:counting_functional}
We now present our method for deriving fermionic mean-field dynamics form the microscopic Schr\"odinger dynamics. This method does not rely on BBGKY hierarchies but on a Gronwall type estimate for a suitably defined functional, which leads to many technical advantages. The key of the method is to define the functional $\alpha_f(\psi,\varphi_1,\ldots,\varphi_N)$ that measures the closeness of a many-body wave function $\psi \in L^2(\RRR^{3N})$ to the antisymmetrized product state $\bigwedge_{j=1}^N\varphi_j$, with $\varphi_1,\ldots,\varphi_N \in L^2(\RRR^3)$. For the definition of $\alpha_f$ we need to introduce several projectors that play a crucial role in the proofs later.

\begin{definition}\label{def:projectors}
Let $\varphi_1, \ldots, \varphi_N \in L^2(\RRR^3)$ be orthonormal and $\psi \in L^2(\RRR^{3N})$ be normalized. For all $j,m = 1,\ldots,N$ we define the projector
\begin{equation}
p_m^{\varphi_j} := \ketbr{\varphi_j}_m = \ketbr{\varphi_j(x_m)} = \underbrace{\id \otimes \ldots \otimes \id}_{m-1 ~ \text{times}} \otimes \,\ketbra{\varphi_j}{\varphi_j}\, \otimes \underbrace{\id \otimes \ldots \otimes \id}_{N-m ~ \text{times}},
\end{equation}
i.e., its action on any $\psi \in L^2(\RRR^{3N})$ is given by
\begin{equation}
\left(p_m^{\varphi_j}\psi\right)(x_1,\ldots,x_N) = \varphi_j(x_m)\int\varphi_j^*(x_m)\psi(x_1,\ldots,x_N) \, d^3x_m.
\end{equation}
We define $p_m := \sum_{j=1}^N p_m^{\varphi_j}$ and $q_m := 1-p_m$. For any $0 \leq k \leq N$ we define
\begin{equation}
P^{(N,k)} := \left( \prod_{m=1}^k q_m \prod_{m=k+1}^N p_m \right)_{\sym} = \sum_{\vec{a} \in \AAA_k} \prod_{m=1}^N (p_m)^{1-a_m}(q_m)^{a_m},
\end{equation}
with the set $\AAA_k := \big\{ \vec{a}=(a_1,\ldots,a_N) \in \{0,1\}^N: \sum_{m=1}^N a_m=k \big\}$, i.e., $P^{(N,k)}$ is the symmetrized tensor product of $q_1,\ldots,q_k,p_{k+1},\ldots,p_N$. We define $P^{(N,k)}=0$ for all $k<0$ and $k>N$. We call any $f:\{0,\ldots,N\} \to [0,1]$ with $f(0)=0$, $f(N)=1$ a \emph{weight function}. For any weight function $f$ we define the operator
\begin{equation}
\widehat{f} := \sum_{k=0}^N f(k) P^{(N,k)},
\end{equation}
and the functional
\begin{equation}\label{definition_alpha}
\alpha_f(\psi,\varphi_1,\ldots,\varphi_N) := \SCP{\psi}{\widehat{f} \, \psi},
\end{equation}
where $\SCP{\cdot}{\cdot}$ denotes the scalar product on $L^2(\RRR^{3N})$.
\end{definition}

The functional $\alpha_f$ has first been introduced by one of the authors (P.P.) for bosons \cite{pickl:2011method}, that is, with $p_m = \ketbr{\varphi}_m$. The functional was used in \cite{pickl:2011method,pickl:2010hartree} for the derivation of the bosonic Hartree equation, and in \cite{pickl:2010gp_ext,pickl:2010gp_pos} for the derivation of the Gross-Pitaevskii equation. A variant of it has been used in \cite{deckert:2012,deckert:2014} to derive the effective dynamics of a tracer particle in a Bose gas and in \cite{anapolitanos:2011} to derive the Hartree-von Neumann limit.

Let us explain Definition~\ref{def:projectors} a little further. First, note that $p_m$, $q_m$ and $P^{(N,k)}$ are indeed projectors, since $\varphi_1,\ldots,\varphi_N$ are assumed to be orthonormal. As operators on $L^2(\RRR^3)$, $p_1$ projects on the subspace spanned by $\varphi_1,\ldots,\varphi_N$, and $q_1$ on its complement, i.e., in particular, $p_1q_1=0$. The $P^{(N,k)}$ have the property that
\begin{equation}\label{P_Nk_properties}
\sum_{k=0}^N P^{(N,k)} = \prod_{m=1}^N (p_m+q_m) = 1 \quad\quad \text{and} \quad\quad P^{(N,k)}P^{(N,\ell)} = \delta_{k\ell}P^{(N,k)},
\end{equation}
where the last equation is true since $P^{(N,k)}$ contains exactly $k$ $q$-projectors in each summand and $q_mp_m=0$. We can thus define the decomposition $\psi = \sum_{k=0}^N P^{(N,k)}\psi$ of the microscopic wave function, where now each contribution $P^{(N,k)}\psi$ has $(N-k)$ particles in one of the orbitals $\varphi_1,\ldots,\varphi_N$ (``good particles'') and $k$ particles \emph{not} in the orbitals $\varphi_1,\ldots,\varphi_N$ (``bad particles''). With the functional $\alpha_f$, each contribution $P^{(N,k)}\psi$ is given the weight $f(k)$, i.e., $f$ specifies how much weight is given to the number of particles outside the antisymmetrized product state $\bigwedge_{j=1}^N\varphi_j$. Since $0 \leq \alpha_f \leq 1$, $\alpha_f \approx 0$ means (for suitable weight functions $f$) that the approximation of $\psi$ by $\bigwedge_{j=1}^N\varphi_j$ is very good, while $\alpha_f \approx 1$ means that the approximation is not valid at all. By choosing an appropriate $f$ we can fine tune what exactly is meant by the closeness of $\psi$ to $\bigwedge_{j=1}^N\varphi_j$. One obvious and very simple weight is the relative number $k/N$. We always denote this weight function by
\begin{equation}\label{weight_n}
n(k) = k/N
\end{equation}
and the corresponding functional by $\alpha_n$. For this weight function, due to the antisymmetry of $\psi$ and \eqref{P_Nk_properties}, the functional has the simple form (recall that each summand in $P^{(N,k)}$ contains exactly $k$ projectors $q$)
\begin{equation}
\alpha_n = \sum_{k=0}^N \frac{k}{N} \, \bigSCP{\psi}{P^{(N,k)} \psi} = \sum_{k=0}^N N^{-1} \bigSCP{\psi}{\sum_{m=1}^N q_m P^{(N,k)} \psi} = \bigSCP{\psi}{q_1 \psi}.
\end{equation}
Let us note here that $\alpha_n$ has been used before in \cite{bach:1993,graf_solovej:1994} to measure deviation from the antisymmetrized product structure in the static setting; see also the remarks following \eqref{alpha_n_and_tr}. Furthermore, $\alpha_n$ coincides with the measure introduced in \cite{benedikter:2013} for states in Fock space with fixed particle number $N$. Another important weight is
\begin{equation}\label{weight_m_gamma}
m^{(\gamma)}(k) = \left\{\begin{array}{cl} k N^{-\gamma} &, \text{for } k \leq N^{\gamma}\\ 1 & , \text{otherwise,} \end{array}\right.
\end{equation}
with some $0 < \gamma \leq 1$. The function $m^{(\gamma)}(k)$ gives a much larger weight to already very few particles outside the antisymmetrized product structure. On the other hand, for $k>N^{\gamma}$, i.e., very many particles outside the antisymmetrized product structure, $m^{(\gamma)}(k)$ gives the same weight $1$ for all $k>N^{\gamma}$. These properties enable us to derive mean-field approximations for a much wider range of physical situations, e.g., singular or weakly scaled interaction potentials. Let us stress that the freedom in the choice of the weight function is a key feature and advantage of the described method.

Note that we could also define projectors $p^{\varphi} = \sum_{m=1}^N p_m^{\varphi}$ and $q^{\varphi} = 1-p^{\varphi}$. For antisymmetric $\psi$, the $\alpha_f$ functional defined with those projectors coincides with $\alpha_f$ from Definition~\ref{def:projectors}, since $\big( \prod_{j=1}^k q^{\varphi_j} \prod_{j=k+1}^N p^{\varphi_j} \big)_{\sym}\psi = P^{(N,k)} \psi$, as can be seen from multiplying out the left-hand side. Note that the projectors $p^{\varphi}$ are related to fermionic creation and annihilation operators $a(\varphi)$ and $a^{\dagger}(\varphi)$ (see, e.g., \cite{lieb:2010}) by $p^{\varphi} = a^{\dagger}(\varphi)a(\varphi)$, i.e., $p^{\varphi}$ is the number operator for the state $\varphi$.

The goal of this work is to prove bounds on $\alpha_f(t) = \alpha_f\big(\psi^t,\varphi_1^t,\ldots,\varphi_N^t\big)$, where $\psi^t$ is a solution to the Schr\"odinger equation and $\varphi^t_1,\ldots,\varphi^t_N$ are solutions to the fermionic Hartree equations. In more detail, we first look for a bound on the time derivative of $\alpha_f(t)$ of the type
\begin{equation}\label{general_form_bound_alpha_dot}
\partial_t \alpha_f(t) \leq C(t) \left( \alpha_f(t) + N^{-\delta} \right),
\end{equation}
which then, by Gronwall's Lemma, implies the bound
\begin{equation}\label{general_form_bound}
\alpha_f(t) \leq \, e^{\int_0^t C(s) ds} \, \alpha_f(0) + \left( e^{\int_0^t C(s) ds} -1 \right) N^{-\delta},
\end{equation}
where the function $C(t)$ is independent of $N$, and $\delta>0$ is called the \emph{convergence rate}. In the main theorems of Section~\ref{sec:main_results}, the weight function is either $n$ from \eqref{weight_n} or $m^{(\gamma)}$ from \eqref{weight_m_gamma}. A bound of the form \eqref{general_form_bound} implies that if initially (at time $t=0$) $\alpha_f$ is small, then it stays small for times $t>0$ and $N$ large enough. In the limit of $N\to\infty$, we arrive at the statement that $\lim_{N\to\infty}\alpha_f(t=0)=0$ implies $\lim_{N\to\infty}\alpha_f(t)=0$ for all $t>0$. Note that for all $f\geq n$, we have $\alpha_f \geq \alpha_n$, i.e., if the bound \eqref{general_form_bound_alpha_dot} holds for such an $f$, we find
\begin{equation}
\alpha_n(t) \leq \alpha_f(t) \leq \, e^{\int_0^t C(s) ds} \, \left( \alpha_f(0) + N^{-\delta} \right).
\end{equation}
Thus, by using a weight function $f \geq n$ (which can be advantageous in the estimates, see below) we can still control $\alpha_n(t)$, but now under stronger conditions on the initial state, namely that $\alpha_f(0)$ has to be small in $N$.

\subsection{Connection to Density Matrices}\label{sec:dens_mat_summary}
The functional $\alpha_f$ is closely related to the trace and Hilbert-Schmidt norms of the difference between reduced one-particle density matrices as defined in \eqref{definition_dens_mat_one_part}. 
Note that for an antisymmetrized product state $\bigwedge_{j=1}^N \varphi_j$ we find
\begin{equation}
\gamma^{\bigwedge \varphi_j}_1 = N^{-1} p_1.
\end{equation}
Let us now consider $\alpha_n$, i.e., the $\alpha$-functional with the weight $n(k)=k/N$. Note that 
\begin{equation}\label{alpha_n_and_tr}
\alpha_n = \SCP{\psi}{q_1 \psi} = \tr(\gamma_1^{\psi}q_1) = \tr(\gamma_1^{\psi}(1-p_1)) = N \tr(\gamma_1^{\psi}(N^{-1}-\gamma^{\bigwedge \varphi_j}_1)),
\end{equation}
where $\tr(\cdot)$ denotes the trace. It is the expression $\tr(\gamma_1^{\psi}q_1)$ that has been used before to measure deviation from the antisymmetrized product structure in the static setting, see \cite{bach:1993,graf_solovej:1994}. The relations between $\alpha_n$ and the differences between Schr\"odinger and Hartree reduced one-particle density matrices in trace norm $\norm[\tr]{\cdot}$ and Hilbert-Schmidt norm $\norm[\HS]{\cdot}$ are summarized in the following lemma (see Section~\ref{sec:density_matrices} for the definition of the norms).

\begin{lemma}\label{lem:density_conv}
Let $\psi \in L^2(\RRR^{3N})$ be antisymmetric and normalized, and let $\varphi_1,\ldots,\varphi_N \in L^2(\RRR^3)$ be orthonormal. Then
\begin{equation}\label{bound_tr_alpha_HS}
\norm[\tr]{\gamma^{\bigwedge \varphi_j}_1 - \gamma^{\psi}_1}^2 \leq 8 \, \alpha_n \leq 8 \sqrt{N} \norm[\HS]{\gamma^{\bigwedge \varphi_j}_1 - \gamma^{\psi}_1},
\end{equation}
\begin{equation}\label{bound_HS_alpha_tr}
N \norm[\HS]{\gamma^{\bigwedge \varphi_j}_1 - \gamma^{\psi}_1}^2 \leq 2 \, \alpha_n \leq \norm[\tr]{\gamma^{\bigwedge \varphi_j}_1 - \gamma^{\psi}_1}.
\end{equation}
\end{lemma}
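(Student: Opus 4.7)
The plan is to derive all four inequalities by writing $\Delta := \gamma^{\bigwedge \varphi_j}_1 - \gamma^{\psi}_1 = N^{-1} p_1 - \gamma_1^{\psi}$, decomposing it in the block form induced by $p_1$ and $q_1$, and exploiting the crucial antisymmetry bound $0 \leq \gamma_1^{\psi} \leq N^{-1}$ (Pauli principle). The starting point is the identity already recorded in the excerpt, $\alpha_n = \tr(\gamma_1^{\psi} q_1)$, combined with $\tr \Delta = 0$, which gives $\tr(\Delta p_1) = -\tr(\Delta q_1) = \alpha_n$.

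For the inner inequalities (those with $\alpha_n$ on the right), first I would write $2 \alpha_n = \tr(\Delta(p_1 - q_1))$ and use $\|p_1 - q_1\|_{\op} = 1$ to get $2\alpha_n \leq \|\Delta\|_{\tr}$, which is one of the required bounds. For the other, $\alpha_n = \tr(\Delta p_1) \leq \|\Delta\|_{\HS}\,\|p_1\|_{\HS} = \sqrt{N}\,\|\Delta\|_{\HS}$, giving the Hilbert--Schmidt bound immediately.

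For the outer inequalities, I would split
\begin{equation*}
\Delta = p_1\Delta p_1 + p_1\Delta q_1 + q_1\Delta p_1 + q_1\Delta q_1.
\end{equation*}
Since $p_1 q_1 = 0$, the diagonal blocks simplify to $q_1 \Delta q_1 = -q_1 \gamma_1^{\psi} q_1 \leq 0$ and $p_1 \Delta p_1 = N^{-1} p_1 - p_1 \gamma_1^{\psi} p_1$; the antisymmetry bound $\gamma_1^{\psi} \leq N^{-1}$ makes the latter \emph{positive}, so each of these diagonal blocks has trace norm equal to its trace, namely $\alpha_n$ and $1 - \alpha_n$ for the corresponding $q_1 \gamma_1^\psi q_1$. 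For the off-diagonal term $p_1\Delta q_1 = -p_1\gamma_1^{\psi} q_1$, I would factor $\gamma_1^{\psi} = \sqrt{\gamma_1^{\psi}}\sqrt{\gamma_1^{\psi}}$ and apply $\|XY\|_{\tr} \leq \|X\|_{\HS}\|Y\|_{\HS}$ to obtain
\begin{equation*}
\|p_1 \Delta q_1\|_{\tr} \leq \sqrt{\tr(p_1 \gamma_1^{\psi} p_1)}\,\sqrt{\tr(q_1 \gamma_1^{\psi} q_1)} = \sqrt{(1-\alpha_n)\alpha_n}.
\end{equation*}
The triangle inequality then yields $\|\Delta\|_{\tr} \leq 2\alpha_n + 2\sqrt{\alpha_n(1-\alpha_n)}$; squaring and using $\sqrt{\alpha_n(1-\alpha_n)} \leq 1/2$ gives exactly $\|\Delta\|_{\tr}^2 \leq 8\alpha_n$. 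For the remaining HS bound I would use $\|\Delta\|_{\HS}^2 \leq \|\Delta\|_{\op}\,\|\Delta\|_{\tr}$, and observe that both $N^{-1}p_1$ and $\gamma_1^{\psi}$ lie in $[0, N^{-1}]$, so $\|\Delta\|_{\op} \leq N^{-1}$, yielding $N\|\Delta\|_{\HS}^2 \leq \|\Delta\|_{\tr}$.

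The main obstacle is the off-diagonal trace-norm bound: one must avoid the naive estimate $\|p_1 \Delta q_1\|_{\tr} \leq \|\Delta\|_{\tr}$ which would lose the essential $\sqrt{\alpha_n}$ behavior needed to obtain a sharp square-root relation, and one must also avoid using $\|q_1\|_{\HS}$, which is infinite. The trick of splitting $\gamma_1^{\psi}$ into two square roots before applying the HS--HS version of Hölder is what rescues the argument; everything else is bookkeeping that relies on the Pauli bound $\gamma_1^{\psi} \leq N^{-1}$ and the identity $\tr \Delta = 0$.
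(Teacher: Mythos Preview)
Your treatment of three of the four inequalities --- $\|\Delta\|_{\tr}^2 \leq 8\alpha_n$, $\alpha_n \leq \sqrt{N}\|\Delta\|_{\HS}$, and $2\alpha_n \leq \|\Delta\|_{\tr}$ --- is correct and essentially identical to the paper's proof (the paper verifies $2\alpha_n + 2\sqrt{\alpha_n(1-\alpha_n)} \leq \sqrt{8\alpha_n}$ by a minimization argument rather than by squaring, but this is cosmetic).

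There is, however, a genuine gap in your argument for the remaining inequality $N\|\Delta\|_{\HS}^2 \leq 2\alpha_n$. Your step $\|\Delta\|_{\HS}^2 \leq \|\Delta\|_{\op}\|\Delta\|_{\tr} \leq N^{-1}\|\Delta\|_{\tr}$ is valid, but it only yields $N\|\Delta\|_{\HS}^2 \leq \|\Delta\|_{\tr}$. Since you have already shown $2\alpha_n \leq \|\Delta\|_{\tr}$, this is a \emph{weaker} bound than the one claimed: it places $N\|\Delta\|_{\HS}^2$ on the wrong side of $2\alpha_n$. Chaining with your trace-norm estimate $\|\Delta\|_{\tr} \leq 2\alpha_n + 2\sqrt{\alpha_n(1-\alpha_n)}$ does not help either, as that right-hand side behaves like $\sqrt{\alpha_n}$ for small $\alpha_n$, not like $\alpha_n$.

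The paper closes this gap by expanding $\|\Delta\|_{\HS}^2 = \tr(\Delta^2)$ directly:
\[
\tr(\Delta^2) = N^{-1} - 2N^{-1}\langle\psi,p_1\psi\rangle + \tr\big((\gamma_1^{\psi})^2\big) \leq N^{-1} - 2N^{-1}(1-\alpha_n) + N^{-1} = 2N^{-1}\alpha_n,
\]
where the inequality uses $\tr\big((\gamma_1^{\psi})^2\big) \leq \|\gamma_1^{\psi}\|_{\op}\|\gamma_1^{\psi}\|_{\tr} \leq N^{-1}$. The point is that the Pauli bound must be applied to $\gamma_1^{\psi}$ alone, not to the difference $\Delta$; your operator-norm estimate $\|\Delta\|_{\op}\leq N^{-1}$ throws away the cancellation coming from the cross term $-2N^{-1}\tr(p_1\gamma_1^{\psi})$.
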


\begin{proof}
See Section~\ref{sec:density_matrices}. \phantom\qedhere
\end{proof}

Note that the extra $N$ factors are due to the choice of normalization; indeed
\begin{equation}
\norm[\tr]{\gamma^{\bigwedge \varphi_j}_1} = 1 ~~~~\text{and}~~~~ \norm[\HS]{\gamma^{\bigwedge \varphi_j}_1} = N^{-1/2}.
\end{equation}
Lemma~\ref{lem:density_conv} is the main result of this section. It implies in particular that
\begin{equation}\label{equivalence_alpha_tr_HS}
\lim_{N \to \infty} \alpha_n = 0 ~~
\Longleftrightarrow~~  \lim_{N \to \infty} \norm[\tr]{\gamma^{\bigwedge \varphi_j}_1 - \gamma^{\psi}_1} = 0 ~~
\Longleftrightarrow~~  \lim_{N \to \infty} \sqrt{N}\norm[\HS]{\gamma^{\bigwedge \varphi_j}_1 - \gamma^{\psi}_1} = 0,
\end{equation}
i.e., convergence of $\gamma_1^\psi$ to $\gamma^{\bigwedge \varphi_j}_1$ in certain norms is equivalent to convergence of $\alpha_n$ to zero. However, there is a difference in the convergence rates, e.g., if $\alpha_n$ converges with rate $N^{-1}$, then the density matrices converge in trace norm only with rate $N^{-1/2}$, as can be seen from \eqref{bound_tr_alpha_HS}. 

Let us now consider more general weight functions $f$ with $f(k) \geq n(k)$ for all $k$ and in particular the weight $m^{(\gamma)}(k)$ from \eqref{weight_m_gamma} which we use later. Note that for those weights $\alpha_f \geq \alpha_n$.

\begin{lemma}\label{lem:density_conv_alpha_f}
Let $\psi \in L^2(\RRR^{3N})$ be antisymmetric and normalized, and let $\varphi_1,\ldots,\varphi_N \in L^2(\RRR^3)$ be orthonormal. Then, for all $f$ with $f(k)\geq k/N ~\forall k=1,\ldots,N$,
\begin{equation}\label{bound_tr_HS_alpha_f}
\norm[\tr]{\gamma^{\bigwedge \varphi_j}_1 - \gamma^{\psi}_1}^2 \leq 8 \, \alpha_f, \qquad N \norm[\HS]{\gamma^{\bigwedge \varphi_j}_1 - \gamma^{\psi}_1}^2 \leq 2 \, \alpha_f,
\end{equation}
\begin{equation}\label{bound_tr_alpha_m}
\alpha_{m^{(\gamma)}} \leq  N^{1-\gamma} \alpha_n.
\end{equation}
\end{lemma}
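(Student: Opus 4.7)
The plan is to reduce both inequalities to simple pointwise comparisons between weight functions, combined with the already-established Lemma~\ref{lem:density_conv}. The core observation is that because the $P^{(N,k)}$ are mutually orthogonal projectors summing to $\mathbbm{1}$ (see \eqref{P_Nk_properties}), the quantities $p_k := \langle \psi, P^{(N,k)} \psi \rangle$ are non-negative and sum to $1$, so $\alpha_f = \sum_{k=0}^N f(k)\, p_k$ is monotone in the weight function: if $f(k) \leq g(k)$ for every $k$, then $\alpha_f \leq \alpha_g$.

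For the first pair of inequalities \eqref{bound_tr_HS_alpha_f}, I would apply this monotonicity with $g = f$ and the lower weight $n(k) = k/N$, which satisfies $n \leq f$ pointwise by hypothesis (note $n(0) = 0 = f(0)$ so the $k=0$ term is irrelevant). This yields $\alpha_n \leq \alpha_f$, and then plugging into the bounds \eqref{bound_tr_alpha_HS} and \eqref{bound_HS_alpha_tr} of Lemma~\ref{lem:density_conv} immediately gives the desired trace-norm and Hilbert-Schmidt bounds.

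For \eqref{bound_tr_alpha_m}, I would verify the pointwise estimate $m^{(\gamma)}(k) \leq N^{1-\gamma}\, n(k)$ for all $k \in \{0, 1, \ldots, N\}$ by splitting into the two cases appearing in the definition \eqref{weight_m_gamma}. For $k \leq N^{\gamma}$ one has $m^{(\gamma)}(k) = k N^{-\gamma} = N^{1-\gamma}(k/N) = N^{1-\gamma} n(k)$, so the inequality holds with equality. For $k > N^{\gamma}$ one has $m^{(\gamma)}(k) = 1 < k N^{-\gamma} = N^{1-\gamma} n(k)$. Applying the monotonicity observation to $m^{(\gamma)}$ and $N^{1-\gamma} n$ then gives $\alpha_{m^{(\gamma)}} \leq N^{1-\gamma} \alpha_n$.

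There is no real obstacle here: the entire lemma follows from the linearity of $\alpha_f$ in $f$ together with the non-negativity of the spectral weights $p_k$, once the two pointwise weight comparisons have been noted. The only point worth being careful about is the boundary case $k = 0$ (where $n$ and $f$ both vanish by the weight-function convention) and the transition point $k = N^{\gamma}$ in the definition of $m^{(\gamma)}$, both of which are handled by the case analysis above.
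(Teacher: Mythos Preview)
Your proposal is correct and is essentially the same argument as the paper's: the first pair of inequalities is obtained from $\alpha_n \leq \alpha_f$ together with Lemma~\ref{lem:density_conv}, and the bound $\alpha_{m^{(\gamma)}} \leq N^{1-\gamma}\alpha_n$ follows from the pointwise comparison $m^{(\gamma)}(k) \leq N^{1-\gamma}\,k/N$, which the paper verifies by the same two-case split of the sum.
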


Thus, convergence of $\alpha_f$ to zero still implies convergence of $\gamma_1^\psi$ to $\gamma^{\bigwedge \varphi_j}_1$, but in general not the other way around. We use \eqref{bound_tr_alpha_m} to express some of our main result only in terms of trace norms.

Let us make a brief remark about convergence in operator norm. Note that $||\gamma^{\psi}_1||_{\op} \leq N^{-1}$ for antisymmetric $\psi$, so a possible indicator of convergence would be the operator norm times $N$. This, however, is not a good type of convergence for our purpose, since the operator norm is given by the largest eigenvalue (which at most can be $N^{-1}$ for fermionic density matrices; recall our choice of normalization). Thus, while convergence of $N$ times the operator norm does imply convergence of $\alpha_n$, the opposite is not true. One orbital not in the antisymmetrized product of the $\varphi_1,\ldots,\varphi_N$ is enough to let the operator norm of $N$ times the difference between the density matrices be equal to one, while $\alpha_n$ converges to zero.

\subsection{Main Theorems in Terms of $\alpha$}\label{sec:main_theorem_mf_general_v}
Let us state and discuss two Theorems with bounds on $\alpha_n$ and $\alpha_{m^{(\gamma)}}$ which hold for very general Hamiltonians. From these theorems we can then deduce Theorem~\ref{thm:E_kin_only}.

Theorems \ref{thm:estimates_terms_alpha_dot_beta_n} and \ref{thm:estimates_terms_alpha_dot_beta_general} are of the form: Given certain properties of the solutions to the fermionic Hartree equations, the mean-field approximation for the dynamics is good, i.e., $\alpha_f(t)\leq C(t)\big(\alpha_f(0)+N^{-\delta}\big)$ for some $\delta>0$. We consider wave functions $\psi^t \in L^2(\RRR^{3N})$ that are solutions to the Schr\"odinger equation \eqref{Schr_intro} with self-adjoint Hamiltonian \eqref{Schr_intro_H} with real and possibly scaled interaction $v^{(N)}(x)=v^{(N)}(-x)$. The most important example for the free part $H_j^0$ is the non-relativistic free Hamiltonian with external field, $H_j^0=-\Delta_j+w^{(N)}(x_j)$, but we could also replace the Laplacian by relativistic operators like $\sqrt{-\Delta +m^2}-m$ ($m>0$) or $|\nabla|$. The fermionic mean-field equations for the one-particle wave functions $\varphi^t_1,\ldots,\varphi^t_N \in L^2(\RRR^3)$ are given by \eqref{hartree_intro}. The first theorem gives a bound on $\alpha_n(t) = \SCP{\psi^t}{q_1^t\psi^t}$.

\begin{theorem}\label{thm:estimates_terms_alpha_dot_beta_n}
Let $t\in[0,T)$ for some $0<T\in \RRR\cup\infty$. Let $\psi^t \in L^2(\RRR^{3N})$ be a solution to the Schr\"odinger equation \eqref{Schr_intro} with Hamiltonian \eqref{Schr_intro_H} with antisymmetric initial condition $\psi^0 \in L^2(\RRR^{3N})$. Let $\varphi_1^t,\ldots,\varphi_N^t \in L^2(\RRR^3)$ be solutions to the fermionic Hartree equations \eqref{hartree_intro} with orthonormal initial conditions $\varphi_1^0,\ldots,\varphi_N^0 \in L^2(\RRR^3)$.

We assume that $v^{(N)}$ and $\rho_N^t = \sum_{i=1}^N |\varphi_i^t|^2$ for all $t\in[0,T)$ are such that there is a positive $D(t)$ (independent of $N$), such that
\begin{equation}\label{alpha_dot_n_ass_2}
\sup_{y\in\RRR^3} \Big(\big(v^{(N)}\big)^2*\rho_N^t\Big)(y) \leq D(t) \, N^{-1}.
\end{equation}
Then there is a positive $C(t) = 9\sqrt{D(t)}$, such that
\begin{equation}\label{main_alpha_ineq_n}
\alpha_n(t) \leq e^{\int_0^t C(s) ds} \, \alpha_n(0) + \left( e^{\int_0^t C(s) ds} - 1 \right) N^{-1}.
\end{equation}
\end{theorem}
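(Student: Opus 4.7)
The plan is to establish a Gronwall-type differential inequality for $\alpha_n(t) := \SCP{\psi^t}{q_1^t \psi^t}$ and then conclude by Gronwall's lemma.

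First, I would compute $\partial_t \alpha_n(t)$. The Hartree equation \eqref{hartree_intro} implies $\partial_t p_1^t = -i[H_1^0 + V_1^{\mf}, p_1^t]$, with $V_1^{\mf} := (v^{(N)} * \rho_N^t)(x_1)$, hence $\partial_t q_1^t = -i[H_1^0 + V_1^{\mf}, q_1^t]$. Combining with the Schr\"odinger equation and noting that $H_1^0$ cancels between the two evolutions while the other $H_j^0$, $j \neq 1$, commute with $q_1^t$, all single-particle terms drop out. Antisymmetry of $\psi^t$ makes every pair $V_{1j} := v^{(N)}(x_1-x_j)$, $j \neq 1$, contribute equally, so
\begin{equation*}
\partial_t \alpha_n(t) = i(N-1)\SCP{\psi^t}{[V_{12}, q_1^t]\psi^t} - i\SCP{\psi^t}{[V_1^{\mf}, q_1^t]\psi^t}.
\end{equation*}

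Second, I would decompose $V_{12}$ in the first commutator by inserting $\id = p_2 + q_2$ on both sides, giving four $p_2/q_2$ blocks. The key analytic input is the operator bound
\begin{equation*}
\norm[\op]{V_{12}\, p_2}^2 = \norm[\op]{p_2 V_{12}^2 p_2} \leq \sup_{x_1 \in \RRR^3} \sum_{j=1}^N \int |\varphi_j^t(y)|^2 |v^{(N)}(x_1-y)|^2 \, d^3y = \norm[\infty]{(v^{(N)})^2 * \rho_N^t} \leq D(t)/N,
\end{equation*}
which follows because $p_2 V_{12}^2 p_2$ is a positive operator with range in the $N$-dimensional subspace $\Span\{\varphi_1^t,\ldots,\varphi_N^t\}$ on particle~$2$, so its operator norm is controlled by its partial trace in $x_2$. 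Combined with $\norm{q_1\psi^t}^2 = \alpha_n$ and the two-particle counting identity $\norm{q_1 q_2 \psi^t}^2 \leq \tfrac{N}{N-1}\,\alpha_n$ (which follows from $\sum_{i\neq j} q_i q_j = N\widehat n(N\widehat n - 1)$, antisymmetry, and $\widehat n^2 \leq \widehat n$), this will control the four blocks.

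Third, the ``direct'' block $(N-1) p_2 V_{12} p_2$ is paired against $V_1^{\mf}$: the partial-trace identity $\tr_2(p_2 V_{12}) = V_1^{\mf}$, combined with fermionic antisymmetry of $\psi^t$, enforces a near-cancellation that removes the otherwise-divergent $(N-1)$ prefactor and leaves only a small exchange-type remainder controlled by $\norm[\op]{V_{12} p_2}$. The three off-diagonal blocks involving a factor $q_2$ next to $V_{12}$ are bounded directly by Cauchy--Schwarz, using the two norm estimates above. Collecting the contributions from all blocks yields $|\partial_t \alpha_n(t)| \leq 9\sqrt{D(t)}\,(\alpha_n(t) + N^{-1})$, and Gronwall's lemma applied to $\partial_t(\alpha_n + N^{-1}) \leq C(t)(\alpha_n + N^{-1})$ delivers \eqref{main_alpha_ineq_n}.

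The main obstacle is precisely the direct-block cancellation: a naive Cauchy--Schwarz on $(N-1) \SCP{\psi}{p_2 V_{12} p_2 q_1 \psi}$ using only the bound $\norm[\op]{V_{12} p_2} \leq \sqrt{D(t)/N}$ together with $\norm{q_1 q_2 \psi}^2 \lesssim \alpha_n$ would produce an error of order $\sqrt{N D(t)\,\alpha_n}$, which does \emph{not} close into a Gronwall form. The proof must therefore use the fine matching of $(N-1) p_2 V_{12} p_2$ with $V_1^{\mf}$ through the partial-trace identity and the antisymmetry of $\psi^t$, so that the apparent factor $N$ is absorbed by the $N$ orbitals filling the range of $p_2$ rather than producing a blowup.
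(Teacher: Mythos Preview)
Your overall strategy---compute $\partial_t\alpha_n$, split into $p_2/q_2$ blocks, bound each, and close by Gronwall---matches the paper exactly. But the analytic input you single out, the operator bound $\|V_{12}p_2\|_{\op}^2\le D(t)/N$, is correct yet \emph{too weak} to close two of the three blocks, and your treatment of the direct block is too vague at the decisive point.

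Concretely, take the block $q_1q_2\,(N-1)V_{12}\,p_1q_2$ (the paper's term $(III)_n$). Direct Cauchy--Schwarz with your two ingredients gives
\[
2(N-1)\,\|q_1q_2\psi\|\,\|V_{12}p_1q_2\psi\|\;\le\;2(N-1)\sqrt{\alpha_n}\cdot\|V_{12}p_1\|_{\op}\,\|q_2\psi\|\;\lesssim\;\sqrt{N\,D(t)}\,\alpha_n,
\]
which is a factor $\sqrt{N}$ too large. The paper repairs this not with an operator norm but with the antisymmetry-improved estimate $\langle q_2\psi,\,p_1V_{12}^2p_1\,q_2\psi\rangle\le (N-1)^{-1}\big(\sup_y(v^{(N)})^2*\rho_N\big)\|q_2\psi\|^2$ (their Lemma~\ref{lem:psi_op_psi_diag_tr}), where the extra $(N-1)^{-1}$ comes from the fact that $q_2\psi$ is still antisymmetric in $x_1,x_3,\dots,x_N$ and each rank-one piece $p_1^{\chi_i}$ of $p_1V_{12}^2p_1$ has expectation $\le(N-1)^{-1}$ on such states. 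The same issue arises for the $q_1q_2\,(N-1)V_{12}\,p_1p_2$ block (the paper's $(II)_n$): direct Cauchy--Schwarz yields $\sim\sqrt{N D(t)\,\alpha_n}$, which does not close; the paper instead symmetrizes $\sum_{m=2}^N q_m V_{1m}p_1p_m$ to move one $q$ to the other side of the scalar product before estimating.

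For the direct block $(I)_n$ you correctly flag the obstacle, but ``partial-trace identity plus antisymmetry leaves an exchange-type remainder'' does not yet name the mechanism. What the paper actually does is diagonalize $p_2V_{12}p_2=\sum_i\lambda_i(x_1)p_2^{\chi_i^{x_1}}$; then, on antisymmetric vectors, $(N-1)p_2V_{12}p_2-V_1^{\mf}=-\sum_i\lambda_i(x_1)\,q_{\neq1}^{\chi_i^{x_1}}$ with $q_{\neq1}^{\chi}:=1-\sum_{m\ge2}p_m^{\chi}$, and the identity $\sum_i q_{\neq1}^{\chi_i}=(N-1)q_2+1$ produces the missing $q$-projector. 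The remainder is then controlled by $\sum_i\lambda_i^2\le \sup_{x_1}\int (v^{(N)})^2\rho_N$, which is precisely the assumption~\eqref{alpha_dot_n_ass_2}. This is the step your sketch does not supply.
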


\begin{proof}
See Section~\ref{sec:proofs_main_theorems_gen}. \phantom\qedhere
\end{proof}

\noindent\textbf{Remarks.}
\begin{enumerate}
\setcounter{enumi}{\theremarks}
\item The corresponding bounds for the reduced one-particle density matrices follow from Lemma~\ref{lem:density_conv}.

\item The condition \eqref{alpha_dot_n_ass_2} is only a condition on the solutions to the fermionic Hartree equations \eqref{hartree_intro}, and not on the solutions to the Schr\"odinger equation \eqref{Schr_intro}. Note that for solutions $\varphi_1^t,\ldots,\varphi_N^t \in H^1(\RRR^3)$ (the first Sobolev space) and interactions with at most a singularity $|x|^{-1}$, the left-hand side of \eqref{alpha_dot_n_ass_2} is always finite due to Hardy's inequality. Therefore, the challenge lies in the $N$-dependence on the right-hand side of \eqref{alpha_dot_n_ass_2}. 

\item Note that condition \eqref{alpha_dot_n_ass_2} implies that (by Cauchy-Schwarz and $\int \rho_N^t=N$)
\begin{equation}
\sup_{y\in\RRR^3} \Big(\big\lvert v^{(N)} \big\rvert*\rho_N^t\Big)(y) \leq \sqrt{D(t)},
\end{equation}
i.e., the scaled mean-field interaction is everywhere bounded.

\item\label{itm:fluctuations} There is an interesting connection between condition \eqref{alpha_dot_n_ass_2} and the fluctuations around the mean-field. At each point $y \in \RRR^3$, let us denote the ``fluctuations'' in the interaction at $y$ in the state $\phi = \bigwedge_{j=1}^N \varphi_j^t$ by $\Var[\phi]\big(\sum_{k=1}^N v^{(N)}_{ky}\big)$, with the definition $\Var[\phi](X) = \SCP{\phi}{X^2\phi} - \SCP{\phi}{X\phi}^2$, and where $v_{ky}=v(x_k-y)$ and $x_1,\ldots,x_N$ denote the integration variables in the scalar product. Then, by a simple calculation, we find $\Var[\phi]\big(\sum_{k=1}^N v^{(N)}_{ky}\big) \leq \big((v^{(N)})^2*\rho_N^t\big)(y)$. Therefore, if condition \eqref{alpha_dot_n_ass_2} holds, then the fluctuations around the mean-field vanish at each point for large $N$, with rate $N^{-1}$. Note that $N^{-1}$ is the typical size of fluctuations in the (weak) law of large numbers, for independently identically distributed random variables. It is therefore not surprising that under this condition the derivation of the mean-field dynamics succeeds (although condition \eqref{alpha_dot_n_ass_2} could well be too restrictive; see also Theorem~\ref{thm:estimates_terms_alpha_dot_beta_general}).

\item\label{itm:exch} Theorem~\ref{thm:estimates_terms_alpha_dot_beta_n} shows indirectly that under the condition \eqref{alpha_dot_n_ass_2} the scaled exchange term is at most of $\bigO(N^{-1/2})$. (For the simple example of plane waves in a box of size $N$ and Coulomb interaction with scaling exponent $\beta = 2/3$, it is actually of $\bigO(N^{-2/3})$.) This is so because an exchange term of $\bigO(N^{-\delta})$ gives an error term of $\bigO(N^{-2\delta})$ in the $\alpha_n$ estimate. We show this in Remark \ref{itm:exch_term_order}, following the proof in Section~\ref{sec:proofs_main_theorems_gen}.
\end{enumerate}
\setcounter{remarks}{\theenumi}

The condition \eqref{alpha_dot_n_ass_2} can be relaxed and replaced by other conditions if we use the weight function $m^{(\gamma)}(k)$ from \eqref{weight_m_gamma}. This allows to treat more singular interactions and smaller scaling exponents. Let us first summarize the precise assumptions that we need on the scaled interaction $v^{(N)}$ and the density $\rho_N^t$ and afterwards state the theorem.

\begin{assumption}\label{ass:for_main_thm}
For all $t\in[0,T)$, $\rho_N^t = \sum_{i=1}^N |\varphi_i^t|^2$ and $v^{(N)}$ are such that there are a (possibly $N$-dependent) volume $\Omega_N\subset\RRR^3$, positive $D_i(t)$ (independent of $N$) and $0<\gamma\leq 1$, such that
\begin{equation}\label{alpha_dot_m_ass_2}
\sup_{y\in\RRR^3} \Big(\big(v^{(N)}\big)^2*\rho_N^t\Big)(y) \leq D_1(t) \, N^{-\gamma},
\end{equation}
\begin{equation}\label{alpha_dot_m_ass_3}
\int \bigg(\big(v^{(N)}\big)^2*\rho^t_N\bigg)(y)\,\rho^t_N(y)\,d^3y \leq D_2(t),
\end{equation}
\begin{equation}\label{alpha_dot_m_ass_4}
\sup_{y\in\RRR^3} \int_{\Omega_N+y} \Big(v^{(N)}(y-x)\Big)^2\rho^t_N(x)\,d^3x \leq D_3(t) \, N^{-1},
\end{equation}
\begin{equation}\label{alpha_dot_m_ass_5}
\sup_{y \in \RRR^3 \setminus \Omega_N} \big|v^{(N)}(y)\big| \leq \sqrt{D_4(t)} \, N^{-1/2-\gamma/2}.
\end{equation}
\end{assumption}

Under this assumption we can conclude convergence of $\alpha_{m^{(\gamma)}}(t)$.

\begin{theorem}\label{thm:estimates_terms_alpha_dot_beta_general}
Let $t\in[0,T)$ for some $0<T\in \RRR\cup\infty$. Let $\psi^t \in L^2(\RRR^{3N})$ be a solution to the Schr\"odinger equation \eqref{Schr_intro} with Hamiltonian \eqref{Schr_intro_H} with antisymmetric initial condition $\psi^0 \in L^2(\RRR^{3N})$. Let $\varphi_1^t,\ldots,\varphi_N^t \in L^2(\RRR^3)$ be solutions to the fermionic Hartree equations \eqref{hartree_intro} with orthonormal initial conditions $\varphi_1^0,\ldots,\varphi_N^0 \in L^2(\RRR^3)$.

We assume that $v^{(N)}$ and $\rho_N^t := \sum_{i=1}^N |\varphi_i^t|^2$ for all $t\in[0,T)$ are such that Assumption~\ref{ass:for_main_thm} holds.
Then there is a positive $C(t)= 16\sqrt{3}\, \max\big\{ D_1(t), D_2(t), D_3(t), D_4(t) \big\}^{1/2}$, such that
\begin{equation}\label{main_alpha_ineq_m}
\alpha_{m^{(\gamma)}}(t) \leq e^{\int_0^t C(s) ds} \, \alpha_{m^{(\gamma)}}(0) + \left( e^{\int_0^t C(s) ds} - 1 \right) N^{-\gamma}.
\end{equation}
\end{theorem}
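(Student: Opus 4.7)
The plan is to obtain a Gronwall estimate of the form \eqref{general_form_bound_alpha_dot} for $f=m^{(\gamma)}$ with $\delta=\gamma$ and then invoke Gronwall's lemma directly to get \eqref{main_alpha_ineq_m}. The starting point is to compute $\partial_t \alpha_{m^{(\gamma)}}(t)$ using the Schr\"odinger equation for $\psi^t$ and the fermionic Hartree equations for the $\varphi_j^t$. Writing $H^{\mf} = \sum_j H^0_j + \sum_j (v^{(N)}*\rho_N^t)(x_j)$ for the one-body mean-field Hamiltonian that propagates each $\varphi_j^t$ (and hence the projectors $p_m^t$), a standard calculation gives
\begin{equation*}
\partial_t \alpha_{m^{(\gamma)}}(t) = i\,\bigSCP{\psi^t}{\big[H - H^{\mf},\,\widehat{m^{(\gamma)}}\,\big]\psi^t}.
\end{equation*}
Since $H^0_j$ is one-body and identical in $H$ and $H^{\mf}$, these free parts drop out of $H-H^{\mf}$, and one is left with the commutator of $\widehat{m^{(\gamma)}}$ with the \textbf{interaction fluctuation}
\begin{equation*}
W := \sum_{i<j} v^{(N)}(x_i-x_j) - \sum_{j=1}^N \big(v^{(N)}*\rho_N^t\big)(x_j).
\end{equation*}

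Next, I would use antisymmetry of $\psi^t$ and the resolution of the identity $1 = \sum_{k=0}^N P^{(N,k)}$ together with the relation $\widehat{m^{(\gamma)}}\,P^{(N,k)} = m^{(\gamma)}(k)\,P^{(N,k)}$ to convert the commutator into a sum of ``shift'' contributions. Concretely, inserting $1 = (p_1+q_1)(p_2+q_2)$ on both sides of the two-body potential $v^{(N)}(x_1-x_2)$ splits it into four pieces; only those pieces whose $p/q$ pattern changes the number of $q$-projectors (by $1$ or $2$) yield a non-zero contribution to the commutator with $\widehat{m^{(\gamma)}}$. The $pp$--$pp$ piece is precisely cancelled, after antisymmetrization, by the mean-field term $\sum_j(v^{(N)}*\rho_N^t)(x_j)$, up to an exchange term that is controlled by the same bounds. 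What remains can be organized into three types of terms:
\begin{equation*}
\mathrm{I}\colon\;p_1 p_2\,v^{(N)}_{12}\,p_1 q_2,\qquad \mathrm{II}\colon\;p_1 p_2\,v^{(N)}_{12}\,q_1 q_2,\qquad \mathrm{III}\colon\;p_1 q_2\,v^{(N)}_{12}\,q_1 q_2,
\end{equation*}
each sandwiched between $\widehat{m^{(\gamma)}}$-shifts and summed using the combinatorics of $P^{(N,k)}$. The structural identities from Sections \ref{sec:notation}--\ref{sec:estimates_projectors} (in particular $q_j P^{(N,k)} q_j = P^{(N,k)}$ on symmetric wave functions with an appropriate count of $q$'s, and the near-commutativity of $p_j$ with shifted $\widehat{m^{(\gamma)}}$) reduce each term to a scalar product involving $\psi^t$, a two-body potential, and a bounded function of $\widehat{m^{(\gamma)}}$.

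The heart of the estimate is then to bound each of I, II, III by a constant times $\alpha_{m^{(\gamma)}}(t) + N^{-\gamma}$. For term I, Cauchy--Schwarz together with $\|p_2 v^{(N)}(x_1-x_2) p_2\|_{\op}^2 \leq \sup_y \big((v^{(N)})^2*\rho_N^t\big)(y)$ reduces the bound to $\sqrt{D_1(t)}\,N^{-\gamma/2}$, and the extra $N$-factor coming from the $q_1 p_2 \to q_1 q_2$ shift is absorbed by the weight $m^{(\gamma)}$ because of its saturation property. Term II is the genuinely two-$q$ term and is estimated using $\|v^{(N)}(x_1-x_2) p_1 p_2\|_{\op}^2 \leq \int \big((v^{(N)})^2*\rho_N^t\big)\rho_N^t$, so \eqref{alpha_dot_m_ass_3} together with a weight-shift inequality gives the correct order. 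Term III is the most delicate and is where I expect the main technical obstacle to lie: here both particles are ``bad'', and one has to split $\RRR^3 = \Omega_N \cup (\RRR^3 \setminus \Omega_N)$ in the integration variable $x_1-x_2$. On the near region $\Omega_N$ one uses \eqref{alpha_dot_m_ass_4} (which bounds the local $L^2$-mass of $v^{(N)}$ against $\rho_N^t$), while on the far region one uses the pointwise bound \eqref{alpha_dot_m_ass_5}, $|v^{(N)}|\leq \sqrt{D_4(t)}\,N^{-1/2-\gamma/2}$, times the trivial $L^2$-bound. Both pieces combine to the desired $\sqrt{D_i(t)}\,(\alpha_{m^{(\gamma)}}+N^{-\gamma})$ estimate; it is the interplay between the saturation of $m^{(\gamma)}$ above $k=N^\gamma$ and the $\Omega_N$-splitting that makes this term fit with the same convergence rate.

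Collecting all terms and optimizing the constants yields
\begin{equation*}
\partial_t \alpha_{m^{(\gamma)}}(t) \leq C(t)\,\big(\alpha_{m^{(\gamma)}}(t) + N^{-\gamma}\big), \qquad C(t) = 16\sqrt{3}\,\max\{D_1,D_2,D_3,D_4\}^{1/2},
\end{equation*}
and Gronwall's lemma gives \eqref{main_alpha_ineq_m}. The main obstacles, to summarize, are (i) identifying the combinatorics that cancel the mean-field against the $pp$--$pp$ interaction contribution, and (ii) the correct handling of the $qq$--$qq$ term III via the $\Omega_N$-splitting together with the saturation of $m^{(\gamma)}$, which is precisely what motivates introducing this non-linear weight function in the first place.
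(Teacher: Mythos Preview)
Your overall plan (compute $\partial_t\alpha_{m^{(\gamma)}}$, reduce to three transition terms, bound each by $C(t)(\alpha_{m^{(\gamma)}}+N^{-\gamma})$, Gronwall) matches the paper. However, you have the roles of the three terms and of the four assumptions permuted, and this is not a cosmetic issue: it hides the one genuinely nontrivial step.

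The $\Omega_N$-splitting, i.e.\ assumptions \eqref{alpha_dot_m_ass_4} and \eqref{alpha_dot_m_ass_5}, is used for term $(I)$ (the $q_1p_1$ transition containing $(N-1)p_2v^{(N)}_{12}p_2-V_1^{(N)}$), \emph{not} for term $(III)$. Term $(III)$ is handled directly with assumption \eqref{alpha_dot_m_ass_2} and the saturation bound $\norm{q_1q_2\widehat{m'^{(1)}}^{1/2}\psi}^2\leq 3N^{\gamma-2}\alpha_{m^{(\gamma)}}$, giving $|(III)_m|\leq\sqrt{12}\big(\sup_y (v^{(N)})^2*\rho_N\big)^{1/2}N^{\gamma/2}\alpha_{m^{(\gamma)}}$; no splitting is needed there. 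Conversely, your treatment of term $(I)$ by ``Cauchy--Schwarz plus $\|p_2v^{(N)}_{12}p_2\|_{\op}$'' does not close: that term carries a prefactor $N$ and only one $q$-projector, so a naive bound yields $N\sqrt{\alpha_{m^{(\gamma)}}}$ times something $O(1)$, which is far too large. The mechanism that saves term $(I)$ is not a cancellation of a ``$pp$--$pp$ piece'' (that piece commutes with $\widehat{m^{(\gamma)}}$ anyway and never appears), but rather the identity obtained by diagonalizing $p_2v^{(N)}_{12}p_2=\sum_i\lambda_i(x_1)p_2^{\chi_i^{x_1}}$ and writing $\sum_{m\geq 2}p_m v^{(N)}_{1m}p_m - V_1^{(N)} = -\sum_i\lambda_i(x_1)\,q_{\neq 1}^{\chi_i^{x_1}}$. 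This manufactures an \emph{extra} $q$-like projector from the mean-field cancellation, and it is to control $\sum_i\lambda_i^2$ versus $\sup_i\lambda_i$ that one splits each eigenvalue into a near part on $\Omega_N$ (bounded via \eqref{alpha_dot_m_ass_4}) and a far part on $\Omega_N^c$ (bounded via \eqref{alpha_dot_m_ass_5}). Without this diagonalization step and the correct placement of the $\Omega_N$-splitting, term $(I)$ cannot be closed at the rate $N^{-\gamma}$.
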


\begin{proof}
See Section~\ref{sec:proofs_main_theorems_gen}. \phantom\qedhere
\end{proof}

\noindent\textbf{Remarks.}
\begin{enumerate}
\setcounter{enumi}{\theremarks}

\item The corresponding bounds for the reduced one-particle density matrices follow from Lemmas~\ref{lem:density_conv} and \ref{lem:density_conv_alpha_f}.

\end{enumerate}
\setcounter{remarks}{\theenumi}

\subsection{Sketch of Proof}\label{sec:sketch_of_proof}
Let us give a brief overview of how the bounds \eqref{main_alpha_ineq_n} and \eqref{main_alpha_ineq_m} can be obtained. We also aim to illustrate that the idea to ``count the number of particles'' not in the antisymmetrized product is very natural and has a clear physical interpretation which is reflected in the proof. We first consider the weight function $n(k)=k/N$. Note that $q_1^t$ is a solution to the Heisenberg equation of motion $i\partial_t q_1^t = [H_1^{\mf},q_1^t]$, where $H_1^{\mf}=H_1^0 + V_1^{(N)}$ is the ``mean-field Hamiltonian'' from the right-hand side of  \eqref{hartree_intro}, with $V_1^{(N)} = \big( v^{(N)}*\rho_N^t \big)$. Using the antisymmetry of $\psi^t$, we then find
\begin{equation}\label{summary_dt_alpha}
\partial_t \alpha_n(t) = \partial_t \bigSCP{\psi^t}{q_1^t\psi^t} = i \, \bigSCP{\psi^t}{\left[ (N-1) v^{(N)}_{12} - V_1^{(N)},q_1^t\right]\psi^t} = (I)_n + (II)_n + (III)_n,
\end{equation}
where, using $p_1^t+q_1^t=1=p_2^t+q_2^t$,
\begin{align}
\label{alpha_derivative_n_gen_I}(I)_n &= 2 \, \Im\, \bigSCP{\psi^t}{q_1^t\bigg( (N-1)p_2^tv^{(N)}_{12}p_2^t - V^{(N)}_1 \bigg) p_1^t \psi^t}, \\
\label{alpha_derivative_n_gen_II}(II)_n &= 2 \, \Im\, \bigSCP{\psi^t}{q_1^tq_2^t \, (N-1)v^{(N)}_{12} \, p_1^tp_2^t \psi^t}, \\
\label{alpha_derivative_n_gen_III}(III)_n &= 2 \, \Im\, \bigSCP{\psi^t}{q_1^tq_2^t (N-1)v^{(N)}_{12} p_1^tq_2^t \psi^t}.
\end{align}
Let us emphasize here that the kinetic and external field terms coming from the Schr\"odinger and the fermionic mean-field equations cancel, which is why the theorems in Section~\ref{sec:main_theorem_mf_general_v} hold for any $H_j^0$. Also, terms $(II)_n$ and $(III)_n$ do not depend on the mean-field $V^{(N)}_1$; it is only term $(I)_n$ where a difference between Schr\"odinger interaction and the mean-field enters. Note that these three terms have an intuitive explanation. The change $\partial_t \alpha_n(t)$ in the ``number of bad particles'' is given by three different kind of transitions due to the microscopic interaction: transitions between one good, one bad particle and two good particles (term $(I)_n$); between two bad and two good particles (term $(II)_n$); and transitions between two bad and one good, one bad particle (term $(III)_n$). Note, however, that for $\alpha_n(0)=0$, i.e., $\psi^0 = \bigwedge_{j=1}^N \varphi_j^0$, we find $\partial_t\alpha_n(t) \big|_{t=0} = 0$. Thus, what causes the deviations from the antisymmetrized product structure in the first place, is a higher order effect, viz., the fluctuations around the mean-field. The exponential growth in time in \eqref{general_form_bound} comes from the fact that the larger the number of bad particles, the more possibilities do good particles have to become correlated, i.e., as in \eqref{general_form_bound_alpha_dot}, the change in the number of bad particles at some time $t$ is expected to be proportional to the number of bad particles at this time, plus fluctuations.

Let us briefly discuss how the three terms in \eqref{summary_dt_alpha} can be bounded rigorously (Lemma~\ref{lem:estimates_terms_alpha_dot_beta}). Let us first show why the terms can be bounded by $C_N \alpha_n(t)$ plus error terms of $\bigO(N^{-1})$, where $C_N$ denotes some possibly $N$-dependent constant, and afterwards discuss why $C_N$ in fact does \emph{not} depend on $N$ for the scalings discussed in Sections \ref{sec:density_O_N} and \ref{sec:density_O_1}. Recall that $\norm{q_1^t\psi^t} = \sqrt{\SCP{\psi^t}{q_1^t\psi^t}} = \sqrt{\alpha_n(t)}$. Thus, by using Cauchy-Schwarz, term $(III)_n$ is bounded by $C_N \alpha_n(t)$, assuming $\norm[\op]{v_{12}p_1^t}$ is bounded. In term $(II)_n$ we also have two $q$'s available, but both $q$'s are left from $v_{12}^{(N)}$, so we cannot directly apply Cauchy-Schwarz. However, there is a trick that uses the antisymmetry of $\psi^t$ to shift the $q_2^t$ to the right side of the scalar product, on the expense of a boundary term of $\bigO(N^{-1})$. Then, by Cauchy-Schwarz, we can again conclude $(II)_n \leq C_N (\alpha_n(t) + N^{-1})$. Term $(I)_n$ is the crucial term in the sense that there is only one $q$-projector available (which is not enough for the desired bound) and one thus has to use a cancellation between the microscopic interaction and the mean-field. From the proof of Lemma~\ref{lem:estimates_terms_alpha_dot_beta} we see that one can indeed extract an extra $q$-projector from the difference $(N-1)p_2^tv_{12}^{(N)}p_2^t - V_1^{(N)}$, again at the expense of an error term of $\bigO(N^{-1})$, which yields the desired bound.\footnote{Note that here lies the crucial difference compared to the bosonic Hartree equation, as treated in \cite{pickl:2011method,pickl:2010hartree}. There, since there is just one orbital $\varphi$, i.e., $p_m = \ketbra{\varphi}{\varphi}_m$, and since $v^{(N)} = N^{-1} v$, one can directly calculate that
\begin{align}
V^{(N)}_1 - (N-1)p_2v^{(N)}_{12} p_2 = \left(v*|\varphi|^2\right)(x_1) \, (q_2 + N^{-1}p_2),
\end{align}
i.e., the corresponding term $(I)_n$ can directly be seen to be bounded by $C_N \big(\alpha_n(t) + N^{-1} \big)$. (Note that in \cite{pickl:2011method,pickl:2010hartree}, the term $\left(v*|\varphi|^2\right)(x_1) \, q_2$ is usually regarded as being part of term $(III)_n$.)}

For Coulomb interaction and the corresponding scaling $v^{(N)} = N^{-2/3}v$, there is a prefactor $N^{1/3}$ in front of the scalar products in the three terms from \eqref{summary_dt_alpha}. For the desired bound one thus needs to gain an extra $N^{-1/3}$ from the scalar products, so that the constants $C_N$ from above are of $\bigO(1)$, i.e., $N$-independent. For the case discussed in Section~\ref{sec:density_O_1}, it is the long-range decay of the Coulomb interaction that gives us this additional factor. In the estimates, the $p$ projectors are used to smooth out the singularity, which leads to terms like $v*\rho_N^t$ or $v^2*\rho_N^t$. From these terms we gain the additional $N^{-1/3}$ by using kinetic energy inequalities (Lieb-Thirring inequalities, see Section~\ref{sec:mean-field_scalings_general}) which take into account the orthonormality of $\varphi_1^t,\ldots,\varphi_N^t$ (fermi pressure), i.e., the fact that the system volume is of order $N$ when the total kinetic energy is bounded by $AN$. This leads to the results in Section~\ref{sec:main_theorem_mf_x-s}. In the semiclassical case, we have to gain the additional $N^{-1/3}$ by other means, namely from condition \eqref{sc_thm_cond1}. In the estimates, this is reflected by using $N^{-1}\big\lvert\big\lvert p_1^te^{ik\cdot x}q_1^t \big\rvert\big\rvert_{\tr} \leq c(t) N^{-1/3}$, see Section~\ref{sec:proof_sc_scaling}.

Finally, let us discuss where and how the use of the weight function \eqref{weight_m_gamma} helps in the estimates. For strong singularities or weak scalings, term $(III)_n$ can be problematic, since there is only one projector $p$ available to smooth out the singularity and gain the additional $N^{-1/3}$ by the kinetic energy inequalities. Here we would like to improve the estimate. For general weight functions, $\partial_t \alpha_f(t)$ is calculated in Section~\ref{sec:alpha_dot}. Each of the transitions between good and bad particles is now weighted with the corresponding change in the weight function, i.e., a derivative of the weight function appears. More exactly, $\partial_t \alpha_f(t)$ is again given by three terms similar to \eqref{summary_dt_alpha}, but with $\psi^t$ replaced by
\begin{equation}\label{der_f_psi}
\sqrt{N}\widehat{f'^{(d)}}^{1/2}\psi^t,
\end{equation}
where $f'^{(d)}$ is a discrete derivative of $f$. (This is consistent with \eqref{summary_dt_alpha}, since the derivative of $n(k)=k/N$ equals $N^{-1}$.) The key observation is now that the derivative of the weight function $m^{(\gamma)}$ is zero for all $k>N^{\gamma}$, i.e., all contributions coming from $P^{(N,k)}\psi^t$ with $k>N^{\gamma}$ vanish. This helps us to make term $(III)_{m^{(\gamma)}}$ small, which only gives contributions from large $k$ due to the three projectors $q$.

\section{Notation and Preliminaries}\label{sec:notation}
We summarize here some notation, list often used inequalities and establish some basic properties of the projectors $p_m^{\varphi}$ that we use in the proofs throughout the following sections. The Hilbert space of complex square integrable functions on $\RRR^d$ is denoted by $L^2(\RRR^d) = L^2(\RRR^d, \CCC)$ and $H^1(\RRR^d) = \left\{ f \in L^2(\RRR^d): \norm{\nabla f} < \infty \right\}$ denotes the first Sobolev space. For $f\in L^2(\RRR^d)$ we sometimes write
\begin{equation}
\norm{f}^2 = \scp{f}{f} = \int_{\RRR^d} |f(x)|^2 \, d^dx = \int |f|^2.
\end{equation}
We denote by $\SCP{\cdot}{\cdot}_{a+1,\ldots,N}$ the scalar product only in the variables $x_{a+1},\ldots,x_N$, i.e., it is a ``partial trace'' or ``partial scalar product'', formally defined for any $\chi,\psi\in L^2(\RRR^{3N})$ by
\begin{equation}\label{partial_scp}
\SCP{\psi}{\chi}_{a+1,\ldots,N}(x_1,\ldots,x_a) := \int d^3x_{a+1} \ldots \int d^3x_N \, \psi^*(x_1,\ldots,x_N) \chi(x_1,\ldots,x_N),
\end{equation}
which should be regarded as a vector in $L^1(\RRR^{3a})$ (for $\chi=\psi$, it is the diagonal of the reduced $a$-particle density matrix, a generalization of \eqref{definition_dens_mat_one_part}). In the same style we denote by $\ket{\cdot}_m$ a vector in $L^2(\RRR^3)$ acting only on the $m$-th variable of $L^2\big((\RRR^3)^N\big)$, by $\bra{\cdot}_m$ its dual, and by $\scp{\cdot}{\cdot}_m$ the scalar product only in the $m$-th variable. Given a function $h:\RRR^d\to\RRR$ we introduce $h_{12}:\RRR^d \times \RRR^d \to \RRR, h_{12}(x_1,x_2)=h(x_1-x_2)$. In general, subscripts $i_1,\ldots,i_a$ usually indicate that an operator acts only on $x_{i_1},\ldots,x_{i_a}$. We always denote by $B_R(x) = \big\{ y \in \RRR^d: |x-y| < R \big\}$ the open ball with radius $R$ around $x$. For any set $\Omega \subset \RRR^d$ we write $\Omega^c = \RRR^d \setminus \Omega$. 

Note that the operator $p_m^{\varphi}$ from Definition~\ref{def:projectors} is indeed a projector on $L^2(\RRR^{3N})$ and that for $\varphi_i \perp \varphi_j$ and all $m,n=1,\ldots,N$ we have
\begin{equation}
p_m^{\varphi_i} p_m^{\varphi_j} = 0 \quad\text{and}\quad \left[ p_m^{\varphi_i}, p_n^{\varphi_j} \right] = 0.
\end{equation}
From that we conclude that $p_m$ and $q_m=1-p_m$ are projectors with
\begin{equation}
p_m q_m = 0 \quad\text{and}\quad [p_m,p_n] = [p_m,q_n] = [q_m,q_n] = 0,
\end{equation}
for all $m,n=1,\ldots,N$. An important property of $p_m^{\varphi}$ is that for antisymmetric $\psi_{\as} \in L^2(\RRR^{3N})$,
\begin{equation}\label{p_mp_n_as_0}
p_m^{\varphi}p_n^{\varphi} \psi_{\as} = 0
\end{equation}
for all $m \neq n$, since
\begin{align}
\left(p_m^{\varphi}p_n^{\varphi} \psi_{\as}\right)(x_1,\ldots,x_N) &= \varphi(x_m)\varphi(x_n) \int dx_m dx_n \varphi(x_m)^*\varphi(x_n)^* \psi_{\as}(\ldots,x_m,\ldots,x_n,\ldots) \nonumber \\
&= - \varphi(x_m)\varphi(x_n) \int dx_m dx_n \varphi(x_m)^*\varphi(x_n)^* \psi_{\as}(\ldots,x_n,\ldots,x_m,\ldots) \nonumber \\
&= - \varphi(x_m)\varphi(x_n) \int dx_n dx_m \varphi(x_n)^*\varphi(x_m)^* \psi_{\as}(\ldots,x_m,\ldots,x_n,\ldots).
\end{align}

For the operator norm of the projectors $p_m^{\varphi}$ it makes an important difference if it is calculated on all $L^2$ functions or only on antisymmetric functions in $L^2$, as the following lemma shows.
\begin{lemma}\label{lem:projector_norms}
For any $a=0,\ldots,N$, let $\psi_{\as}^{1,\ldots,a}\in L^2(\RRR^{3N})$ be normalized and antisymmetric in all variables except $x_1,\ldots,x_a$, and let $\varphi \in L^2(\RRR^3)$. Then, for all $m = a+1,\ldots,N$,
\begin{equation}\label{scp_a-N}
\SCP{\psi_{\as}^{1,\ldots,a}}{p_m^{\varphi}\psi_{\as}^{1,\ldots,a}}_{a+1,\ldots,N}(x_1,\ldots,x_a) \leq (N-a)^{-1} \, \SCP{\psi_{\as}^{1,\ldots,a}}{\psi_{\as}^{1,\ldots,a}}_{a+1,\ldots,N}(x_1,\ldots,x_a),
\end{equation}
for almost all $x_1,\ldots,x_a$ (with the definition of $\SCP{\cdot}{\cdot}_{a+1,\ldots,N}$ from \eqref{partial_scp}). 
\end{lemma}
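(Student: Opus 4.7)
The plan is to exploit two features of $\psi_{\as}^{1,\ldots,a}$: its antisymmetry in the variables $x_{a+1},\ldots,x_N$, and the Pauli-type identity \eqref{p_mp_n_as_0}, whose proof only used antisymmetry in the two variables being projected. Abbreviate $\psi = \psi_{\as}^{1,\ldots,a}$ and $N' = N-a$. The cleanest route is to fix $(x_1,\ldots,x_a)$ and view the slice $\tilde\psi(x_{a+1},\ldots,x_N) := \psi(x_1,\ldots,x_N)$ as an element of the antisymmetric subspace of $L^2(\RRR^{3N'})$ for almost every $(x_1,\ldots,x_a)$, so that the partial scalar products in the claim become full scalar products of $\tilde\psi$ on $L^2(\RRR^{3N'})$.

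First I would show that for any $m,n \in \{a+1,\ldots,N\}$,
\begin{equation}
\SCP{\psi}{p_m^{\varphi}\psi}_{a+1,\ldots,N}(x_1,\ldots,x_a) = \SCP{\psi}{p_n^{\varphi}\psi}_{a+1,\ldots,N}(x_1,\ldots,x_a).
\end{equation}
This follows by letting $T_{mn}$ denote the transposition that swaps $x_m$ and $x_n$: since $T_{mn}$ acts only on integrated variables it leaves the partial scalar product invariant, $T_{mn}\psi = -\psi$ by antisymmetry of $\psi$ in $x_m,x_n$, and $T_{mn} p_m^{\varphi} T_{mn} = p_n^{\varphi}$, so the two sign flips cancel. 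Summing over $m = a+1,\ldots,N$ yields
\begin{equation}
N' \, \SCP{\psi}{p_m^{\varphi}\psi}_{a+1,\ldots,N}(x_1,\ldots,x_a) = \SCP{\psi}{P\psi}_{a+1,\ldots,N}(x_1,\ldots,x_a), \qquad P := \sum_{k=a+1}^N p_k^{\varphi}.
\end{equation}

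Next I would show that $P$ acts as an orthogonal projection on functions antisymmetric in $x_{a+1},\ldots,x_N$. Expanding $P^2 = \sum_k p_k^{\varphi} + \sum_{k\neq l} p_k^{\varphi} p_l^{\varphi}$, the calculation in \eqref{p_mp_n_as_0} applies verbatim to give $p_k^{\varphi} p_l^{\varphi}\psi = 0$ for all $k \neq l$ in $\{a+1,\ldots,N\}$, since only antisymmetry in the two variables $x_k,x_l$ was used there. Hence $P^2\psi = P\psi$, so $P$ is a self-adjoint idempotent on the slice and in particular $P \leq \id$.

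Applied to the slice $\tilde\psi$ this gives $\SCP{\psi}{P\psi}_{a+1,\ldots,N}(x_1,\ldots,x_a) \leq \SCP{\psi}{\psi}_{a+1,\ldots,N}(x_1,\ldots,x_a)$ for almost every $(x_1,\ldots,x_a)$, and dividing the previous display by $N' = N-a$ yields \eqref{scp_a-N}. There is no real obstacle; the only subtlety to keep straight is that antisymmetry is assumed only in the integrated variables, but this is exactly what makes both ingredients (the $m$-independence of the partial scalar product and the identity $P^2\psi = P\psi$) available.
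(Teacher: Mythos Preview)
Your proof is correct and follows essentially the same route as the paper: both use antisymmetry to average $p_m^{\varphi}$ into $P=\sum_{k=a+1}^N p_k^{\varphi}$, then invoke the Pauli identity \eqref{p_mp_n_as_0} to obtain $P^2\psi=P\psi$, and conclude $\SCP{\psi}{P\psi}\le\SCP{\psi}{\psi}$. The only cosmetic difference is that the paper reaches this last inequality via Cauchy--Schwarz and a bootstrap ($\SCP{\psi}{P\psi}\le\|\psi\|\,\|P\psi\|$ with $\|P\psi\|^2=\SCP{\psi}{P\psi}$), whereas you phrase it as ``$P$ acts as a projection, hence $P\le\id$''; note that $P$ does not preserve the antisymmetric subspace, so the cleanest justification of your last step is exactly that Cauchy--Schwarz argument.
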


It follows in particular that for all $m = a+1,\ldots,N$
\begin{equation}
\SCP{\psi_{\as}^{1,\ldots,a}}{p_m^{\varphi}\psi_{\as}^{1,\ldots,a}} \leq (N-a)^{-1}
\end{equation}
and, for all antisymmetric normalized $\psi_{\as}\in L^2(\RRR^{3N})$ and all $m=1,\ldots,N$,
\begin{equation}
\SCP{\psi_{\as}}{p_m^{\varphi}\psi_{\as}} \leq N^{-1}.
\end{equation}

\begin{proof}[Proof of Lemma \ref{lem:projector_norms}]
For this proof it is useful to think of $\psi_{\as}^{1,\ldots,a}(x_1,\ldots,x_N)$ as a function in the variables $x_{a+1},\ldots,x_N$, with fixed parameters $x_1,\ldots,x_a$. We can then use $\SCP{\psi_{\as}^{1,\ldots,a}}{\tilde{\psi}_{\as}^{1,\ldots,a}}_{a+1,\ldots,N}$ as scalar product in $x_{a+1},\ldots,x_N$ and define correspondingly $\big|\big|\psi_{\as}^{1,\ldots,a}\big|\big|_{a+1,\ldots,N}^2 = \SCP{\psi_{\as}^{1,\ldots,a}}{\psi_{\as}^{1,\ldots,a}}_{a+1,\ldots,N}$. For ease of notation we do not explicitly write out the dependence on $x_1,\ldots,x_a$ for this proof. Using the antisymmetry of $\psi_{\as}^{1,\ldots,a}$ in $x_{a+1},\ldots,x_N$, Cauchy-Schwarz and \eqref{p_mp_n_as_0}, we find
\begin{align}\label{antisymm1a_estimate}
& \SCP{\psi_{\as}^{1,\ldots,a}}{p_m^{\varphi}\psi_{\as}^{1,\ldots,a}}_{a+1,\ldots,N} \nonumber \\
&\qquad= (N-a)^{-1} 
\bigSCP{\psi_{\as}^{1,\ldots,a}}{\sum_{n=a+1}^N p_n^{\varphi}\psi_{\as}^{1,\ldots,a}}_{a+1,\ldots,N} \nonumber \\
&\qquad\leq (N-a)^{-1} \norm[a+1,\ldots,N]{\psi_{\as}^{1,\ldots,a}} \norm[a+1,\ldots,N]{\sum_{n=a+1}^N p_n^{\varphi}\psi_{\as}^{1,\ldots,a}} \nonumber \\
&\qquad= (N-a)^{-1} \sqrt{\SCP{\psi_{\as}^{1,\ldots,a}}{\psi_{\as}^{1,\ldots,a}}_{a+1,\ldots,N}} \sqrt{\bigSCP{\psi_{\as}^{1,\ldots,a}}{\sum_{\ell=a+1}^N p_{\ell}^{\varphi} \sum_{n=a+1}^N p_n^{\varphi}\psi_{\as}^{1,\ldots,a}}_{a+1,\ldots,N}} \nonumber \\
&\qquad= (N-a)^{-1/2} \sqrt{\SCP{\psi_{\as}^{1,\ldots,a}}{\psi_{\as}^{1,\ldots,a}}_{a+1,\ldots,N}} \sqrt{\SCP{\psi_{\as}^{1,\ldots,a}}{p_m^{\varphi}\psi_{\as}^{1,\ldots,a}}_{a+1,\ldots,N}} ~,
\end{align}
which yields \eqref{scp_a-N}.
\end{proof}

\section{Proof of Results About Density Matrices}\label{sec:density_matrices}
In this section, we prove the results from Section~\ref{sec:dens_mat_summary} about the relation of $\alpha_f$ to the reduced one-particle density matrices of $\psi$ and of the antisymmetrized product state $\bigwedge_{j=1}^N \varphi_j$. For a bounded operator $A$ on a Hilbert space $\Hilbert$, we define the operator norm, trace norm and Hilbert-Schmidt norm by
\begin{equation}
\norm[\op]{A} := \sup_{\psi \in \Hilbert, \norm{\psi}=1} \norm{A\psi}, ~\quad \norm[\tr]{A} := \tr|A| = \sum_i \scp{\phi_i}{|A| \phi_i}, ~\quad \norm[\HS]{A} := \sqrt{\tr(A^*A)},
\end{equation}
where $\{ \phi_i \}_{i\in\NNN}$ is some orthornormal basis and $A^*$ denotes the adjoint of $A$. Note that (for proofs, see, e.g., \cite[chapter~VI]{reedsimon1:1980})
\begin{equation}
\norm[\op]{A} \leq \norm[\HS]{A} \leq \norm[\tr]{A}
\end{equation}
and
\begin{equation}\label{tr_HS_op_ineqs}
\norm[\tr]{AB} \leq \norm[\op]{A} \norm[\tr]{B},~~~ \norm[\HS]{AB} \leq \norm[\op]{A} \norm[\HS]{B},~~~ \norm[\tr]{AB} \leq \norm[\HS]{A} \norm[\HS]{B},
\end{equation}
which we frequently use in the following proof.

\begin{proof}[Proof of Lemma \ref{lem:density_conv}]
First, note that reduced one-particle density matrices have the well-known properties that they are non-negative, i.e., $\big\langle f ,\gamma^{\psi}_1 f \big\rangle \geq 0$ $\forall f \in L^2(\RRR^3)$, that $\big|\big| \gamma^{\psi}_1 \big|\big|_{\tr} = 1$, and that for antisymmetric $\psi \in L^2(\RRR^{3N})$, $\big|\big| \gamma^{\psi}_1 \big|\big|_{\op} \leq N^{-1}$. Recall that we are here concerned with $\alpha_n = \SCP{\psi}{q_1\psi}$, i.e., the $\alpha$-functional with the weight $n(k)=k/N$. Also, recall that $\gamma^{\bigwedge \varphi_j}_1 = N^{-1} p_1$. We first show that $N^{-1}p_1 - p_1 \gamma^{\psi}_1 p_1$ is a non-negative operator with trace norm $\alpha_n$. Note that the operator $p_1\gamma^{\psi}_1p_1$ maps the $N$-dimensional subspace $\Span(\varphi_1,\ldots,\varphi_N)$ to itself, and is non-negative and self-adjoint. We can therefore diagonalize it, i.e., there is an orthonormal basis $\{ \chi_1,\ldots,\chi_N \}$, such that
\begin{equation}
p_1\gamma^{\psi}_1p_1 = \sum_{i=1}^N \lambda_i \, \ketbra{\chi_i}{\chi_i}_1 = \sum_{i=1}^N \lambda_i \, p_1^{\chi_i},
\end{equation}
with $0 \leq \lambda_i \leq N^{-1}$ $\forall i=1,\ldots,N$, since (see also Lemma~\ref{lem:projector_norms})
\begin{equation}
\lambda_i = \scp{\chi_i}{p_1\gamma^{\psi}_1p_1 \chi_i} = \SCP{\psi}{\ketbra{\chi_i}{\chi_i}_1\psi} \leq N^{-1} \SCP{\psi}{\psi} \leq N^{-1}.
\end{equation}
Therefore, $N^{-1} p_1 - p_1\gamma^{\psi}_1p_1 = \sum_{i=1}^N \left(N^{-1} - \lambda_i\right) p_1^{\chi_i}$ is non-negative and 
\begin{equation}\label{trace_mu_phi-mu_psi}
\norm[\tr]{N^{-1} p_1 - p_1\gamma^{\psi}_1p_1} = \tr\left(N^{-1} p_1 - p_1\gamma^{\psi}_1p_1\right) = 1 - \SCP{\psi}{p_1 \psi} = \alpha_n.
\end{equation}

We now show $\norm[\tr]{\gamma^{\bigwedge \varphi_j}_1 - \gamma^{\psi}_1} \leq \sqrt{8 \alpha_n}$. Note that the operators $\gamma_1^{\psi}$, $p_1\gamma_1^{\psi}p_1$ and $q_1\gamma_1^{\psi}q_1$ are non-negative, and that
\begin{equation}\label{trace_pmup_qmuq}
\norm[\tr]{p_1\gamma^{\psi}_1p_1} = \SCP{\psi}{p_1 \psi} = 1-\alpha_n \quad\text{and}\quad \norm[\tr]{q_1\gamma^{\psi}_1q_1} = \SCP{\psi}{q_1 \psi} = \alpha_n.
\end{equation}
By inserting two identities $1=p_1+q_1$ we find, using the triangle inequality, \eqref{trace_mu_phi-mu_psi}, \eqref{trace_pmup_qmuq} and \eqref{tr_HS_op_ineqs},
\begin{align}\label{estimate_trace_norm_pq}
\norm[\tr]{\gamma^{\bigwedge \varphi_j}_1 - \gamma^{\psi}_1} &\leq \norm[\tr]{N^{-1} p_1 - p_1\gamma^{\psi}_1p_1} + \norm[\tr]{p_1\gamma^{\psi}_1q_1} + \norm[\tr]{q_1\gamma^{\psi}_1p_1} + \norm[\tr]{q_1\gamma^{\psi}_1q_1} \nonumber \\
&= \alpha_n + \norm[\tr]{p_1\sqrt{\gamma^{\psi}_1}\sqrt{\gamma^{\psi}_1}q_1} + \norm[\tr]{q_1\sqrt{\gamma^{\psi}_1}\sqrt{\gamma^{\psi}_1}p_1} + \alpha_n \nonumber \\
&\leq 2 \alpha_n + 2\norm[\HS]{\sqrt{\gamma^{\psi}_1}p_1} \norm[\HS]{\sqrt{\gamma^{\psi}_1}q_1} \nonumber \\
&= 2 \alpha_n + 2 \sqrt{\norm[\tr]{p_1\gamma^{\psi}_1p_1} \norm[\tr]{q_1\gamma^{\psi}_1q_1}} \nonumber \\
&= 2 \alpha_n + 2 \sqrt{\alpha_n (1-\alpha_n)}.
\end{align}
Since $0 \leq \alpha_n \leq 1$, it is indeed true that
\begin{equation}\label{alpha_sqrt_ineq}
2 \alpha_n + 2 \sqrt{\alpha_n (1-\alpha_n)} \leq \sqrt{8\alpha_n},
\end{equation}
since the continuous function $f(\alpha)= \sqrt{8\alpha} - 2 \alpha - 2 \sqrt{\alpha (1-\alpha)}$ has its only minimum at $\alpha=1/2$ with $f(1/2)=0$, and also $f(0)=f(1) \geq 0$, thus $f(\alpha)\geq 0$ for all $\alpha\in [0,1]$, i.e., \eqref{alpha_sqrt_ineq} holds.

We now show $2 \alpha_n \leq \norm[\tr]{\gamma^{\bigwedge \varphi_j}_1 - \gamma^{\psi}_1}$. We find, using \eqref{trace_mu_phi-mu_psi}, \eqref{trace_pmup_qmuq}, \eqref{tr_HS_op_ineqs}, $\tr(q_1\gamma^{\psi}_1p_1) = \tr(p_1\gamma^{\psi}_1q_1) = 0$ and $|\tr(A)|\leq \norm[\tr]{A}$, that
\begin{align}\label{alpha_leq_tr}
2 \alpha_n &= \tr\left( \gamma^{\bigwedge \varphi_j}_1 - p_1\gamma^{\psi}_1p_1 \right) + \tr\left( q_1\gamma^{\psi}_1q_1 \right) \nonumber \\
&= \tr\left( \left(\gamma^{\bigwedge \varphi_j}_1 - \gamma^{\psi}_1\right) (p_1-q_1) \right) \nonumber \\
&\leq \norm[\tr]{ \left(\gamma^{\bigwedge \varphi_j}_1 - \gamma^{\psi}_1\right) (p_1-q_1) } \nonumber \\
&\leq \norm[\tr]{ \gamma^{\bigwedge \varphi_j}_1 - \gamma^{\psi}_1 } \Big|\Big|p_1-q_1\Big|\Big|_{\op} \nonumber \\
&= \norm[\tr]{ \gamma^{\bigwedge \varphi_j}_1 - \gamma^{\psi}_1 }.
\end{align}
Note that indeed $\norm[\op]{p_1-q_1}=1$, since for all $f \in L^2(\RRR^3)$,
\begin{equation}
\norm{(p_1-q_1)f}^2 = \scp{f}{(p_1-q_1)^2f}= \scp{f}{(p_1+q_1)f}=\norm{f}^2.
\end{equation}

We now show $\norm[\HS]{\gamma^{\bigwedge \varphi_j}_1 - \gamma^{\psi}_1}^2 \leq 2N^{-1} \alpha_n$. Recall that $\big|\big| \gamma^{\psi}_1 \big|\big|_{\op} \leq N^{-1}$ and $\big|\big| \gamma^{\psi}_1 \big|\big|_{\tr} = 1$. We find, using \eqref{tr_HS_op_ineqs},
\begin{align}\label{HS_leq_alpha}
\norm[\HS]{\gamma^{\bigwedge \varphi_j}_1 - \gamma^{\psi}_1}^2 &= \tr\left( \left(\gamma^{\bigwedge \varphi_j}_1 - \gamma^{\psi}_1\right)^2 \right) \nonumber \\
&= N^{-1} - N^{-1}\SCP{\psi}{p_1 \psi} - N^{-1}\SCP{\psi}{p_1 \psi} + \tr\left( \left(\gamma^{\psi}_1\right)^2 \right) \nonumber \\
&\leq N^{-1} \left( 1 - \SCP{\psi}{p_1 \psi} \right) - N^{-1}\SCP{\psi}{p_1 \psi} + \norm[\op]{\gamma^{\psi}_1}\norm[\tr]{\gamma^{\psi}_1} \nonumber \\
&\leq N^{-1} \alpha_n - N^{-1}\SCP{\psi}{p_1 \psi} + N^{-1} \nonumber \\
&= 2N^{-1} \alpha_n.
\end{align}

We now show $\alpha_n \leq \sqrt{N} \norm[\HS]{\gamma^{\bigwedge \varphi_j}_1 - \gamma^{\psi}_1}$. Using first \eqref{trace_mu_phi-mu_psi} and then \eqref{tr_HS_op_ineqs}, $\norm[\op]{p_1}=1$ and $\norm[\HS]{p_1}=\sqrt{N}$, we find
\begin{align}\label{alpha_leq_HS}
\alpha_n &= \norm[\tr]{p_1\left(\gamma^{\bigwedge \varphi_j}_1 - \gamma^{\psi}_1\right)p_1} \nonumber \\
&\leq \norm[\tr]{p_1\left(\gamma^{\bigwedge \varphi_j}_1 - \gamma^{\psi}_1\right)} \Big|\Big| p_1 \Big|\Big|_{\op} \nonumber \\
&\leq \Big|\Big| p_1 \Big|\Big|_{\HS} \norm[\HS]{\left(\gamma^{\bigwedge \varphi_j}_1 - \gamma^{\psi}_1\right)} \nonumber \\
&\leq \sqrt{N} \norm[\HS]{\left(\gamma^{\bigwedge \varphi_j}_1 - \gamma^{\psi}_1\right)}.
\end{align}
\end{proof}

\begin{proof}[Proof of Lemma \ref{lem:density_conv}]
The inequalities \eqref{bound_tr_HS_alpha_f} follow directly from $\alpha_n\leq \alpha_f$ and Lemma~\ref{lem:density_conv}. Let us denote the floor function by $\floor{\cdot}$, i.e., for any $x \in \RRR$, $\floor{x} = \max\{m \in \ZZZ: m \leq x \}$. Then
\begin{equation}
\alpha_{m^{(\gamma)}} = N^{1-\gamma} \sum_{k=0}^{\floor{N^{\gamma}}} \frac{k}{N} \SCP{\psi}{P^{(N,k)}\psi} + \sum_{k=\floor{N^{\gamma}}+1}^N \SCP{\psi}{P^{(N,k)}\psi} \leq N^{1-\gamma} \alpha_n.
\end{equation}
\end{proof}

\section{The Time Derivative of $\alpha_f(t)$}\label{sec:alpha_dot}
The expression for the time derivative of
\begin{equation}
\alpha_f(t) = \sum_{k=0}^N f(k) \SCP{\psi^t}{P^{(N,k)} \psi^t}
\end{equation}
for arbitrary weight functions $f(k)$ follows from direct calculation (recall that the projectors $P^{(N,k)}$ depend on $t$ through their dependence on $\varphi_1^t,\ldots,\varphi_N^t$). We use that the time derivative of $\psi^t \in L^2(\RRR^{3N})$ is given by the Schr\"odinger equation \eqref{Schr_intro} with self-adjoint Hamiltonian $H$ given by \eqref{Schr_intro_H} with real interaction $v^{(N)}(x)=v^{(N)}(-x)$. As effective equations for the one-particle wave functions $\varphi^t_1,\ldots,\varphi^t_N \in L^2(\RRR^3)$ we consider the general equations
\begin{equation}\label{mean-field_eq}
i \partial_t \varphi_j^t(x) = H^\mf \varphi_j^t(x) = H^0\varphi_j^t(x) + \left(V^{(N)}\varphi_j^t\right)(x),
\end{equation}
where the operator $V^{(N)}$ can possibly depend on $N$, $j$ and $\varphi_1^t,\ldots,\varphi_N^t$. The two interesting cases are when there is only \emph{direct interaction},
\begin{equation}\label{mean-field_dir_int}
V^{(N)}(x) = V^{\dir,(N)}(x) = (v^{(N)} * \rho_N^t)(x),
\end{equation}
where $\rho_N^t = \sum_{i=1}^N |\varphi^t_i|^2$, and when there is direct and \emph{exchange interaction},
\begin{align}\label{mean-field_direxch_int}
V^{(N)}\varphi_j^t(x) &= \left( V^{\dir,(N)} + V^{\exch,(N)} \right)\varphi_j^t(x) \nonumber \\
&= (v^{(N)} * \rho_N^t)(x)\varphi_j^t(x) - \sum_{\ell=1}^N \left(v^{(N)} * (\varphi_{\ell}^{t*}\varphi_j^t)\right)(x) \, \varphi_{\ell}^t(x).
\end{align}

We calculate the time derivative of $\alpha_f(t)$ in two steps. First, in Lemma~\ref{lem:alpha_derivative_pre} we directly calculate a straightforward expression for the time derivative. We then prove an auxiliary lemma which we use to bring the time derivative into a form (Lemma~\ref{lem:alpha_derivative}) that can nicely be estimated (later in Lemma~\ref{lem:estimates_terms_alpha_dot_beta}).

\begin{lemma}\label{lem:alpha_derivative_pre}
Let $\psi^t\in L^2(\RRR^{3N})$ be an antisymmetric solution to the Schr\"odinger equation \eqref{Schr_intro} and let $\varphi_1^t,\ldots,\varphi_N^t\in L^2(\RRR^3)$ be orthonormal solutions to the effective equations \eqref{mean-field_eq}. Then
\begin{equation}\label{alpha_dot_commutator}
\partial_t \alpha_f(t) = \frac{i}{2} \bigSCP{\psi^t}{\left[ W_{12}, \widehat{f} \,\right] \psi^t},
\end{equation}
where $W_{12} := N(N-1)v^{(N)}_{12} - NV^{(N)}_1 - NV^{(N)}_2$.
\end{lemma}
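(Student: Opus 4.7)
The plan is to apply the product rule to $\alpha_f(t) = \langle \psi^t, \widehat f \,\psi^t\rangle$, noting that both $\psi^t$ and the projectors $P^{(N,k)}$ (hence $\widehat f$) depend on time. Using the Schr\"odinger equation $i\partial_t \psi^t = H\psi^t$ for the first two contributions yields
\[
\partial_t \alpha_f(t) = i\bigSCP{\psi^t}{[H,\widehat f\,]\psi^t} + \bigSCP{\psi^t}{(\partial_t \widehat f)\psi^t}.
\]
The remaining task is to identify the contribution of $\partial_t \widehat f$ as another commutator which partially cancels $[H,\widehat f]$.

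To handle $\partial_t\widehat f$, I would first show that each one--particle projector $p_m^{\varphi_j^t}$ satisfies the Heisenberg-type equation $i\partial_t p_m^{\varphi_j^t} = [H^{\mf}_m, p_m^{\varphi_j^t}]$, where $H^{\mf}_m = H^0_m + V^{(N)}_m$. This is a direct consequence of the mean-field equation \eqref{mean-field_eq} and the self-adjointness of $H^{\mf}$: differentiating $|\varphi_j^t\rangle\langle\varphi_j^t|_m$ and using $i\partial_t\varphi_j^t = H^{\mf}\varphi_j^t$ produces exactly the commutator. Summing on $j$ gives the same equation for $p_m$, hence for $q_m$. Since operators on different particle labels commute, the product rule applied to $\prod_m (p_m)^{1-a_m}(q_m)^{a_m}$ and then summation over $\vec a\in\AAA_k$ yields
\[
i\partial_t \widehat f = \Big[\sum_{m=1}^N H^{\mf}_m,\, \widehat f \,\Big].
\]
Substituting this back, the $H^0_m$ terms cancel against the one--body kinetic/external part of $H$, leaving
\[
\partial_t\alpha_f(t) = i\bigSCP{\psi^t}{\Big[\sum_{i<j}v^{(N)}_{ij}-\sum_{m=1}^N V^{(N)}_m,\ \widehat f\,\Big]\psi^t}.
\]

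The final step is a symmetrization argument that exploits antisymmetry of $\psi^t$ together with the permutation symmetry of $\widehat f$ (which is manifest from the symmetrized definition of $P^{(N,k)}$). Since $v^{(N)}_{ij}\widehat f$ and $\widehat f\, v^{(N)}_{ij}$ have expectation values on $\psi^t$ independent of the chosen pair $(i,j)$, one obtains $\langle\psi^t,[\sum_{i<j}v^{(N)}_{ij},\widehat f\,]\psi^t\rangle = \tfrac{N(N-1)}{2}\langle\psi^t,[v^{(N)}_{12},\widehat f\,]\psi^t\rangle$; similarly $\langle\psi^t,[\sum_m V^{(N)}_m,\widehat f\,]\psi^t\rangle = \tfrac{N}{2}\langle\psi^t,[V^{(N)}_1+V^{(N)}_2,\widehat f\,]\psi^t\rangle$. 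Combining these yields precisely $\tfrac{i}{2}\langle\psi^t,[W_{12},\widehat f\,]\psi^t\rangle$ with $W_{12}=N(N-1)v^{(N)}_{12}-NV^{(N)}_1-NV^{(N)}_2$.

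I expect the main subtlety to lie in the computation $i\partial_t\widehat f = [\sum_m H^{\mf}_m,\widehat f\,]$: one must check that the mean-field Hamiltonian $H^{\mf}$ is the same self-adjoint operator for every orbital $\varphi_j^t$ (true for both the direct \eqref{mean-field_dir_int} and the direct plus exchange \eqref{mean-field_direxch_int} cases) so that time differentiation of $|\varphi_j^t\rangle\langle\varphi_j^t|$ yields a clean commutator, and that the differentiation commutes with the sum over $\vec a\in\AAA_k$ defining $P^{(N,k)}$. Once this Heisenberg--type evolution for the projectors is established, everything else is algebra plus the antisymmetry argument.
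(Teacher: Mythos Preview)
Your proposal is correct and follows essentially the same approach as the paper: establish the Heisenberg-type evolution $i\partial_t \widehat f = \big[\sum_m H^{\mf}_m,\widehat f\,\big]$ via the product rule on the projectors, combine with the Schr\"odinger evolution of $\psi^t$ to obtain the commutator $i\langle\psi^t,[H-\sum_m H^{\mf}_m,\widehat f\,]\psi^t\rangle$, and then use antisymmetry (together with the permutation symmetry of $\widehat f$) to reduce the sums over pairs and particles to the two-particle expression $W_{12}$. Your remark that $H^{\mf}$ acts as one and the same self-adjoint operator on every orbital (also in the exchange case, since $V^{\exch}$ is the integral operator with kernel $-p(x,y)v(x-y)$) is exactly the point needed to make the Heisenberg equation for $p_m$ go through.
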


\begin{proof}
Note that the operators $p_m,q_m,P^{(N,k)}$ all depend on $t$ through the orbitals $\varphi_1^t,\ldots,\varphi_N^t$. For ease of notation, we do not explicitly write out this $t$-dependence. Let us first prove that $\widehat{f}$ fulfills the Heisenberg equation of motion
\begin{equation}\label{f_dot}
i \partial_t \widehat{f} = \left[ \, \sum_{m=1}^N H_m^{\mf} , \widehat{f} \, \right].
\end{equation}
Note that $i \partial_t p_m = \left[ H_m^{\mf}, p_m \right]$ and, using $p_m+q_m=1$, $i \partial_t q_m = \left[ H_m^{\mf}, q_m \right]$. Then, by applying the product rule and using $\left[ H_m^{\mf}, p_n \right] = 0 ~\forall m \neq n$, it follows that
\begin{equation}
i \partial_t P^{(N,k)} = \left[ \sum_{m=1}^N H_m^{\mf}, P^{(N,k)} \right],
\end{equation}
which implies \eqref{f_dot}. Using this and the antisymmetry of $\psi^t$, we find
\begin{align}
\partial_t \alpha_f(t) &= \partial_t \bigSCP{\psi^t}{\widehat{f}\,\psi^t} \nonumber \\
&= \bigSCP{\left(\partial_t\psi^t\right)}{\widehat{f}\,\psi^t} + \bigSCP{\psi^t}{\widehat{f}\left(\partial_t\psi^t\right)} + \bigSCP{\psi^t}{\left(\partial_t\widehat{f}\right)\psi^t} \nonumber \\
&= i\, \bigSCP{H\psi^t}{\widehat{f}\,\psi^t} - i \, \bigSCP{\psi^t}{\widehat{f} \, H\psi^t} - i \, \bigSCP{\psi^t}{\left[\sum_{m=1}^N H_m^{\mf},\widehat{f}\right]\psi^t} \nonumber \\
&= i \, \bigSCP{\psi^t}{\left[H-\sum_{m=1}^N H_m^{\mf},\widehat{f}\right]\psi^t} \nonumber \\
&= i \, \bigSCP{\psi^t}{\left[\sum_{j=1}^N H_j^0 + \sum_{1\leq i < j \leq N} v^{(N)}(x_i-x_j) - \sum_{m=1}^N \left( H_m^0 + V_m^{(N)} \right),\widehat{f}\right]\psi^t} \nonumber \\
&= i \, \bigSCP{\psi^t}{\left[ \frac{N(N-1)}{2} v^{(N)}(x_1-x_2) - \frac{N}{2} V_1^{(N)} - \frac{N}{2} V_2^{(N)},\widehat{f}\right]\psi^t}.
\end{align}
\end{proof}

Let us now bring the expression \eqref{alpha_dot_commutator} into a form we can use for the desired Gronwall estimate. For that, we need to define shifted operators and discrete derivatives of weight functions $f$.

\begin{definition}\label{def:f_derivative}
Let $d \in \ZZZ$ and $f$ be a weight function as in Definition~\ref{def:projectors}.
\begin{enumerate}[(a)]
\item We define the shifted operators
\begin{equation}
\widehat{f}_d := \sum_{k=-d}^{N-d} f(k+d) P^{(N,k)} = \sum_{k=0}^N f(k) P^{(N,k-d)} = \sum_{k=0}^N f(k+d) P^{(N,k)},
\end{equation}
where for the last step we defined $f(k)=0$ for all $k<0$ and $k>N$.
\item For $d>0$, we define discrete first derivatives of $f$ by
\begin{equation}
f'^{(d)}(k) := \Big( f(k)-f(k-d) \Big) \id_{\{ d,\ldots,N \}}(k),
\end{equation}
\begin{equation}
f'^{(-d)}(k) := \Big( f(k+d)-f(k) \Big) \id_{\{ 0,\ldots,N-d \}}(k),
\end{equation}
where $\id_{A}(k) = 1$ for $k \in A$ and $0$ otherwise.
\end{enumerate}
\end{definition}

For Lemma~\ref{lem:shift_fhat} and the proof of Lemma~\ref{lem:alpha_derivative} we need some more notation.

\begin{definition}\label{def:projectors2}
Let $\varphi_1, \ldots, \varphi_N \in L^2(\RRR^3)$ be orthonormal. For all $a,n \in \NNN$ with $a \leq n < N$, $\{ i_1,\ldots,i_n \} \subset \{ 1,\ldots, N\}$ we define
\begin{equation}
P_{i_1 \ldots i_n}^{(a)} := \left(\prod_{m=1}^a q_{i_m} \prod_{m=a+1}^n p_{i_m}\right)_{\sym},
\end{equation}
where the subscript $\sym$ means the symmetrized tensor product.
\end{definition}

In accordance with Definition~\ref{def:projectors} we have $P^{(k)}_{1 \ldots N} = P^{(N,k)}$. For the proof of Lemma~\ref{lem:alpha_derivative} we need an auxiliary Lemma that shows how to shift an $\widehat{f}$ to the other side of a two-particle operator.

\begin{lemma}\label{lem:shift_fhat}
As in Definition~\ref{def:projectors2}, we abbreviate $P_{12}^{(0)} = p_1p_2$, $P_{12}^{(1)} = p_1q_2 + q_1p_2$ and $P_{12}^{(2)} = q_1q_2$. Let $h_{12}$ be an operator that acts only on the first and second particle index. Then, for all $a,b = 0,1,2$,
\begin{equation}\label{shift_f_right_to_left}
\left( P_{12}^{(a)}h_{12}P_{12}^{(b)} \right) \widehat{f} = \widehat{f}_{b-a} \left( P_{12}^{(a)}h_{12}P_{12}^{(b)} \right),
\end{equation}
\begin{equation}\label{shift_f_left_to_right}
\widehat{f} \left( P_{12}^{(a)}h_{12}P_{12}^{(b)} \right) = \left( P_{12}^{(a)}h_{12}P_{12}^{(b)} \right) \widehat{f}_{a-b}.
\end{equation}
\end{lemma}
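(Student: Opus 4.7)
The plan is to prove both identities by decomposing $\widehat{f}$ into blocks that distinguish the ``bad-particle content'' of the first two indices from that of the remaining $N-2$ indices. The intuition is that $P_{12}^{(b)}$ on the right freezes the number of bad particles among the first two at $b$, $h_{12}$ only acts on the first two slots so it leaves the $3,\ldots,N$ block untouched, and $P_{12}^{(a)}$ on the left sets the number of bad particles among the first two to $a$ after the interaction. Thus passing $\widehat{f}$ through $P_{12}^{(a)}h_{12}P_{12}^{(b)}$ simply relabels the count on the last $N-2$ slots by a fixed amount, which is precisely a shift of $\widehat{f}$ by $b-a$ (or $a-b$, depending on the direction).

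Concretely, I would first record the block decomposition
\begin{equation}
P^{(N,k)} \;=\; \sum_{a'=0}^{2} P_{12}^{(a')}\,P_{3\ldots N}^{(k-a')},
\end{equation}
using the notation of Definition~\ref{def:projectors2} with the convention $P_{3\ldots N}^{(m)}=0$ for $m<0$ or $m>N-2$. This follows from multiplying out $(p_1+q_1)(p_2+q_2)$ inside the symmetrized product for $P^{(N,k)}$ and grouping by the number of $q$'s landing on indices $1,2$. Two basic facts are then used repeatedly: the orthogonality $P_{12}^{(a')}P_{12}^{(b)}=\delta_{a'b}P_{12}^{(b)}$, and the commutation $[P_{3\ldots N}^{(m)},\,X_{12}]=0$ for any operator $X_{12}$ acting only on the first two slots (so in particular for $h_{12}$, $P_{12}^{(a)}$ and $P_{12}^{(b)}$).

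To establish \eqref{shift_f_right_to_left}, I would plug the decomposition into $\widehat{f}=\sum_{k}f(k)P^{(N,k)}$ and compute
\begin{align}
\bigl(P_{12}^{(a)}h_{12}P_{12}^{(b)}\bigr)\widehat{f}
&= \sum_{k=0}^N f(k)\sum_{a'=0}^{2} P_{12}^{(a)}h_{12}P_{12}^{(b)}P_{12}^{(a')}P_{3\ldots N}^{(k-a')} \notag \\
&= \sum_{k=0}^N f(k)\, P_{3\ldots N}^{(k-b)}\,P_{12}^{(a)}h_{12}P_{12}^{(b)},
\end{align}
using orthogonality to collapse $a'=b$ and then commuting $P_{3\ldots N}^{(k-b)}$ to the left. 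The same expansion applied to $\widehat{f}_{b-a}\bigl(P_{12}^{(a)}h_{12}P_{12}^{(b)}\bigr)$ gives $\sum_k f(k+b-a)P_{3\ldots N}^{(k-a)}P_{12}^{(a)}h_{12}P_{12}^{(b)}$. Reindexing $k\mapsto k+b$ in the first expression and $k\mapsto k+a$ in the second shows both equal $\sum_{k'} f(k'+b)\,P_{3\ldots N}^{(k')}P_{12}^{(a)}h_{12}P_{12}^{(b)}$, proving \eqref{shift_f_right_to_left}. The identity \eqref{shift_f_left_to_right} is obtained by the same computation with the roles of $a$ and $b$ interchanged (or, equivalently, by taking adjoints of \eqref{shift_f_right_to_left} applied to $h_{12}^{*}$ and relabeling).

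The only mild subtlety, which is not really an obstacle, is keeping track of the boundary convention $f(k)=0$ for $k<0$ or $k>N$, together with the convention $P_{3\ldots N}^{(m)}=0$ outside its allowed range: one needs to check that the reindexing does not smuggle in spurious terms, but since $a,b\in\{0,1,2\}$ all the formally out-of-range contributions are genuinely zero on both sides. With these conventions in place, the proof reduces to the block decomposition plus a one-line index shift.
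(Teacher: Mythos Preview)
Your proof is correct and follows essentially the same approach as the paper: both use the block decomposition $P^{(N,k)}=\sum_{a'}P_{12}^{(a')}P_{3\ldots N}^{(k-a')}$, collapse via the orthogonality $P_{12}^{(a')}P_{12}^{(b)}=\delta_{a'b}P_{12}^{(b)}$, and then commute the $3,\ldots,N$ block past $h_{12}$. The only cosmetic difference is that the paper reinserts the identity $\sum_d P_{12}^{(d)}P_{3\ldots N}^{(k+a-b-d)}=P^{(N,k+a-b)}$ to directly recognize $\widehat{f}_{b-a}$, whereas you expand both sides separately and match them by an index shift.
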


\begin{proof}
We only prove \eqref{shift_f_right_to_left} since \eqref{shift_f_left_to_right} can be proven in just the same way. In the following calculation we use the splitting 
\begin{equation}\label{def_split_P}
P^{(N,k)} = \sum_{d=0}^2 P_{12}^{(d)} P_{3 \ldots N}^{(k-d)},
\end{equation}
where $P_{12}^{(d)}$ contains exactly $d$ $q$-projectors, and $P_{3 \ldots N}^{(k-d)}$ contains $k-d$ $q$-projectors and acts only on the variables $x_3,\ldots x_N$. Then we find
\begin{align}
\left( P_{12}^{(a)}h_{12}P_{12}^{(b)} \right) \widehat{f}
&= \sum_{k=0}^N f(k)~ P_{12}^{(a)}h_{12}P_{12}^{(b)} \sum_{d=0}^2 P_{12}^{(d)} P_{3 \ldots N}^{(k-d)} \nonumber \\
\eqexp{by $P_{12}^{(b)}P_{12}^{(d)}=\delta_{bd}P_{12}^{(b)}$}&= \sum_{k=0}^N f(k)~ P_{12}^{(a)}h_{12}P_{12}^{(b)} P_{3 \ldots N}^{(k-b)} \nonumber \\
&= \sum_{k=0}^N f(k)~ P_{3 \ldots N}^{(k-b)} P_{12}^{(a)}h_{12}P_{12}^{(b)} \nonumber \\
\eqexp{by $P_{12}^{(d)}P_{12}^{(a)}=\delta_{ad}P_{12}^{(a)}$}&= \sum_{k=0}^N f(k)~ \sum_{d=0}^2 P_{12}^{(d)} P_{3 \ldots N}^{(k+a-b-d)} P_{12}^{(a)}h_{12}P_{12}^{(b)} \nonumber \\
&= \sum_{k=0}^N f(k)~ P^{(N,k+a-b)} P_{12}^{(a)}h_{12}P_{12}^{(b)} \nonumber \\
\eqexp{by Def.~\ref{def:f_derivative}}&= \widehat{f}_{b-a} \left( P_{12}^{(a)}h_{12}P_{12}^{(b)} \right).
\end{align}
\end{proof}

Note that Lemma~\ref{lem:shift_fhat} holds more generally for $m$-particle operators $h_{1 \ldots m}$, i.e.,
\begin{equation}\label{shift_f_1m}
\left( P_{1 \ldots m}^{(a)}h_{1 \ldots m}P_{1 \ldots m}^{(b)} \right) \widehat{f} = \widehat{f}_{b-a} \left( P_{1 \ldots m}^{(a)}h_{1 \ldots m}P_{1 \ldots m}^{(b)} \right)
\end{equation}
for all $a,b \in \{ 0,\ldots,m \}$. As last preparatory remark, note that for all $f\geq0$,
\begin{equation}\label{sqrt_f}
\widehat{f}^{1/2} = \widehat{f^{1/2}},
\end{equation}
since $P^{(N,k)}P^{(N,\ell)} = \delta_{k \ell}P^{(N,k)}$.\footnote{Note that for all $0 < s \in \QQQ$ also the more general relations $\left( \widehat{f} \right)^s = \widehat{f^s}$ and, if $f(0) = 0$ and $f(k)>0 ~\forall k>0$, $\widehat{f}^{-s}\left( \id - P^{(N,0)} \right) = \widehat{f^{-s}}$ hold.}

With Lemma~\ref{lem:shift_fhat} we can simplify the expression \eqref{alpha_dot_commutator} for the time derivative of $\alpha_f(t)$ by splitting it into three parts, each of which will be estimated separately later in Lemma~\ref{lem:estimates_terms_alpha_dot_beta}. The advantage of having the square root of the ``derivatives'' $\widehat{f'^{(d)}}$ next to $\psi^t$ is made clear in Lemma~\ref{lem:q_root_f} and the estimates in Lemma~\ref{lem:estimates_terms_alpha_dot_beta}.

\begin{lemma}\label{lem:alpha_derivative}
Let $\psi^t\in L^2(\RRR^{3N})$ be an antisymmetric solution to the Schr\"odinger equation \eqref{Schr_intro} and let $\varphi_1^t,\ldots,\varphi_N^t\in L^2(\RRR^3)$ be orthonormal solutions to the effective equations \eqref{mean-field_eq}. Then, for all monotone increasing $f(k)$,
\begin{equation}\label{alpha_derivative}
\partial_t \alpha_f(t) = (I)_f + (II)_f + (III)_f
\end{equation}
with 
\begin{align}
\label{alpha_dot_1}(I)_f &:= 2 N \, \Im\, \bigSCP{\psi^t}{\widehat{f'^{(1)}}^{1/2} \, \bigg[ q_1^t \left( (N-1)p_2^tv^{(N)}_{12}p_2^t - V^{(N)}_1 \right) p_1^t \bigg] \, \widehat{f'^{(-1)}}^{1/2}\psi^t} \\
\label{alpha_dot_2}(II)_f &:= N \, \Im\, \bigSCP{\psi^t}{\widehat{f'^{(2)}}^{1/2} \, \bigg[ q_1^tq_2^t (N-1)v^{(N)}_{12} p_1^tp_2^t \bigg] \, \widehat{f'^{(-2)}}^{1/2}\psi^t} \\
\label{alpha_dot_3}(III)_f &:= 2 N \, \Im\, \bigSCP{\psi^t}{\widehat{f'^{(1)}}^{1/2} \, \bigg[ q_1^tq_2^t (N-1)v^{(N)}_{12} p_1^tq_2^t \bigg] \, \widehat{f'^{(-1)}}^{1/2}\psi^t}.
\end{align}
\end{lemma}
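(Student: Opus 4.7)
I will start from Lemma \ref{lem:alpha_derivative_pre}, which gives $\partial_t \alpha_f = \tfrac{i}{2}\langle \psi, [W_{12}, \widehat f]\psi\rangle$ with $W_{12} = N(N-1) v^{(N)}_{12} - N V^{(N)}_1 - N V^{(N)}_2$. Since both $W_{12}$ and $\widehat f$ are self-adjoint, the commutator is skew-adjoint and its expectation in $\psi$ is purely imaginary, so
\begin{equation*}
\partial_t \alpha_f = \Im\langle \psi, \widehat f \, W_{12}\,\psi\rangle.
\end{equation*}
I then insert the resolution of the identity $1 = P^{(0)}_{12} + P^{(1)}_{12} + P^{(2)}_{12}$ on both sides of $W_{12}$ and apply Lemma \ref{lem:shift_fhat} to move $\widehat f$ to the right of each $P^{(a)}_{12} W_{12} P^{(b)}_{12}$, obtaining
\begin{equation*}
\partial_t \alpha_f = \tfrac{1}{2i} \sum_{a \neq b} \langle \psi, P^{(a)}_{12} W_{12} P^{(b)}_{12} (\widehat f_{a-b} - \widehat f)\,\psi \rangle,
\end{equation*}
since the diagonal $a=b$ contributions vanish.

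Next I identify $\widehat f_{a-b} - \widehat f$ with $\pm \widehat{f'^{(\pm|a-b|)}}$ up to boundary contributions supported on $P^{(N,N)}$, $P^{(N,N-1)}$, or $P^{(N,1)}$, which are annihilated by the $P^{(b)}_{12}$ to their left (for example $p_1 p_2 P^{(N,N)} = 0$, and $p_1 p_2 P^{(N,N-1)} = 0$ because the unique $p$ in $P^{(N,N-1)}$ cannot match both of $p_1, p_2$). I then pair $(a,b)$ with $(b,a)$. Using the self-adjointness of all factors, the identity $(f'^{(-d)})_{-d} = f'^{(d)}$, and splitting $\widehat{f'^{(\pm d)}}$ into a product of square roots (permitted because monotonicity of $f$ ensures $\widehat{f'^{(\pm d)}} \geq 0$) with the halves distributed by Lemma \ref{lem:shift_fhat}, each pair collapses to $\Im \langle \psi, \widehat{f'^{(d)}}^{1/2} P^{(a_+)}_{12} W_{12} P^{(a_-)}_{12} \widehat{f'^{(-d)}}^{1/2}\psi\rangle$, where $d = |a-b|$ and $a_\pm = \max/\min(a,b)$.

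Now I compute the three sector operators. In the $(a_+,a_-)=(2,0)$ sector, the single-particle pieces vanish identically: $q_1 q_2 V^{(N)}_1 p_1 p_2 = q_1 V^{(N)}_1 p_1 q_2 p_2 = 0$ and analogously for $V^{(N)}_2$, so only $N(N-1)\, q_1 q_2 v^{(N)}_{12} p_1 p_2$ survives and this is exactly $(II)_f$. For the $(1,0)$ and $(2,1)$ sectors I expand $P^{(a_+)}_{12} W_{12} P^{(a_-)}_{12}$ explicitly and use the antisymmetry of $\psi$ together with the invariance of $\widehat{f'^{(\pm 1)}}^{1/2}$ under the particle transposition $\Pi_{12}$ to fold $p_1 q_2$-type pieces into $q_1 p_2$-type pieces and $V^{(N)}_2$-pieces into $V^{(N)}_1$-pieces. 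This yields $2N\,\Im\langle \psi, \widehat{f'^{(1)}}^{1/2} q_1 [(N-1) p_2 v^{(N)}_{12} p_2 - V^{(N)}_1] p_1 p_2 \widehat{f'^{(-1)}}^{1/2}\psi\rangle$ for the $(1,0)$ sector and $2N\,\Im\langle \psi, \widehat{f'^{(1)}}^{1/2} q_1 [(N-1) q_2 v^{(N)}_{12} - V^{(N)}_1] p_1 q_2 \widehat{f'^{(-1)}}^{1/2}\psi\rangle$ for the $(2,1)$ sector.

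The hardest step is combining these two into $(I)_f + (III)_f$. Expanding the bracketed operators and using $p_2^2 = p_2$, $p_2 q_2 = 0$, and $[p_1, q_2] = 0$, the two $V^{(N)}_1$-pieces merge via $-V^{(N)}_1 p_1 p_2 - V^{(N)}_1 p_1 q_2 = -V^{(N)}_1 p_1$, while the $v^{(N)}_{12}$-pieces simplify to $(N-1) p_2 v^{(N)}_{12} p_2 p_1$ and $(N-1) q_2 v^{(N)}_{12} p_1 q_2$ respectively. The operator sandwiched inside $\Im\langle \psi, \widehat{f'^{(1)}}^{1/2} q_1\{\cdots\}\widehat{f'^{(-1)}}^{1/2}\psi\rangle$ is therefore $[(N-1) p_2 v^{(N)}_{12} p_2 - V^{(N)}_1] p_1 + (N-1) q_2 v^{(N)}_{12} p_1 q_2$; the first summand produces $(I)_f$, while the second, once the leading $q_1$ is combined with the $q_2$ standing to its right into $q_1 q_2$, produces $(III)_f$, completing the decomposition.
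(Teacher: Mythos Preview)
Your proof is correct and follows essentially the same route as the paper: insert $1=\sum_a P^{(a)}_{12}$ on both sides of $W_{12}$, use Lemma~\ref{lem:shift_fhat} to pull $\widehat f$ through, pair $(a,b)$ with $(b,a)$ into an imaginary part, split the resulting $\widehat{f'^{(d)}}$ into square roots, and finally expand $W_{12}$ using the particle-exchange symmetry. The only substantive difference is that you shift $\widehat f$ to the \emph{right} of $P^{(a)}_{12}W_{12}P^{(b)}_{12}$, which forces you to check that the boundary pieces at $P^{(N,N)}$, $P^{(N,N-1)}$, $P^{(N,1)}$ are annihilated by the adjacent $P^{(b)}_{12}$ (they are), whereas the paper shifts $\widehat f$ to the \emph{left} and observes that $(\widehat f-\widehat f_{b-a})P^{(a)}_{12}=\widehat{f'^{(a-b)}}P^{(a)}_{12}$ holds exactly because $P^{(a)}_{12}$ already restricts to $k\ge a\ge a-b$; this makes the boundary discussion unnecessary.
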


\noindent\textbf{Remarks.}
\begin{enumerate}
\setcounter{enumi}{\theremarks}
\item Note that the time derivative is formally the same as for bosons, where $p_1:=\ketbr{\varphi}_1$, see \cite{pickl:2010gp_pos,pickl:2011method}. Note that in \cite{pickl:2010gp_pos,pickl:2011method} the splitting into three summands is done slightly differently: compared to \eqref{alpha_derivative}, an additional identity $1=p_2+q_2$ is added in front of $V^{(N)}_1$ and the operator $\widehat{f'^{(-d)}}^{1/2}$ is pulled over to the left side of the scalar product.
\item For the case $f(k) = n(k) = k/N$ we find a simple expression for the time derivative of $\alpha_n(t) = \SCP{\psi^t}{q_1^t \psi^t}$. It can easily be found by direct calculation as in \eqref{summary_dt_alpha} or from \eqref{alpha_derivative}. Note that, in view of Definition~\ref{def:projectors} and the identity $\sum_{k=0}^N P^{(N,k)} = 1$,
\begin{equation}
q_1 \widehat{f'^{(1)}}^{1/2} = q_1 \sum_{k=1}^N \left( \frac{k}{N} - \frac{(k-1)}{N} \right)^{1/2} P^{(N,k)} = N^{-1/2} ~ q_1 \sum_{k=1}^N P^{(N,k)} = q_1\, N^{-1/2},
\end{equation}
and similarly 
\begin{equation}
p_1 \widehat{f'^{(-1)}}^{1/2} = p_1\, N^{-1/2}, ~~~~ q_1q_2 \widehat{f'^{(2)}}^{1/2} = q_1q_2 \,\sqrt{2}\, N^{-1/2}, ~~~~ p_1p_2 \widehat{f'^{(-2)}}^{1/2} = p_1p_2 \,\sqrt{2}\, N^{-1/2}.
\end{equation}
Then \eqref{alpha_derivative} simplifies to
\begin{align}\label{alpha_derivative_n_remark}
\partial_t \alpha_n(t) &= 2 \, \Im\, \bigSCP{\psi^t}{q_1^t \left( (N-1)p_2^tv^{(N)}_{12}p_2^t - V^{(N)}_1 \right) p_1^t \psi^t} \nonumber \\*
&\quad + 2 \, \Im\, \bigSCP{\psi^t}{q_1^tq_2^t (N-1)v^{(N)}_{12} p_1^tp_2^t \psi^t} \nonumber \\*
&\quad + 2 \, \Im\, \bigSCP{\psi^t}{q_1^tq_2^t (N-1)v^{(N)}_{12} p_1^tq_2^t \psi^t}.
\end{align}
\item Note that the proof of Lemma~\ref{lem:alpha_derivative} can easily be generalized to $m$-particle interactions $W_{1 \ldots m}$. With the proper definition of $W_{1 \ldots m}$ coming from Lemma~\ref{lem:alpha_derivative_pre} and using \eqref{shift_f_1m}, we find
\begin{equation}
\partial_t \alpha_f(t) = \frac{i}{2} \bigSCP{\psi^t}{\left[W_{1 \ldots m},\widehat{f}\right]\psi^t} = \Im \sum_{a > b} \bigSCP{\psi^t}{ \widehat{f'^{(a-b)}}^{1/2}  P_{1 \ldots m}^{(a)} W_{1 \ldots m} P_{1 \ldots m}^{(b)} \widehat{f'^{(b-a)}}^{1/2} \psi^t}.
\end{equation}
\end{enumerate}
\setcounter{remarks}{\theenumi}

\begin{proof}[Proof of Lemma \ref{lem:alpha_derivative}]
We calculate the time derivative of $\alpha_f(t)$ starting from the expression \eqref{alpha_dot_commutator} of Lemma~\ref{lem:alpha_derivative_pre}. The idea of the proof is to insert two identities $1=p_1+q_1$ and $1=p_2+q_2$ in front of each $\psi^t$ (which leads to $16$ summands) and then to use Lemma~\ref{lem:shift_fhat} in order to shift $\widehat{f}$. It turns out that a lot of terms drop out due to the commutator structure. Let us first note two auxiliary calculations. For all $a,b = 0,1,2$ with $a>b$, we have
\begin{equation}\label{aux_f_1}
\left( \widehat{f} - \widehat{f}_{b-a} \right) P_{12}^{(a)} = \sum_{k=a-b}^N \Big( f(k) - f(k-(a-b)) \Big) P^{(N,k)} P_{12}^{(a)} = \widehat{f'^{(a-b)}} \geq 0,
\end{equation}
where the positivity is true since $f$ was assumed to be monotone increasing, and
\begin{align}\label{aux_f_2}
\widehat{\sqrt{f'^{(a-b)}}}_{a-b} &= \sum_{k=0}^N \Big( f(k+a-b) - f(k) \Big)^{1/2} \id_{\{ a-b,\ldots,N \}}(k+a-b) P^{(N,k)} \nonumber \\
&= \left( \sum_{k=0}^{N-(a-b)} \Big( f(k+a-b) - f(k) \Big) P^{(N,k)} \right)^{1/2} \nonumber \\
&= \widehat{f'^{(b-a)}}^{1/2}.
\end{align}
Then, by inserting two identities $1 = \sum_{a=0}^2 P_{12}^{(a)}$ (in the following sums, the indices $a,b$ always run from $0$ to $2$), we find
\begin{align}
\partial_t \alpha_f(t) &= \frac{i}{2} \bigSCP{\psi^t}{\left[W_{12},\widehat{f}\right]\psi^t} \nonumber \\
&= \frac{i}{2} \bigSCP{\psi^t}{ \sum_a P_{12}^{(a)} \Big(W_{12}\widehat{f} - \widehat{f}W_{12}\Big) \sum_b P_{12}^{(b)}\psi^t} \nonumber \\
\eqexp{by Lem.~\ref{lem:shift_fhat}}&= -\frac{i}{2} \sum_{a,b} \bigSCP{\psi^t}{ \Big( \widehat{f} - \widehat{f}_{b-a} \Big) P_{12}^{(a)} W_{12} P_{12}^{(b)}\psi^t} \nonumber \\
\eqexp{by \eqref{aux_f_1} and $a\leftrightarrow b$}&= \Im \sum_{a > b} \bigSCP{\psi^t}{ \widehat{f'^{(a-b)}} P_{12}^{(a)} W_{12} P_{12}^{(b)}\psi^t} \nonumber \\
\eqexp{by \eqref{sqrt_f}}&= \Im \sum_{a > b} \bigSCP{\psi^t}{ \widehat{f'^{(a-b)}}^{1/2} \widehat{\sqrt{f'^{(a-b)}}} P_{12}^{(a)} W_{12} P_{12}^{(b)}\psi^t} \nonumber \\
\eqexp{by Lem.~\ref{lem:shift_fhat}}&= \Im \sum_{a > b} \bigSCP{\psi^t}{ \widehat{f'^{(a-b)}}^{1/2}  P_{12}^{(a)} W_{12} P_{12}^{(b)} \widehat{\sqrt{f'^{(a-b)}}}_{a-b} \psi^t} \nonumber \\
\eqexp{by \eqref{aux_f_2}}&= \Im \sum_{a > b} \bigSCP{\psi^t}{ \widehat{f'^{(a-b)}}^{1/2}  P_{12}^{(a)} W_{12} P_{12}^{(b)} \widehat{f'^{(b-a)}}^{1/2} \psi^t} \nonumber \\
\eqexp{by $W_{12}=W_{21}$}&= 2 \, \Im\, \bigSCP{\psi^t}{\widehat{f'^{(1)}}^{1/2} \, q_1^t p_2^t W_{12} p_1^tp_2^t \, \widehat{f'^{(-1)}}^{1/2}\psi^t} \nonumber \\*
&\quad + \Im\, \bigSCP{\psi^t}{\widehat{f'^{(2)}}^{1/2} \, q_1^tq_2^t W_{12} p_1^tp_2^t \, \widehat{f'^{(-2)}}^{1/2}\psi^t} \nonumber \\*
&\quad + 2 \, \Im\, \bigSCP{\psi^t}{\widehat{f'^{(1)}}^{1/2} \, q_1^tq_2^t W_{12} p_1^tq_2^t \, \widehat{f'^{(-1)}}^{1/2}\psi^t}.
\end{align}
Equation~\eqref{alpha_derivative} follows by inserting the definition $W_{12} =  N(N-1)v^{(N)}_{12} - NV^{(N)}_1 - NV^{(N)}_2$, using $p_mq_m=0$ and using $p_1q_2=p_1-p_1p_2$ in the third summand.
\end{proof}

\section{Proof of Results for General Hamiltonians}\label{sec:proofs_general_thm}

\subsection{Using the New Weight Function}
In Section~\ref{sec:alpha_dot} we went through some trouble to handle general weight functions in the time derivative of $\alpha_f(t)$. The next lemma shows what we gain by choosing the weight function $m^{(\gamma)}(k)$ from \eqref{weight_m_gamma}. Morally, the lemma says that by choosing $\gamma < 1$ compared to $\gamma=1$, we make the convergence rate worse when there is no $q$ available (see \eqref{q_root_f_0q}), we do not loose anything when there is one $q$ available (see \eqref{q_root_f_1q}) and we gain powers in $N$ when there are two $q$'s available (see \eqref{q_root_f_2q}).

\begin{lemma}\label{lem:q_root_f}
Let $\widehat{m} = \widehat{m^{(\gamma)}} = \sum_{k=0}^N m^{(\gamma)}(k) P^{(N,k)}$ with $m^{(\gamma)}(k)$ as in \eqref{weight_m_gamma} for some $0<\gamma\leq 1$ and let $\psi \in L^2(\RRR^{3N})$ be antisymmetric and normalized. Then, for all $d =1,2$,
\begin{equation}\label{q_root_f_0q}
\norm{ \widehat{m'^{(\pm d)}}^{1/2}\psi}^2 \leq d N^{-\gamma},
\end{equation}
\begin{equation}\label{q_root_f_1q}
\norm{ q_1 \, \widehat{m'^{(-d)}}^{1/2}\psi}^2 \leq d N^{-1} \alpha_{m^{(\gamma)}} ~~~~\text{and}~~~~ \norm{ q_1 \, \widehat{m'^{(d)}}^{1/2}\psi}^2 \leq (d+1) N^{-1} \alpha_{m^{(\gamma)}},
\end{equation}
and
\begin{equation}\label{q_root_f_2q}
\norm{ q_1q_2 \, \widehat{m'^{(1)}}^{1/2}\psi}^2 \leq 3 N^{\gamma-2} \alpha_{m^{(\gamma)}}.
\end{equation}
\end{lemma}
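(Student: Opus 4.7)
The plan is to prove all three bounds by passing to the spectral resolution of the operators with respect to the orthogonal family $\{P^{(N,k)}\}_{k=0}^N$ and then exploiting the antisymmetry of $\psi$. Using \eqref{P_Nk_properties} and \eqref{sqrt_f}, we have $\bigl(\widehat{m'^{(\pm d)}}^{1/2}\bigr)^{2}=\widehat{m'^{(\pm d)}}=\sum_k m'^{(\pm d)}(k)P^{(N,k)}$. Moreover $q_{1}$ and $q_{1}q_{2}$ commute with every $P^{(N,k)}$: inside each summand of $P^{(N,k)}$ the factor on particle $m$ is either $p_{m}$ or $q_{m}$, and since $p_{m}q_{m}=q_{m}p_{m}=0$ and $q_{m}q_{m}=q_{m}$, multiplying by $q_{m}$ from either side gives the same result. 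Hence for each of the three left-hand sides I can write, with $R\in\{1,q_{1},q_{1}q_{2}\}$,
\begin{equation*}
\bigl\lVert R\,\widehat{m'^{(\pm d)}}^{1/2}\psi\bigr\rVert^{2}=\sum_{k=0}^{N}m'^{(\pm d)}(k)\,\bigSCP{\psi}{R\,P^{(N,k)}\psi},
\end{equation*}
which reduces the problem to computing $\SCP{\psi}{R\,P^{(N,k)}\psi}$ and then establishing a pointwise comparison between the weight $m'^{(\pm d)}(k)\cdot(\text{combinatorial factor})$ and either $1$ or $m^{(\gamma)}(k)$.

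For \eqref{q_root_f_0q} no antisymmetry is needed: I would telescope $m'^{(\pm d)}(k)=\sum_{j=0}^{d-1}\bigl(m^{(\gamma)}(k\mp j)-m^{(\gamma)}(k\mp j\mp 1)\bigr)$; each increment is at most $N^{-\gamma}$, so $m'^{(\pm d)}(k)\leq dN^{-\gamma}$, and the bound follows from $\sum_{k}\SCP{\psi}{P^{(N,k)}\psi}=1$.

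For \eqref{q_root_f_1q} and \eqref{q_root_f_2q} I will use that each summand in $P^{(N,k)}$ carries exactly $k$ $q$-projectors, so on the level of operators $\sum_{m=1}^{N}q_{m}P^{(N,k)}=kP^{(N,k)}$ and $\sum_{m\neq n}q_{m}q_{n}P^{(N,k)}=k(k-1)P^{(N,k)}$. The antisymmetry of $\psi$ together with the total symmetry of $P^{(N,k)}$ under permutations of the particle labels then gives $\SCP{\psi}{q_{m}P^{(N,k)}\psi}=\SCP{\psi}{q_{1}P^{(N,k)}\psi}$ and $\SCP{\psi}{q_{m}q_{n}P^{(N,k)}\psi}=\SCP{\psi}{q_{1}q_{2}P^{(N,k)}\psi}$ for all admissible indices, from which
\begin{equation*}
\SCP{\psi}{q_{1}P^{(N,k)}\psi}=\tfrac{k}{N}\,\SCP{\psi}{P^{(N,k)}\psi},\qquad \SCP{\psi}{q_{1}q_{2}P^{(N,k)}\psi}=\tfrac{k(k-1)}{N(N-1)}\,\SCP{\psi}{P^{(N,k)}\psi}.
\end{equation*}

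What remains is purely an arithmetic check on the weight function: namely, for all $k$,
\begin{equation*}
m'^{(\pm d)}(k)\cdot\tfrac{k}{N}\leq (d+1)N^{-1}\,m^{(\gamma)}(k),\qquad m'^{(1)}(k)\cdot\tfrac{k(k-1)}{N(N-1)}\leq 3N^{\gamma-2}\,m^{(\gamma)}(k).
\end{equation*}
I would verify these by splitting into three regimes dictated by the definition \eqref{weight_m_gamma}. On $k\leq\lfloor N^{\gamma}\rfloor$ (with $k\geq d$ for the first inequality) one has $m'^{(\pm d)}(k)=dN^{-\gamma}$ exactly and $m^{(\gamma)}(k)=kN^{-\gamma}$, so both inequalities follow at once. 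On $k\geq\lfloor N^{\gamma}\rfloor+d+1$ the derivative vanishes and the statements are trivial. The only subtlety is the ``transition'' band $\lfloor N^{\gamma}\rfloor<k\leq\lfloor N^{\gamma}\rfloor+d$, where $m^{(\gamma)}(k)=1$ and $m'^{(\pm d)}(k)=1-(k\mp d)N^{-\gamma}\leq dN^{-\gamma}$; plugging $k=N^{\gamma}+O(1)$ yields both estimates with room to spare. The only real point worth highlighting—and the reason the cubic weight works—is this: the vanishing of $m'^{(1)}$ for $k>\lfloor N^{\gamma}\rfloor+1$ effectively caps $k\lesssim N^{\gamma}$ in the sum for \eqref{q_root_f_2q}, turning the crude bound $k(k-1)/[N(N-1)]\leq 1$ into $k(k-1)/[N(N-1)]\lesssim N^{2\gamma-2}$ and producing the advertised gain $N^{\gamma-2}$ over the naive $N^{-1}$.
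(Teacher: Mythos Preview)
Your argument is correct and follows essentially the same route as the paper: expand in the $P^{(N,k)}$, use antisymmetry to replace $q_1$ and $q_1q_2$ by the combinatorial factors $k/N$ and $k(k-1)/[N(N-1)]$, and then check pointwise that the product of the discrete derivative $m'^{(\pm d)}(k)$ with these factors is dominated by the appropriate multiple of $m^{(\gamma)}(k)$, with the only nontrivial case being the boundary layer near $k=\lfloor N^{\gamma}\rfloor$. One small slip: in the transition band for the ``$-d$'' derivative the correct expression is $m'^{(-d)}(k)=1-kN^{-\gamma}$ (not $1-(k+d)N^{-\gamma}$), but your bound $m'^{(-d)}(k)\leq dN^{-\gamma}$ still holds there, so the conclusion is unaffected.
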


\noindent\textbf{Remarks.}
\begin{enumerate}
\setcounter{enumi}{\theremarks}
\item Note that more generally the estimate
\begin{equation}
\norm{ q_1\ldots q_n \, \widehat{m'^{(\pm d)}} ^{1/2}\psi}^2 \leq C(d,n) \, N^{n(\gamma-1)-\gamma} \, \alpha_{m^{(\gamma)}}
\end{equation}
holds for all $1<d<N$ and $1<n<N$, for some constant $C$ that depends only on $d$ and $n$.
\end{enumerate}
\setcounter{remarks}{\theenumi}

\begin{proof}[Proof of Lemma \ref{lem:q_root_f}]
First, recall that for all antisymmetric $\phi \in L^2(\RRR^{3N})$,
\begin{equation}\label{q_alpha}
\bigSCP{\phi}{q_1 \phi} = N^{-1} \bigSCP{\phi}{\sum_{m=1}^N q_m \phi} = \sum_{k=0}^N N^{-1} \bigSCP{\phi}{\sum_{m=1}^N q_m P^{(N,k)} \phi} = \sum_{k=0}^N \frac{k}{N} \, \bigSCP{\phi}{P^{(N,k)} \phi},
\end{equation}
and in a similar way
\begin{equation}\label{qq_alpha}
\bigSCP{\phi}{q_1q_2 \phi} = \frac{1}{N(N-1)} \bigSCP{\phi}{\sum_{\substack{m,n = 1 \\ m \neq n}}^N q_mq_n \phi} = \sum_{k=0}^N \frac{k(k-1)}{N(N-1)} \, \bigSCP{\phi}{P^{(N,k)} \phi}.
\end{equation}
Recall that we denote the floor function by $\floor{\cdot}$, i.e., for any $x \in \RRR$, $\floor{x} = \max\{m \in \ZZZ: m \leq x \}$. Then we find
\begin{align}
\norm{ \widehat{m'^{(\pm d)}}^{1/2}\psi}^2 &= \bigSCP{\psi}{\widehat{m'^{(\pm d)}}\psi} \nonumber \\
&= \left\{\begin{array}{cl} \sum_{k=d}^N \big( m(k)-m(k-d) \big) \, \SCP{\psi}{P^{(N,k)} \psi} &, \text{for } + \\ \sum_{k=0}^{N-d} \big( m(k+d)-m(k) \big) \, \SCP{\psi}{P^{(N,k)} \psi} & , \text{for } - \end{array}\right. \nonumber \\
&\leq d N^{-\gamma} \, \bigSCP{\psi}{\sum_{k=0}^N P^{(N,k)} \psi} \nonumber \\
&\leq d N^{-\gamma},
\end{align}
and, by using \eqref{q_alpha},
\begin{align}
\norm{ q_1 \widehat{m'^{(\pm d)}}^{1/2}\psi}^2 &= \bigSCP{\psi}{q_1 \widehat{m'^{(\pm d)}}\psi} \nonumber \\
&= \left\{\begin{array}{cl} \sum_{k=d}^N \big( m(k)-m(k-d) \big) \frac{k}{N} \, \SCP{\psi}{P^{(N,k)} \psi} &, \text{for } + \\ \sum_{k=0}^{N-d} \big( m(k+d)-m(k) \big) \frac{k}{N} \, \SCP{\psi}{P^{(N,k)} \psi} & , \text{for } - \end{array}\right. \nonumber \\
&\leq \frac{d}{N} \, \left\{\begin{array}{cl} \sum_{k=d}^{\floor{N^{\gamma}}+d} \frac{k}{N^{\gamma}} \, \SCP{\psi}{P^{(N,k)} \psi} &, \text{for } + \\ \sum_{k=0}^{\floor{N^{\gamma}}} \frac{k}{N^{\gamma}} \, \SCP{\psi}{P^{(N,k)} \psi} & , \text{for } - \end{array}\right. \nonumber \\
&\leq C(d) \, N^{-1} \alpha_{m^{(\gamma)}},
\end{align}
and, by using \eqref{qq_alpha},
\begin{align}
\norm{ q_1q_2 \widehat{m'^{(\pm d)}}^{1/2}\psi}^2 &= \bigSCP{\psi}{q_1q_2 \widehat{m'^{(\pm d)}}\psi} \nonumber \\
&= \left\{\begin{array}{cl} \sum_{k=d}^N \big( m(k)-m(k-d) \big) \frac{k(k-1)}{N(N-1)} \, \SCP{\psi}{P^{(N,k)} \psi} &, \text{for } + \\ \sum_{k=0}^{N-d} \big( m(k+d)-m(k) \big) \frac{k(k-1)}{N(N-1)} \, \SCP{\psi}{P^{(N,k)} \psi} & , \text{for } - \end{array}\right. \nonumber \\
&\leq \left\{\begin{array}{cl} \frac{d(\floor{N^{\gamma}}+d-1)}{N(N-1)} \sum_{k=d}^{\floor{N^{\gamma}}+d} \frac{k}{N^{\gamma}} \, \SCP{\psi}{P^{(N,k)} \psi} &, \text{for } + \\ \frac{d(\floor{N^{\gamma}}-1)}{N(N-1)} \sum_{k=0}^{\floor{N^{\gamma}}} \frac{k}{N^{\gamma}} \, \SCP{\psi}{P^{(N,k)} \psi} & , \text{for } - \end{array}\right. \nonumber \\
&\leq \tilde{C}(d) \, N^{\gamma-2} \alpha_{m^{(\gamma)}}.
\end{align}
Careful consideration of the boundary terms around the $k=\floor{N^{\gamma}}$ summands gives the values of the constants $C(d)$ and $\tilde{C}(d)$.
\end{proof}

\subsection{Diagonalization of $p_2h_{12}p_2$ and Related Lemmas}\label{sec:estimates_projectors}
For handling the terms in the time derivative of $\alpha_f(t)$ from equation \eqref{alpha_derivative}, it is often useful to diagonalize operators $p_2h_{12}p_2$. Let us briefly summarize what we mean by that and introduce some notation on the way. For any $h: \RRR^3 \to [0, \infty)$ and $\varphi_1,\ldots,\varphi_N \in L^2(\RRR^3)$ such that $(h * \rho_N)(x) < \infty$ for all $x\in \RRR^3$ (with $\rho_N(x) = \sum_{i=1}^N |\varphi_i(x)|^2$), the operator $p_2h_{12}p_2$ is a multiplication operator in $x_1$ and a projector onto the $N$-dimensional subspace $\Span(\varphi_1,\ldots,\varphi_N)$ in the second variable. Therefore one can write it as an $x_1$ dependent non-negative self-adjoint $(N \times N)$-matrix acting on the second variable, i.e., (recall our notation from Section~\ref{sec:notation})
\begin{equation}\label{diagonalization}
p_2h_{12}p_2 = \sum_{i,j=1}^N \scp{\varphi_i}{h_{12}\,\varphi_j}_2(x_1) \, \ket{\varphi_i}\bra{\varphi_j}_2 = \sum_{i=1}^N \lambda_i(x_1) \, \ket{\chi_i^{x_1}} \bra{\chi_i^{x_1}}_2 = \sum_{i=1}^N \lambda_i(x_1) \, p_2^{\chi_i^{x_1}},
\end{equation}
where the eigenvectors $\ket{\chi_i^{x_1}}_2$ are orthonormal and can be written as $\ket{\chi_i^{x_1}}_2 = \sum_{k=1}^N U_{ik}(x_1) \, \ket{\varphi_k}_2$, where $U(x_1)$ is some unitary $(N\times N)$-matrix. Note that $\Span(\chi_1^{x_1},\ldots,\chi_N^{x_1}) = \Span(\varphi_1,\ldots,\varphi_N)$ for all $x_1 \in \RRR^3$ and that the projector $p_2$ is independent of the choice of basis, i.e.,
\begin{equation}
p_2 = \sum_{i=1}^N \ket{\varphi_i} \bra{\varphi_i}_2 = \sum_{i=1}^N \ket{\chi_i^{x_1}} \bra{\chi_i^{x_1}}_2.
\end{equation}
The eigenvalues $\lambda_i(x_1)$ have the properties that
\begin{equation}
\lambda_i(x_1) = \scp{\chi_i^{x_1}}{p_2h_{12}p_2\chi_i^{x_1}}_2(x_1) = \scp{\chi_i^{x_1}}{h_{12}\,\chi_i^{x_1}}_2(x_1) < \infty,
\end{equation}
\begin{equation}\label{diagonalization_EV_sum}
\sum_{i=1}^N \lambda_i(x_1) = \sum_{i=1}^N \scp{\chi_i^{x_1}}{h_{12} \, \chi_i^{x_1}}_2(x_1) = \sum_{j=1}^N \scp{\varphi_j}{h_{12} \, \varphi_j}_2(x_1) = (h * \rho_N)(x_1),
\end{equation}
and furthermore, for all $i \neq j$,
\begin{equation}
\scp{\chi_i^{x_1}}{h_{12}\,\chi_j^{x_1}}_2(x_1) = \scp{\chi_i^{x_1}}{p_2h_{12}p_2\,\chi_j^{x_1}}_2(x_1) = \scp{\chi_i^{x_1}}{\lambda_j(x_1)\,\chi_j^{x_1}}_2(x_1) = 0.
\end{equation}
We use this diagonalization in the proof of Lemma~\ref{lem:estimates_terms_alpha_dot_beta} for the operator $p_2v_{12}^{(N)}p_2$ from term $(I)_f$ from the time derivative of $\alpha_f(t)$ in \eqref{alpha_derivative}, and to prove the following lemma which we need in order to bound term $(II)_f$ and $(III)_f$ from \eqref{alpha_derivative}. Note that this lemma is similar to \eqref{tr_HS_op_ineqs}; the additional $N^{-1}$ factors come from Lemma~\ref{lem:projector_norms}.

\begin{lemma}\label{lem:psi_op_psi_diag_tr} 
Let $\varphi_1,\ldots,\varphi_N \in L^2(\RRR^3)$ be orthonormal, $h: \RRR^3 \to [0, \infty)$ and set $\rho_N(x) = \sum_{i=1}^N |\varphi_i(x)|^2$. Then, 
\begin{enumerate}[(a)]
\item for all $\psi \in L^2(\RRR^{3N})$ which are antisymmetric in the variables $x_2,\ldots,x_N$,
\begin{equation}\label{psi_pvp_psi}
\SCP{\psi}{p_2 \, h_{12}\, p_2 \psi} \leq (N-1)^{-1} \, \left(\sup_{y \in \RRR^3} (h * \rho_N)(y) \right) \, \SCP{\psi}{\psi},
\end{equation}
\item for all antisymmetric $\psi \in L^2(\RRR^{3N})$,
\begin{equation}\label{psi_ppvpp_psi}
\SCP{\psi}{p_1p_2 \, h_{12} \, p_1p_2 \psi} \leq (N(N-1))^{-1} \, \left(\int_{\RRR^3} (h * \rho_N)(y)\, \rho_N(y) \, d^3y\right) \, \SCP{\psi}{\psi},
\end{equation}
where this inequality remains true with $(N(N-1))^{-1}$ replaced by $((N-1)(N-2))^{-1}$, when $\psi$ is antisymmetric in all variables except $x_3$.
\end{enumerate}
\end{lemma}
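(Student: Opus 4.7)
For part (a), the plan is to employ the diagonalization \eqref{diagonalization} of $p_2 h_{12} p_2 = \sum_i \lambda_i(x_1)\, p_2^{\chi_i^{x_1}}$, whose eigenvalues sum to $(h*\rho_N)(x_1)$ by \eqref{diagonalization_EV_sum}. Writing
\begin{equation*}
\SCP{\psi}{p_2 h_{12} p_2 \psi} = \int dx_1 \sum_i \lambda_i(x_1)\, \SCP{\psi}{p_2^{\chi_i^{x_1}} \psi}_{2,\ldots,N}(x_1),
\end{equation*}
I would apply Lemma~\ref{lem:projector_norms} pointwise in $x_1$ with $a=1$ and $\varphi := \chi_i^{x_1}$; the $x_1$-dependence of $\chi_i^{x_1}$ is harmless since Lemma~\ref{lem:projector_norms} is local in the frozen variables. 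This yields $\SCP{\psi}{p_2^{\chi_i^{x_1}}\psi}_{2,\ldots,N}(x_1) \leq (N-1)^{-1}\SCP{\psi}{\psi}_{2,\ldots,N}(x_1)$. Summing in $i$ folds in the convolution, and bounding by $\sup_y(h*\rho_N)(y)$ before integrating $x_1$ completes the argument.

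For part (b), the plan is to exploit the antisymmetry of the $x_1,x_2$-dependence of $p_1 p_2 \psi$ by moving to the orthonormal antisymmetric two-particle basis $\Phi_{ij} := (\varphi_i \otimes \varphi_j - \varphi_j \otimes \varphi_i)/\sqrt{2}$. Expanding $p_1 p_2 \psi(x_1,x_2,\vec{x}) = \sqrt{2}\sum_{i<j} c_{ij}(\vec{x})\Phi_{ij}(x_1,x_2)$ with $c_{ij}(\vec{x}) := \int \varphi_i^*(y_1)\varphi_j^*(y_2)\psi(y_1,y_2,\vec{x})\, dy_1 dy_2$ (note $c_{ij}=-c_{ji}$ by antisymmetry) gives
\begin{equation*}
\SCP{\psi}{p_1 p_2 h_{12} p_1 p_2 \psi} = 2\,\tr(\tilde M \tilde B),
\end{equation*}
where $\tilde M_{(ij)(kl)} := \SCP{\Phi_{ij}}{h_{12}\Phi_{kl}}$ and $\tilde B_{(ij)(kl)} := \int c_{ij}^*(\vec{x})c_{kl}(\vec{x})\,d\vec{x}$ are positive semidefinite matrices on $\CCC^{\binom{N}{2}}$. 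I would then apply $\tr(\tilde M \tilde B) \leq \|\tilde B\|_{\op}\,\tr\tilde M$. The trace factor is the direct computation $\tr\tilde M = \tfrac{1}{2}\big(\int(h*\rho_N)\rho_N - E_{\exch}\big) \leq \tfrac{1}{2}\int(h*\rho_N)\rho_N$, where $E_{\exch} := \int h(x_1-x_2)\big|\sum_i \varphi_i(x_1)\varphi_i^*(x_2)\big|^2 dx_1 dx_2 \geq 0$ is the non-negative exchange energy. For the operator norm, a direct computation gives $\SCP{v}{\tilde B v} = \tfrac{1}{2}\SCP{\zeta_v}{\gamma_2^\psi \zeta_v}$ for the antisymmetric 2-particle vector $\zeta_v := \sum_{i<j} v_{ij}^*\Phi_{ij}$, so $\|\tilde B\|_{\op} \leq (N(N-1))^{-1}\|\psi\|^2$ reduces to the fermionic Pauli estimate $\|\gamma_2^\psi\|_{\op} \leq 2/(N(N-1))\|\psi\|^2$ on antisymmetric 2-particle states.

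The main obstacle will be this Pauli-type bound, the two-body analog of $\|\gamma_1^\psi\|_{\op} \leq N^{-1}\|\psi\|^2$ used in Lemma~\ref{lem:projector_norms}. I would prove it from the same Cauchy-Schwarz template by first establishing, for orthonormal $\varphi \perp \tilde\varphi$ and $m \neq n$, the pair-projector estimate
\begin{equation*}
\SCP{\psi}{p_m^\varphi p_n^{\tilde\varphi}\psi} \leq (N(N-1))^{-1
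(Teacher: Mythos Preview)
Your treatment of part~(a) is exactly the paper's argument: diagonalize $p_2h_{12}p_2$, apply Lemma~\ref{lem:projector_norms} pointwise in $x_1$ with $a=1$, sum the eigenvalues via \eqref{diagonalization_EV_sum}, take the supremum, and integrate.

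For part~(b) you take a genuinely different route. The paper simply iterates the idea of part~(a): after applying Lemma~\ref{lem:projector_norms} in the second variable one is left with $(N-1)^{-1}\SCP{\psi}{p_1\,(h*\rho_N)(x_1)\,p_1\psi}$; then one diagonalizes the bounded self-adjoint operator $p_1(h*\rho_N)(x_1)p_1=\sum_j \mu_j\, p_1^{\tilde\varphi_j}$ and applies Lemma~\ref{lem:projector_norms} once more in the first variable to pick up the second factor $N^{-1}$ and the trace $\sum_j\mu_j=\int(h*\rho_N)\rho_N$. This is entirely self-contained (only Lemma~\ref{lem:projector_norms} is used) and immediately yields the variant with $(N(N-1))^{-1}$ replaced by $((N-1)(N-2))^{-1}$ when $\psi$ is antisymmetric in all variables except $x_3$, by adjusting the two applications of Lemma~\ref{lem:projector_norms}. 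Your approach via the antisymmetric two-particle basis and the bound $\tr(\tilde M\tilde B)\le\|\tilde B\|_{\op}\tr\tilde M$ is correct and even slightly sharper (you subtract the non-negative exchange energy $E_{\exch}$), but it trades elementarity for the two-body Pauli bound $\|\gamma_2^\psi\|_{\op}\le 2/(N(N-1))$, and the ``antisymmetric except in $x_3$'' variant is less direct in your framework.

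One caution about your sketched proof of that Pauli bound: the pair-projector estimate $\SCP{\psi}{p_m^{\varphi}p_n^{\tilde\varphi}\psi}\le (N(N-1))^{-1}\|\psi\|^2$ (which indeed follows from $\SCP{\psi}{p^{\varphi}p^{\tilde\varphi}\psi}\le\|\psi\|^2$ with $p^{\varphi}=\sum_m p_m^{\varphi}$) only controls $\SCP{\zeta}{\gamma_2^{\psi}\zeta}$ for \emph{rank-two} test functions $\zeta=\varphi\wedge\tilde\varphi$. A general antisymmetric $\zeta$ has canonical form $\sum_k c_k\,\eta_{2k-1}\wedge\eta_{2k}$ and the cross terms do not vanish, so the one-body template of Lemma~\ref{lem:projector_norms} does not carry over verbatim. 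The bound $\|\gamma_2^{\psi}\|_{\op}\le\binom{N}{2}^{-1}$ is of course true, but establishing it requires an additional argument (e.g.\ bounding the norm of the pair-annihilation operator $\sum_k c_k\,a_{\eta_{2k}}a_{\eta_{2k-1}}$ on fermionic Fock space). The paper's iterated diagonalization sidesteps this entirely.
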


\begin{proof}
Recall our notation for the ``partial scalar product'' from \eqref{partial_scp}. For proving \eqref{psi_pvp_psi} we use the diagonalization \eqref{diagonalization}, Lemma~\ref{lem:projector_norms} and \eqref{diagonalization_EV_sum}. We find
\begin{align}
\bigSCP{\psi}{p_2 \, h_{12}\, p_2 \psi} &= \int d^3x_1 \sum_{i=1}^N \underbrace{\lambda_i(x_1)}_{\geq 0} \underbrace{\bigSCP{\psi}{p_2^{\chi_i^{x_1}} \psi}_{2,\ldots,N}(x_1)}_{\geq 0 ~ \forall \, x_1} \nonumber \\
&\leq \int d^3x_1 \sum_{i=1}^N \lambda_i(x_1) \, (N-1)^{-1} \, \bigSCP{\psi}{\psi}_{2,\ldots,N}(x_1) \nonumber \\
&\leq (N-1)^{-1} \left( \sup_{x_1} \sum_{i=1}^N \lambda_i(x_1)\right) \int d^3x_1 \bigSCP{\psi}{\psi}_{2,\ldots,N}(x_1) \nonumber \\
&= (N-1)^{-1} \left( \sup_{x_1}\, (h*\rho_N)(x_1) \right) \bigSCP{\psi}{\psi}.
\end{align}
For proving \eqref{psi_ppvpp_psi} we diagonalize $p_1(h*\rho_N)(x_1)p_1 = p_1 \sum_{i=1}^N \lambda_i(x_1)p_1$. We call the eigenvalues $\mu_j$ and the eigenvectors $\tilde{\varphi}_j$. With the diagonalization \eqref{diagonalization} and Lemma~\ref{lem:projector_norms} we find
\begin{align}
\bigSCP{\psi}{p_1p_2 \, h_{12} \, p_1p_2 \psi} &= \int d^3x_1 \sum_{i=1}^N \lambda_i(x_1) \bigSCP{p_1\psi}{ p_2^{\chi_i^{x_1}} p_1\psi}_{2,\ldots,N}(x_1) \nonumber \\
&\leq (N-1)^{-1} \bigSCP{\psi}{p_1 \sum_{i=1}^N \lambda_i(x_1) p_1\psi} \nonumber \\
&= (N-1)^{-1} \bigSCP{\psi}{\sum_{j=1}^N \mu_j \, p_1^{\tilde{\varphi}_j} \psi} \nonumber \\
&\leq (N(N-1))^{-1} \left( \sum_{j=1}^N \mu_j \right) ~ \bigSCP{\psi}{\psi} \nonumber \\
&= (N(N-1))^{-1} \left( \int (h*\rho_N)(x)\, \rho_N(x) \,d^3x \right) ~ \bigSCP{\psi}{\psi}.
\end{align}
If $\psi$ is antisymmetric in all variables except $x_3$, then by Lemma~\ref{lem:projector_norms} one can only extract factors $(N-2)^{-1}$ instead of $(N-1)^{-1}$, and $(N-1)^{-1}$ instead of $N^{-1}$ from the antisymmetry of $\psi$.
\end{proof}

\subsection{Bounds on $\partial_t \alpha_f(t)$}\label{sec:alpha_m_dot_rigorous}
We now give the rigorous bounds for the three terms in the time derivative of $\alpha_f(t)$ given by \eqref{alpha_derivative}. Here, we use the weight function $m^{(\gamma)}(k)$ from \eqref{weight_m_gamma}. This also contains the case where $\gamma=1$, i.e., where the weight function is $n(k)$. The estimates are collected in the following lemma, which constitutes the heart of the proof of our main results.

We state this lemma only for positive $v^{(N)}$. If $v^{(N)}$ contains both positive and negative parts, we later decompose $v^{(N)} = v^{(N)}_{+} - v^{(N)}_{-}$, with $v^{(N)}_{+},v^{(N)}_{-} \geq 0$, and then estimate the three terms in \eqref{alpha_derivative} separately for $v^{(N)}_{+}$ and $v^{(N)}_{-}$.

\begin{lemma}\label{lem:estimates_terms_alpha_dot_beta}
Let $\varphi_1,\ldots,\varphi_N \in L^2(\RRR^3)$ be orthonormal and $\psi \in L^2(\RRR^{3N})$ be antisymmetric. Let $v^{(N)}$ be positive and set $\rho_N(x) = \sum_{i=1}^N |\varphi_i(x)|^2$. Let $V_1^{(N)} = (v*\rho_N)_1$ be the direct mean-field interaction. Let $\Omega_N \subseteq \RRR^3$, with $\Omega_N(x_1)=\Omega_N+x_1$ and $\Omega_N^c=\RRR^3 \setminus \Omega_N$ (possibly $\Omega_N=\RRR^3$ or $\Omega_N=\emptyset$). Then, using the weight function $m^{(\gamma)}(k)$ from \eqref{weight_m_gamma}, we find for all $0<\gamma\leq 1$ for the three terms from \eqref{alpha_derivative},
\begin{align}
\label{term_1}\big\lvert (I)_{m^{(\gamma)}} \big\rvert & \leq \sqrt{8} \left( \sup_{x_1\in\RRR^3} \int_{\Omega_N(x_1)}v^{(N)}(x_1-y)^2\rho_N(y)\,d^3y \right)^{1/2} N^{1/2} \, \Big( \alpha_{m^{(\gamma)}} + N^{-\gamma} \Big) & \nonumber \\*
&\quad + \sqrt{12} \left( \sup_{y\in\Omega_N^c} v^{(N)}(y) \right) N^{1/2+\gamma/2} \, \Big( \alpha_{m^{(\gamma)}} + N^{-\gamma} \Big), \\
\label{term_2}\big\lvert (II)_{m^{(\gamma)}} \big\rvert &\leq \sqrt{6} \left( \int \Big(\big(v^{(N)}\big)^2*\rho_N\Big)(y)\,\rho_N(y)\,d^3y \right)^{1/2} \Big( \alpha_{m^{(\gamma)}} + N^{-\gamma} \Big), \\
\label{term_3}\big\lvert (III)_{m^{(\gamma)}} \big\rvert &\leq \sqrt{12} \left( \sup_{y\in\RRR^3} \Big(\big(v^{(N)}\big)^2*\rho_N\Big)(y) \right)^{1/2} \, N^{\gamma/2} \, \alpha_{m^{(\gamma)}}.
\end{align}
\end{lemma}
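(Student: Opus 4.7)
The plan is to bound the three pieces $(I)_{m^{(\gamma)}}$, $(II)_{m^{(\gamma)}}$, $(III)_{m^{(\gamma)}}$ from Lemma~\ref{lem:alpha_derivative} separately. In each case, I would Cauchy--Schwarz the scalar product, use Lemma~\ref{lem:q_root_f} to convert the factors of $\widehat{m'^{(\pm d)}}^{1/2}\psi$ into powers of $\alpha_{m^{(\gamma)}}$ and $N^{-\gamma}$, use the diagonalization of $p_2 h_{12} p_2$ from Section~\ref{sec:estimates_projectors} together with Lemma~\ref{lem:psi_op_psi_diag_tr} to bound expectations of two-body operators between the surviving projectors, and appeal to Lemma~\ref{lem:projector_norms} whenever extra fermionic $(N-a)^{-1}$ factors are needed. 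At the end, any remaining $\sqrt{\alpha_{m^{(\gamma)}}\cdot N^{-\gamma}}$ will be converted into $(\alpha_{m^{(\gamma)}} + N^{-\gamma})$ via AM--GM.

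Term $(III)$ is the most direct: Cauchy--Schwarz pairs $q_1 q_2 \widehat{m'^{(1)}}^{1/2}\psi$ (with $\|\cdot\|^2 \leq 3\,N^{\gamma-2}\alpha_{m^{(\gamma)}}$ from \eqref{q_root_f_2q}) against $v^{(N)}_{12}\, p_1\, q_2\, \widehat{m'^{(-1)}}^{1/2}\psi$. For the second factor I would diagonalize $p_1 (v^{(N)})^2 p_1$ (whose eigenvalue sum is $((v^{(N)})^2*\rho_N)$ as a function of $x_2$) and apply Lemma~\ref{lem:projector_norms} to $q_2\widehat{m'^{(-1)}}^{1/2}\psi$ --- which is antisymmetric in every variable except $x_2$ --- to gain a factor $(N-1)^{-1}$; combined with $\|q_2\widehat{m'^{(-1)}}^{1/2}\psi\|^2 = \|q_1\widehat{m'^{(-1)}}^{1/2}\psi\|^2 \leq 2 N^{-1}\alpha_{m^{(\gamma)}}$ this supplies a second $\alpha$, producing \eqref{term_3} with the scaling $N^{\gamma/2}\alpha_{m^{(\gamma)}}$ and no $N^{-\gamma}$ remainder. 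Term $(II)$ is handled similarly, but Cauchy--Schwarz pairs only $q_1\widehat{m'^{(2)}}^{1/2}\psi$ (with $\|\cdot\|^2 \leq 3 N^{-1}\alpha_{m^{(\gamma)}}$, the $d=2$ version of \eqref{q_root_f_1q}) against $v^{(N)}_{12}\, p_1 p_2\, \widehat{m'^{(-2)}}^{1/2}\psi$, the latter being bounded by $\sqrt{2 D_2 N^{-\gamma}/(N(N-1))}$ directly from Lemma~\ref{lem:psi_op_psi_diag_tr}(b). The $N$-factors collapse to leave $|(II)|^2 \lesssim D_2\, \alpha_{m^{(\gamma)}}\, N^{-\gamma}$, which by AM--GM is at most $\sqrt{D_2}(\alpha_{m^{(\gamma)}} + N^{-\gamma})^2/4$, yielding \eqref{term_2}.

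Term $(I)$ is the delicate one. I would split $v^{(N)} = v_s + v_\ell$ with $v_s = v^{(N)}\mathbf{1}_{\Omega_N}$, so that $V_1 = U_1 + W_1$ and $(I) = (I)_s + (I)_\ell$, and in each piece $\sharp\in\{s,\ell\}$ use the algebraic identity
\[
(N-1)\,p_2 v_\sharp p_2 \;-\; V_\sharp \;=\; p_2\bigl[(N-1) v_\sharp - V_\sharp\bigr] p_2 \;-\; V_\sharp\, q_2
\]
to isolate a ``$pp$'' part from a part carrying an extra $q_2$. For the $q_2$ part I would commute $q_2$ past the commuting factors $V_\sharp$ and $p_1$, Cauchy--Schwarz $q_1\widehat{m'^{(1)}}^{1/2}\psi$ against $V_\sharp\, q_2\, p_1 \widehat{m'^{(-1)}}^{1/2}\psi$, and estimate the latter by diagonalizing $p_1 V_\sharp^2 p_1$ and again invoking Lemma~\ref{lem:projector_norms} on the antisymmetric $q_2\widehat{m'^{(-1)}}^{1/2}\psi$ to produce $(N-1)^{-1}$ against the trace $\int V_\sharp^2 \rho_N$; the pointwise Cauchy--Schwarz $V_\sharp(x_1)^2 \le N((v_\sharp)^2*\rho_N)(x_1)$ then supplies the targeted dependence on $\sup\int_{\Omega_N} v^2 \rho_N$ (respectively on $\sup_{\Omega_N^c} v$). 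For the ``$pp$'' part I would expand $[(N-1) v_\sharp - V_\sharp]^2$, diagonalize $p_2(\cdot) p_2$ in the eigenbasis of $p_2 v_\sharp p_2$, and apply Lemma~\ref{lem:projector_norms} pointwise in $x_1$; the pointwise identity $\sum_i[(N-1)\lambda_i(x_1) - V_\sharp(x_1)]^2 = (N-1)^2 \sum\lambda_i^2 - (N-2)\,V_\sharp^2$, combined with $\sum\lambda_i^2 \le ((v_\sharp)^2*\rho_N)(x_1)$, is what drives the bound.

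The main obstacle will be the sharp $N$-bookkeeping in $(I)^{pp}$: a na\"ive Cauchy--Schwarz loses an $N^{1/2}$ factor relative to \eqref{term_1}. Recovering it requires keeping the full pointwise identity above --- in particular the negative cross term $-(N-2) V_\sharp^2$ --- and exploiting the fermionic constraint $\|p_1 p_2 \phi\|^2 \le \|\phi\|^2/(N-1)$ (which is Lemma~\ref{lem:psi_op_psi_diag_tr}(b) applied to $h \equiv 1$), so that the ``fluctuation'' of the pair interaction around its mean--field trace is controlled by $((v^{(N)})^2 * \rho_N)$ rather than by $N\cdot((v^{(N)})^2 * \rho_N)$. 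With this tightened bookkeeping in hand, a final AM--GM step absorbs every surviving $\sqrt{\alpha_{m^{(\gamma)}} N^{-\gamma}}$ into $(\alpha_{m^{(\gamma)}} + N^{-\gamma})$, yielding the two contributions in \eqref{term_1}: the singular piece at scale $\sqrt{\sup\int_{\Omega_N} v^2 \rho_N}\, N^{1/2}$ and the long-range piece at scale $\sup_{\Omega_N^c} v \cdot N^{1/2+\gamma/2}$.
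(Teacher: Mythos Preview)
Your treatment of term $(III)$ is essentially the paper's. The real problems are in $(II)$ and in the $pp$-part of $(I)$.

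\textbf{Term $(II)$.} The claimed ``$N$-factors collapse'' is an arithmetic slip. With $\|q_1\widehat{m'^{(2)}}^{1/2}\psi\|^2\le 3N^{-1}\alpha_m$ and $\|v^{(N)}_{12}p_1p_2\widehat{m'^{(-2)}}^{1/2}\psi\|^2\le 2D_2N^{-\gamma}(N(N-1))^{-1}$, the prefactor $N(N-1)$ leaves
\[
|(II)| \;\le\; N(N-1)\sqrt{3N^{-1}\alpha_m}\sqrt{\tfrac{2D_2N^{-\gamma}}{N(N-1)}}
\;=\;\sqrt{6D_2\,\alpha_m N^{-\gamma}}\,\sqrt{N-1},
\]
i.e.\ a surplus factor $\sqrt{N}$. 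The paper recovers this factor by \emph{not} discarding $q_2$: it uses antisymmetry to write $q_2(N-1)v^{(N)}_{12}p_1p_2\tilde\phi=\sum_{m=2}^N q_m v^{(N)}_{1m}p_1p_m\tilde\phi$ and then expands $\big\|\sum_m q_m v^{(N)}_{1m}p_1p_m\tilde\phi\big\|^2$ into a diagonal piece (only $N-1$ terms) and an off-diagonal piece carrying an extra $q_3$, which supplies an additional $\alpha_m$.

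\textbf{Term $(I)$, the $pp$-part.} Your identity $(N-1)p_2v_\sharp p_2-V_\sharp=p_2[(N-1)v_\sharp-V_\sharp]p_2-V_\sharp q_2$ is correct, and the $q_2$ piece can be closed along the lines you indicate. But the fix you propose for the $pp$ piece does not work. In the eigenbasis of $p_2v_\sharp p_2$ you get eigenvalues $\mu_j=(N-1)\lambda_j-V_\sharp$ with
\[
\sum_j\mu_j^2=(N-1)^2\sum_j\lambda_j^2-(N-2)V_\sharp^2,
\]
and while the cross term is negative, it cannot beat $(N-1)^2\sum_j\lambda_j^2$: take the extreme case $\lambda_1=V_\sharp$, $\lambda_{j\ge2}=0$, so $\sum_j\lambda_j^2=V_\sharp^2=((v_\sharp)^2*\rho_N)(x_1)$ and $\sum_j\mu_j^2=\big[(N-1)^2-(N-2)\big]V_\sharp^2\sim N^2\,((v_\sharp)^2*\rho_N)(x_1)$. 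Combined with the single $(N-1)^{-1}$ from Lemma~\ref{lem:projector_norms}, this leaves the extra $\sqrt{N}$ you were trying to avoid. The paper's mechanism is structurally different: using antisymmetry it rewrites
\[
(N-1)p_2v^{(N)}_{12}p_2-V^{(N)}_1\;=\;\sum_{m=2}^N p_m v^{(N)}_{1m}p_m-\sum_i\lambda_i(x_1)\;=\;-\sum_i\lambda_i(x_1)\,q_{\neq1}^{\chi_i^{x_1}}
\]
on $p_1\tilde\phi$, where $q_{\neq1}^{\chi}=1-\sum_{m\ge2}p_m^{\chi}$. One then Cauchy--Schwarzes in $i$, pairing $\sum_i\lambda_i^2\le((v^{(N)})^2*\rho_N)(x_1)$ against $\sum_i q_{\neq1}^{\chi_i}=(N-1)q_2+\id$ (equation~\eqref{p_neq_1}). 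It is this operator identity, not a variance bound on $\sum_j\mu_j^2$, that extracts the missing $q_2$ and gives the correct $N^{1/2}$ scaling. Only \emph{after} this step does the paper split each eigenvalue into its $\Omega_N$ and $\Omega_N^c$ parts.
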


\begin{proof}
\textbf{Term} $\boldsymbol{(I)_m.}$ In the following, we diagonalize the operator $p_mv^{(N)}_{1m}p_m$ and use the same notation as in \eqref{diagonalization}. We split each eigenvalue into two parts,
\begin{equation}
\lambda_i(x_1) = \underbrace{\int_{\Omega_N(x_1)} v^{(N)}(x_1-y) \left\lvert \chi_i^{x_1}(y) \right\rvert^2 \, d^3y}_{=:\lambda_{i}^{\Omega_N}(x_1)} + \underbrace{\int_{\Omega_N^c(x_1)} v^{(N)}(x_1-y) \left\lvert \chi_i^{x_1}(y) \right\rvert^2 \, d^3y}_{=:\lambda_{i}^{\Omega_N^c}(x_1)}.
\end{equation}
Note that by using Cauchy-Schwarz, $\sum_{i=1}^N \left\lvert \chi_i^{x_1}(y) \right\rvert^2 = \sum_{i=1}^N \left\lvert \varphi_i(y) \right\rvert^2$ and $\norm{\chi_i^{x_1}} = 1$, we find
\begin{align}\label{sup_v2}
\sum_{i=1}^N \lambda_{i}^{\Omega_N}(x_1)^2 &\leq \sum_{i=1}^N  \left[ \left( \int_{\Omega_N(x_1)} \left(v^{(N)}(x_1-y)\right)^2 \left\lvert \chi_i^{x_1}(y) \right\rvert^2 \, d^3y \right)^{1/2} \left( \int_{\Omega_N(x_1)} \left\lvert \chi_i^{x_1}(y) \right\rvert^2 \, d^3y \right)^{1/2}\right]^2 \nonumber \\
&\leq \int_{\Omega_N(x_1)}\left(v^{(N)}(x_1-y)\right)^2 \rho_N(y) \, d^3y,
\end{align}
and
\begin{equation}\label{lambda_omega_bar}
\lambda_{i}^{\Omega_N^c}(x_1) \leq \left( \sup_{y \in \Omega_N^c(x_1)} v^{(N)}(x_1-y) \right) \int_{\RRR^3} \left\lvert \chi_i^{x_1}(y) \right\rvert^2 \, d^3y  = \sup_{y \in \Omega_N^c} v^{(N)}(y).
\end{equation}
For bounding $(I)_m$ it is useful to introduce the projectors
\begin{equation}\label{q_neq_1}
p_{\neq 1}^{\varphi_i} := \sum_{m=2}^N p_m^{\varphi_i}, \quad q_{\neq 1}^{\varphi_i} := \id - p_{\neq 1}^{\varphi_i}
\end{equation}
that act on all but the first variable. From these projectors we only need the properties that for all $\psi_{\as}^1$ that are antisymmetric in all variables except $x_1$, $\left(q_{\neq 1}^{\chi_i^{x_1}}\right)^2 \psi_{\as}^1=q_{\neq 1}^{\chi_i^{x_1}} \psi_{\as}^1$, and that
\begin{equation}\label{p_neq_1}
\bigSCP{\psi_{\as}^1}{\sum_{i=1}^N q_{\neq 1}^{\varphi_i} \psi_{\as}^1} = \bigSCP{\psi_{\as}^1}{\left( N - \sum_{i=1}^N \sum_{m=2}^N p_m^{\varphi_i} \right) \psi_{\as}^1} = (N-1) \bigSCP{\psi_{\as}^1}{q_2 \psi_{\as}^1} + \bigSCP{\psi_{\as}^1}{\psi_{\as}^1},
\end{equation}
which remains true if $\ket{\varphi_i}_m=\ket{\chi_i^{x_1}}_m$ ($m\geq 2$). In the following we abbreviate $\phi = \widehat{m'^{(1)}}^{1/2}\psi$ and $\tilde{\phi} = \widehat{m'^{(-1)}}^{1/2}\psi$. Note that both $\phi$ and $\tilde{\phi}$ are still antisymmetric. Then we find
\begin{align}\label{term_a_estimate2}
\big\lvert (I)_m \big\rvert &=  2 N \, \bigg\lvert \Im\, \bigSCP{\phi}{q_1 \Big( (N-1)p_2v^{(N)}_{12}p_2 - V_1^{(N)} \Big) p_1 \tilde{\phi}} \bigg\rvert \nonumber \\
&= 2 N \, \bigg\lvert \Im\, \bigSCP{\phi}{q_1 \left( \sum_{m=2}^N p_mv^{(N)}_{1m}p_m - V_1^{(N)} \right) p_1 \tilde{\phi}} \bigg\rvert \nonumber \\
\eqexp{by \eqref{diagonalization}, \eqref{diagonalization_EV_sum}}&= 2 N \, \bigg\lvert \Im\, \bigSCP{\phi}{q_1 \Bigg( \sum_{i=1}^N \lambda_i(x_1) \sum_{m=2}^N p_m^{\chi_i^{x_1}} - \sum_{i=1}^N \lambda_i(x_1) \Bigg) p_1 \tilde{\phi}} \bigg\rvert \nonumber \\
&= 2 N \, \bigg\lvert \Im\, \bigSCP{\phi}{q_1 \Bigg( \sum_{i=1}^N \lambda_i(x_1) q_{\neq 1}^{\chi_i^{x_1}} \Bigg) p_1 \tilde{\phi}} \bigg\rvert \nonumber \\
&\leq 2 N \, \bigg\lvert \sum_{i=1}^N \bigSCP{\phi}{q_1 \lambda_{i}^{\Omega_N}(x_1) q_{\neq 1}^{\chi_i^{x_1}} p_1 \tilde{\phi}} \bigg\rvert + 2 N \, \bigg\lvert \sum_{i=1}^N \bigSCP{\phi}{q_1 \lambda_{i}^{\Omega_N^c}(x_1) q_{\neq 1}^{\chi_i^{x_1}} p_1 \tilde{\phi}} \bigg\rvert.
\end{align}
For the first summand in \eqref{term_a_estimate2}, we find by Cauchy-Schwarz,
\begin{align}\label{term_a_estimate_epsilon}
& 2 N \, \bigg\lvert \sum_{i=1}^N \bigSCP{\phi}{q_1 \lambda_{i}^{\Omega_N}(x_1) q_{\neq 1}^{\chi_i^{x_1}} p_1 \tilde{\phi}} \bigg\rvert \nonumber \\
&\qquad\leq 2 N \, \left( \sum_{i=1}^N \bigSCP{\phi}{q_1 \left(\lambda_{i}^{\Omega_N}(x_1)\right)^2 q_1 \phi} \right)^{1/2} \, \left( \sum_{i=1}^N \bigSCP{\tilde{\phi}}{p_1 q_{\neq 1}^{\chi_i^{x_1}} p_1 \tilde{\phi}} \right)^{1/2} \nonumber \\
\eqexpl{by \eqref{p_neq_1}} &\qquad\leq 2 N \, \left[ \left( \sup_{x_1} \sum_{i=1}^N \left(\lambda_{i}^{\Omega_N}(x_1)\right)^2 \right) \bigSCP{\phi}{q_1 \phi} \bigg( (N-1) \bigSCP{\tilde{\phi}}{p_1 q_2 \tilde{\phi}} + \bigSCP{\tilde{\phi}}{p_1 \tilde{\phi}} \bigg) \right]^{1/2} \nonumber \\
\eqexpl{by Lem.~\ref{lem:q_root_f}} &\qquad\leq 2 N \, \left( \sup_{x_1} \sum_{i=1}^N \left(\lambda_{i}^{\Omega_N}(x_1)\right)^2 \right)^{1/2} \bigg( 2 N^{-1} \alpha_{m} \bigg)^{1/2} \, \bigg( \alpha_{m} + N^{-\gamma} \bigg)^{1/2} \nonumber \\
\eqexpl{by \eqref{sup_v2}}&\qquad\leq \sqrt{8} \left( \sup_{x_1} \int_{\Omega_N(x_1)} \left(v^{(N)}(x_1-y)\right)^2 \rho_N(y) \, d^3y \right)^{1/2} N^{1/2} \bigg( \alpha_{m} + N^{-\gamma} \bigg).
\end{align}
For the second summand in \eqref{term_a_estimate2}, we find 
\begin{align}\label{term_a_estimate_epsilon_bar}
&2 N \, \Big\lvert \sum_{i=1}^N \bigSCP{\phi}{q_1 \lambda_{i}^{\Omega_N^c}(x_1) q_{\neq 1}^{\chi_i^{x_1}} p_1 \tilde{\phi}} \Big\rvert \nonumber \\*
&\qquad= 2 N \, \Big\lvert \sum_{i=1}^N \bigSCP{\phi}{q_1 q_{\neq 1}^{\chi_i^{x_1}} \lambda_{i}^{\Omega_N^c}(x_1) q_{\neq 1}^{\chi_i^{x_1}} p_1 \tilde{\phi}} \Big\rvert \nonumber \\
&\qquad\leq 2 N \, \left( \sum_{i=1}^N \bigSCP{\phi}{q_1 \lambda_{i}^{\Omega_N^c}(x_1) q_{\neq 1}^{\chi_i^{x_1}} q_1 \phi} \right)^{1/2} \left( \sum_{i=1}^N \bigSCP{\tilde{\phi}}{p_1 \lambda_{i}^{\Omega_N^c}(x_1) q_{\neq 1}^{\chi_i^{x_1}} p_1 \tilde{\phi}} \right)^{1/2} \nonumber \\
&\qquad\leq 2 N \, \left( \sup_{i,x_1} \lambda_{i}^{\Omega_N^c}(x_1) \right) \left( \sum_{i=1}^N \bigSCP{\phi}{q_1 q_{\neq 1}^{\chi_i^{x_1}} q_1 \phi} \right)^{1/2} \left( \sum_{i=1}^N \bigSCP{\tilde{\phi}}{p_1 q_{\neq 1}^{\chi_i^{x_1}} p_1 \tilde{\phi}} \right)^{1/2} \nonumber \\
\eqexpl{by \eqref{p_neq_1}} &\qquad\leq 2 N \, \left(\sup_{i,x_1} \lambda_{i}^{\Omega_N^c}(x_1)\right) \bigg( (N-1) \bigSCP{\phi}{q_1 q_2 \phi} + \bigSCP{\phi}{q_1 \phi} \bigg)^{1/2} \times \nonumber \\*
&\qquad \quad \times \bigg( (N-1) \bigSCP{\tilde{\phi}}{p_1 q_2 \tilde{\phi}} + \bigSCP{\tilde{\phi}}{p_1 \tilde{\phi}} \bigg)^{1/2} \nonumber \\
\eqexpl{by Lem.~\ref{lem:q_root_f}} &\qquad\leq 2 N \, \left(\sup_{i,x_1} \lambda_{i}^{\Omega_N^c}(x_1)\right) \bigg( 3 N^{\gamma-1} \alpha_{m} + 2 N^{-1} \alpha_{m} \bigg)^{1/2} \bigg( \alpha_{m} + N^{-\gamma} \bigg)^{1/2} \nonumber \\
\eqexpl{by \eqref{lambda_omega_bar}} &\qquad\leq \sqrt{12} \left( \sup_{y \in \Omega_N^c} v^{(N)}(y) \right) N^{\gamma/2+1/2} \bigg( \alpha_{m} + N^{-\gamma} \bigg).
\end{align}

\textbf{Term} $\boldsymbol{(II)_m.}$ We abbreviate $\phi = \widehat{m'^{(2)}}^{1/2}\psi$ and $\tilde{\phi} = \widehat{m'^{(-2)}}^{1/2}\psi$. The idea of the bound for this term is to shift one $q$ to the right side of the scalar product by using the antisymmetry of $\psi$. Using Cauchy-Schwarz and the antisymmetry of $\phi$ and $\tilde{\phi}$, we find
\begin{align}\label{term_b_estimate}
\big\lvert (II)_m \big\rvert &= N \left\lvert \Im\, \bigSCP{\phi}{q_1q_2 (N-1)v^{(N)}_{12} p_1p_2 \tilde{\phi}} \right\rvert \nonumber \\
&= N \bigg\lvert \, \Im\, \bigSCP{\phi}{q_1 \sum_{m=2}^N q_m \, v^{(N)}_{1m} \, p_1p_m \tilde{\phi}} \bigg\rvert \nonumber \\
&\leq  N \norm{q_1\phi} \norm{\sum_{m=2}^N q_m \, v^{(N)}_{1m} \, p_1p_m \tilde{\phi}} \nonumber \\
&= N \norm{q_1\phi} \bigg[ (N-1)(N-2) \bigSCP{\tilde{\phi}}{q_3 p_1 p_2 v^{(N)}_{12} v^{(N)}_{13} p_1 p_3 q_2 \tilde{\phi}} \nonumber \\*
&\quad + (N-1) \bigSCP{\tilde{\phi}}{p_1 p_2 v^{(N)}_{12}q_2v^{(N)}_{12} p_1 p_2 \tilde{\phi}} \bigg]^{1/2} \nonumber \\
\eqexp{by Lem.~\ref{lem:psi_op_psi_diag_tr}} &\leq  N \norm{q_1\phi} \bigg[ \int \Big(\big(v^{(N)}\big)^2*\rho_N\Big)(y)\,\rho_N(y)\,d^3y~ \bigg( \bigSCP{\tilde{\phi}}{q_3\tilde{\phi}} + N^{-1}\bigSCP{\tilde{\phi}}{\tilde{\phi}} \bigg) \bigg]^{1/2} \nonumber \\
\eqexp{by Lem.~\ref{lem:q_root_f}} &\leq  N \bigg[ 3 N^{-1} \alpha_{m^{(\gamma)}} \int \Big(\big(v^{(N)}\big)^2*\rho_N\Big)(y)\,\rho_N(y)\,d^3y~ \bigg( 2 N^{-1} \alpha_{m^{(\gamma)}} + 2 N^{-1}N^{-\gamma} \bigg) \bigg]^{1/2} \nonumber \\
&\leq \sqrt{6} \left( \int \Big(\big(v^{(N)}\big)^2*\rho_N\Big)(y)\,\rho_N(y)\,d^3y \right)^{1/2} \bigg( \alpha_{m} + N^{-\gamma} \bigg).
\end{align}

\textbf{Term} $\boldsymbol{(III)_m.}$ We abbreviate $\phi = \widehat{m'^{(1)}}^{1/2}\psi$ and $\tilde{\phi} = \widehat{m'^{(-1)}}^{1/2}\psi$. By Cauchy-Schwarz we find
\begin{align}
\big\lvert (III)_m \big\rvert &= 2N \, \left\lvert \Im\, \bigSCP{\phi}{q_1q_2 (N-1)v^{(N)}_{12} p_1q_2 \tilde{\phi}} \right\rvert \nonumber \\
&\leq 2N (N-1) \norm{q_1q_2\phi} \norm{v^{(N)}_{12}p_1q_2\tilde{\phi}} \nonumber \\
\eqexp{by Lem.~\ref{lem:psi_op_psi_diag_tr}} &\leq 2N (N-1) \norm{q_1q_2\phi} \bigg[ (N-1)^{-1} \left( \sup_{y} \Big(\big(v^{(N)}\big)^2*\rho_N\Big)(y) \right) \bigSCP{\tilde{\phi}}{q_2 \tilde{\phi}}\bigg]^{1/2} \nonumber \\
\eqexp{by Lem.~\ref{lem:q_root_f}} &\leq 2N (N-1) \bigg[ 3 N^{\gamma-2} \alpha_{m^{(\gamma)}} (N-1)^{-1} \left( \sup_{y} \Big(\big(v^{(N)}\big)^2*\rho_N\Big)(y) \right) N^{-1} \alpha_{m^{(\gamma)}} \bigg]^{1/2} \nonumber \\
&\leq \sqrt{12} \left( \sup_{y} \Big(\big(v^{(N)}\big)^2*\rho_N\Big)(y) \right)^{1/2} \, N^{\gamma/2} \, \alpha_{m^{(\gamma)}}.
\end{align}
\end{proof}

\subsection{Proof of Theorems~\ref{thm:estimates_terms_alpha_dot_beta_n} and \ref{thm:estimates_terms_alpha_dot_beta_general}}\label{sec:proofs_main_theorems_gen}
\begin{proof}[Proof of Theorems \ref{thm:estimates_terms_alpha_dot_beta_n} and \ref{thm:estimates_terms_alpha_dot_beta_general}]
First, we split $v^{(N)} = v^{(N)}_{+} - v^{(N)}_{-}$, with $v^{(N)}_{+}, v^{(N)}_{-} \geq 0$. Accordingly, we have
\begin{equation}\label{alpha_plus_minus}
\partial_t \alpha_{m^{(\gamma)}}(t) = \Term_{+} - \Term_{-} \leq \left\lvert \Term_{+} \right\rvert + \left\lvert \Term_{-} \right\rvert,
\end{equation}
where $\Term_{\pm}$ refers to $(I)_{m^{(\gamma)}}, (II)_{m^{(\gamma)}}, (III)_{m^{(\gamma)}}$ from \eqref{alpha_dot_1}, \eqref{alpha_dot_2} and \eqref{alpha_dot_3} with interaction $v^{(N)}_{\pm}$. We bound $\Term_{\pm}$ separately by using Lemma~\ref{lem:estimates_terms_alpha_dot_beta}, which proves under the stated assumptions the bound
\begin{equation}
\partial_t \alpha_{m^{(\gamma)}}(t) \leq C(t) \Big( \alpha_{m^{(\gamma)}}(t) + N^{-\gamma} \Big).
\end{equation}
Applying the Gronwall Lemma gives the desired bound
\begin{equation}
\alpha_{m^{(\gamma)}}(t) \leq e^{\int_0^t C(s) ds} \, \alpha_{m^{(\gamma)}}(0) + \left( e^{\int_0^t C(s) ds} - 1 \right) N^{-\gamma}.
\end{equation}
The values of the constant $C(t)$ in Theorems \ref{thm:estimates_terms_alpha_dot_beta_n} and \ref{thm:estimates_terms_alpha_dot_beta_general} can be obtained by using the respective assumptions, together with Lemma~\ref{lem:estimates_terms_alpha_dot_beta} and \eqref{alpha_plus_minus}.
\end{proof}

\noindent\textbf{Remarks.}
\begin{enumerate}
\setcounter{enumi}{\theremarks}
\item\label{itm:exch_term_order} Following up on Remark~\ref{itm:exch} after Theorem~\ref{thm:estimates_terms_alpha_dot_beta_n}, let us consider the size of the error we make by neglecting the exchange term. Suppose that the exchange term is of $\bigO(N^{-\delta})$. It then gives an additional term $CN^{-\delta} \sqrt{\alpha_n(t)} \leq C \big( \alpha_n(t) + N^{-2\delta} \big)$ in the time derivative of $\alpha_n(t)$ (where the $\sqrt{\alpha_n(t)}$ comes from the $q_1$ in term $(I)_n$).
\end{enumerate}
\setcounter{remarks}{\theenumi}

\section{Proof of Results for Density $\propto 1$ Regime}\label{sec:mean-field_scalings_general}
In this section, $C$ denotes a constant which can be different from line to line.

\subsection{Kinetic Energy Inequalities}\label{sec:energy inequalities}
Let us first recall two well-known inequalities which we use in Section~\ref{sec:proofs_main_theorems} to show that the conditions of Theorems \ref{thm:estimates_terms_alpha_dot_beta_n} and \ref{thm:estimates_terms_alpha_dot_beta_general} hold if the total kinetic energy is bounded by $AN$. A general version of the Lieb-Thirring or kinetic energy inequality \cite{lieb:1975,lieb:2010,rumin:2011} for orthonormal $\varphi_1,\ldots,\varphi_N \in L^2(\RRR^3)$ is
\begin{equation}\label{Lieb_Thirring}
\int_{\RRR^3} \rho_N(x)^{1+2a/3} \, d^3x \leq C_a \sum_{i=1}^N \norm{\nabla^{a} \varphi_i}^2,
\end{equation}
for any $a>0$, where $\rho_N = \sum_{i=1}^N |\varphi_i|^2$. The Hardy-Littlewood-Sobolev inequality (see, e.g., \cite[Thm.~4.3]{liebloss:2001}) in three dimensions states that for $f \in L^p(\RRR^3)$, $h \in L^r(\RRR^3)$, $p,r > 1$ and $0<\lambda<3$ with $1/p + \lambda/3 + 1/r = 2$, there is a constant $C=C(\lambda,p)$ such that
\begin{equation}\label{Hardy_Littlewood}
\left| \int_{\RRR^3} \int_{\RRR^3} f(x) |x-y|^{-\lambda} h(y) \, d^3x \, d^3y \right| \leq C \norm[p]{f} \norm[r]{h}.
\end{equation}

Let us now show for interactions $|x|^{-s}$, $0<s<6/5$, that the mean-field term $v^{(N)}*\rho_N$ is bounded independent of $N$, if it is scaled with $\beta=1-s/3$ and if the total kinetic energy is bounded by $AN$. We need this statement for the proofs in Section~\ref{sec:proofs_main_theorems}.

\begin{lemma}\label{lem:scaling_x-s}
Let $\varphi_1,\ldots,\varphi_N \in L^2(\RRR^3)$ be orthonormal. We assume that
\begin{equation}\label{ass_kin_energy}
\sum_{i=1}^N \norm{\nabla \varphi_i}^2 \leq AN
\end{equation}
for some $A>0$. Let $v^{(N)}(x) = N^{-\beta} \, |x|^{-s}$ with $\beta = 1 - s/3$, $0<s<6/5$. We set $\rho_N = \sum_{i=1}^N |\varphi_i|^2$. Then there is a constant $0<C\propto A^{s/2}$ (independent of $N$, dependent on $s$) such that
\begin{equation}\label{scaling_x-s_term_1}
\left( v^{(N)}*\rho_N \right)(y) \leq C \quad \forall y \in \RRR^3.
\end{equation}
\end{lemma}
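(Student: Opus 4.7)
The plan is to bound $(v^{(N)} * \rho_N)(y)$ uniformly in $y$ by splitting the convolution integral into a near-field part, where the singularity of $|x|^{-s}$ lives but where the Lieb-Thirring inequality controls how concentrated $\rho_N$ can be, and a far-field part, where $|x|^{-s}$ is harmless so that only the normalization $\int \rho_N = N$ matters. I will then optimize over the splitting radius, and this is precisely what will produce the claimed dependence $C \propto A^{s/2}$.

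More concretely, fix $y \in \RRR^3$ and a parameter $R > 0$ to be chosen, and write
\begin{equation*}
N^\beta (v^{(N)} * \rho_N)(y) = \int_{B_R(y)} |y-x|^{-s}\rho_N(x)\,d^3x + \int_{B_R(y)^c} |y-x|^{-s}\rho_N(x)\,d^3x.
\end{equation*}
The far-field piece is bounded, using $|y-x|^{-s} \leq R^{-s}$ and $\int \rho_N = N$, by $R^{-s} N$. For the near-field piece, I apply H\"older's inequality with the conjugate pair $(p,p') = (5/3, 5/2)$:
\begin{equation*}
\int_{B_R(y)} |y-x|^{-s}\rho_N(x)\,d^3x \leq \big\||\cdot|^{-s}\big\|_{L^{5/2}(B_R(0))}\, \|\rho_N\|_{L^{5/3}(\RRR^3)}.
\end{equation*}
Here the choice $p' = 5/2$ is dictated by two things: first, $\||\cdot|^{-s}\|_{L^{5/2}(B_R(0))}^{5/2} = C\int_0^R r^{-5s/2+2}\,dr$ is finite precisely when $s < 6/5$ (which is the hypothesis); and second, $p = 5/3$ is exactly the exponent produced by the Lieb-Thirring inequality \eqref{Lieb_Thirring} with $a = 1$, which under \eqref{ass_kin_energy} gives $\|\rho_N\|_{L^{5/3}}^{5/3} \leq C_1 AN$. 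Computing explicitly yields $\||\cdot|^{-s}\|_{L^{5/2}(B_R(0))} = C_2 R^{6/5-s}$, so the near-field piece is at most $C_3 A^{3/5} N^{3/5} R^{6/5-s}$.

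Combining the two pieces and using $\beta = 1 - s/3$, I obtain
\begin{equation*}
(v^{(N)} * \rho_N)(y) \leq N^{s/3-1}\Big( R^{-s} N + C_3\, A^{3/5} N^{3/5} R^{6/5-s} \Big).
\end{equation*}
The natural length scale for a system of $N$ fermions with kinetic energy of order $N$ is $N^{1/3}$, so I substitute $R = \lambda N^{1/3}$ with $\lambda > 0$ free. The $N$-factors cancel exactly, leaving
\begin{equation*}
(v^{(N)} * \rho_N)(y) \leq \lambda^{-s} + C_3 A^{3/5} \lambda^{6/5-s}.
\end{equation*}
Finally I minimize the right-hand side over $\lambda$: differentiation gives the optimal $\lambda \propto A^{-1/2}$, at which both terms scale like $A^{s/2}$, producing the claimed bound \eqref{scaling_x-s_term_1} with $C \propto A^{s/2}$.

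The only non-routine step is verifying that the near-field H\"older exponent $p' = 5/2$ is simultaneously compatible with the local integrability of $|x|^{-s}$ for $s < 6/5$ and with the exponent $5/3$ supplied by Lieb-Thirring; the rest is bookkeeping. I expect no further obstacles, since the $N$-scaling is rigidly fixed by the relation $\beta = 1 - s/3$ together with the physical length scale $R = N^{1/3}$, and the $A$-dependence emerges automatically from the optimization.
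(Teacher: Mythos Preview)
Your proof is correct and follows essentially the same approach as the paper: the same near-field/far-field splitting at radius $R$, the same H\"older pair $(5/3,5/2)$ combined with the Lieb--Thirring inequality, and the same optimization $R = \lambda N^{1/3}$ with $\lambda \propto A^{-1/2}$ to extract the $A^{s/2}$ dependence. The only cosmetic difference is that you track the $A$-dependence from the start, whereas the paper first sets $R = N^{1/3}$ to obtain the $N$-scaling and then revisits the constant by writing $R_N = rN^{1/3}$ and minimizing over $r$.
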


\begin{proof}
First, note that for $0 < s < 6/5$,
\begin{equation}\label{int_2_5}
\left( \int_{B_R(0)} |x|^{-5s/2} \,d^3x \right)^{2/5} = \left( 4\pi \int_{B_R(0)} r^{-5s/2} \,r^2 dr \right)^{2/5} = \left(\frac{4\pi}{3-\frac{5}{2}s}\right)^{\frac{2}{5}} \, R^{6/5-s}.
\end{equation}
Then, using H\"older's inequality, \eqref{Lieb_Thirring}, $\int \rho_N = N$ and \eqref{int_2_5}, we find for any $R>0$,
\begin{align}\label{rho_x-1_calc}
\int_{\RRR^3} \frac{\rho_N(x)}{|x-y|^s} \,d^3x &= \int_{B_{R}(y)} \frac{\rho_N(x)}{|x-y|^s} \,d^3x + \int_{B_{R}(y)^c} \frac{\rho_N(x)}{|x-y|^s} \,d^3x \nonumber \\
&\leq \left( \int_{B_{R}(y)} \rho_N(x)^{5/3} \,d^3x \right)^{3/5} \left( \int_{B_{R}(y)} |x-y|^{-5s/2} \,d^3x \right)^{2/5} \nonumber \\*
&\quad\quad + \left( \int_{B_{R}(y)^c} \rho_N(x) \,d^3x \right) \left( \sup_{x \in B_{R}(y)^c} |x-y|^{-s} \right) \nonumber \\
&\leq C N^{3/5} R^{6/5-s} + N R^{-s}.
\end{align}
Setting $R=N^{1/3}$ (if we set $R=N^{\delta}$ and then optimize \eqref{rho_x-1_calc} with respect to $\delta$ we find $\delta=1/3$) we find
\begin{equation}\label{rho_x-1}
\int_{\RRR^3} \frac{\rho_N(x)}{|x-y|^s} \,d^3x \leq C N^{1-s/3}.
\end{equation}
Using the explicit value $C_1=\frac{5}{9} (2\pi)^{-2/3}$ for the constant from \eqref{Lieb_Thirring} with $a=1$, \eqref{int_2_5}, setting $R_N=rN^{1/3}$, with $N$-independent $r>0$, and minimizing the resulting expression \eqref{rho_x-1_calc} with respect to $r$ gives an explicit value for the constant of \eqref{rho_x-1},
\begin{equation}\label{rho_x-1_constant}
C = \left( \frac{6}{5}-s \right)^{s/2-1} s^{-5s/6} \left(\frac{6}{5}\right) 2^{2s/3} \, 3^{-s} \, 5^{s/6} ~ A^{s/2}.
\end{equation}
\end{proof}

\subsection{Proof of the Results}\label{sec:proofs_main_theorems}
\begin{proof}[Proof of Theorem~\ref{thm:E_kin_only}]
We consider the three different interactions separately. We prove here that the assumptions of Theorems~\ref{thm:estimates_terms_alpha_dot_beta_n} and \ref{thm:estimates_terms_alpha_dot_beta_general} hold in the different cases. Then this result can directly be expressed in term of reduced density matrices by using Lemmas~\ref{lem:density_conv} and \ref{lem:density_conv_alpha_f}.
\begin{enumerate}[(a)]

\item Let $v_s(x) = \pm |x|^{-s}$, with $0<s<3/5$ and $\beta=1-s/3$ and note that $v_s^2 = |v_{2s}|$. We can therefore use Lemma~\ref{lem:scaling_x-s} to show that the condition \eqref{alpha_dot_n_ass_2} from Theorem~\ref{thm:estimates_terms_alpha_dot_beta_n} holds. We find
\begin{equation}\label{s_2s}
\Big(\big(v_s^{(N)}\big)^2*\rho_N^t\Big)(y) = N^{-2\left(1-s/3\right)} \Big(v_s^2*\rho_N^t\Big)(y) \leq N^{-2\left(1-s/3\right)} \, CN^{\left(1-2s/3\right)} = C N^{-1}.
\end{equation}
If we use that the constant in \eqref{scaling_x-s_term_1} is proportional to $A^{s/2}$, we find that the constant in \eqref{s_2s} is proportional to $A^{s}$ and thus the $C$ appearing in the $\alpha_n$-estimate \eqref{main_alpha_ineq_n_applied1} is proportional to $A^{s/2}$.

\item For interactions $v = \pm v_{s,\delta}$ with
\begin{align}
0 \leq v_{s,\delta}(x) \left\{\begin{array}{cl} \leq DN^{\delta s} &, \, \text{for } |x|\leq N^{-\delta} \\ =|x|^{-s} &, \, \text{for } |x|>N^{-\delta} , \end{array}\right.
\end{align}
with $D>0$, $0<s<6/5$, $\beta = 1 - s/3$ and $\delta < (3-2s)/(6s)$ we use Theorem~\ref{thm:estimates_terms_alpha_dot_beta_general} with $\Omega_N = \emptyset$. We thus have to verify Assumption~\ref{ass:for_main_thm}. By using Lemma~\ref{lem:scaling_x-s} we find
\begin{equation}
\Big(\big(v_{s,\varepsilon}^{(N)}\big)^2*\rho_N^t\Big)(y) \leq N^{-\left(1-s/3\right)} \Big( \sup_y v_{s,\varepsilon}(y) \Big) \Big(v_{s,\varepsilon}^{(N)}*\rho_N^t\Big)(y) \leq C N^{-\left(1-s/3\right)+\delta s},
\end{equation}
i.e., \eqref{alpha_dot_m_ass_2} holds for all $\gamma \leq \beta - \delta s$. In order to show that \eqref{alpha_dot_m_ass_3} holds, we use the Hardy-Littlewood-Sobolev inequality \eqref{Hardy_Littlewood}. Note that from $\int \rho_N^t = N$ and $\int (\rho_N^t)^{5/3} \leq C N$ it follows that $\int (\rho_N^t)^p \leq C N$ for all $1 \leq p \leq 5/3$. For $\lambda=2s$ we have $p=(1-s/3)^{-1}$ and, since $0<s<6/5$, we find $1 < p < 5/3$, so that
\begin{equation}\label{HLS_applied}
\int_{\RRR^3} \int_{\RRR^3} \frac{\rho_N^t(x)\rho_N^t(y)}{|x-y|^{2s}} \,d^3x \,d^3y \leq C \norm[p]{\rho_N^t}^2 = C \left( \int (\rho_N^t)^p \right)^{2/p} \leq C N^{2/p} = C N^{2\left(1-s/3\right)},
\end{equation}
i.e., since $v_{s,\varepsilon}(x) \leq D|x|^{-s}$, \eqref{alpha_dot_m_ass_3} is satisfied. Furthermore,
\begin{equation}
\sup_{y \in \RRR^3} v_{s,\varepsilon}^{(N)}(y) \leq C \, N^{-\left(1-s/3\right)+\delta s},
\end{equation}
i.e., \eqref{alpha_dot_m_ass_5} holds if $\beta - \delta s \geq 1/2 + \gamma/2$. Therefore, the desired bound \eqref{main_alpha_ineq_m_applied1} holds for all $\gamma \leq 2\beta-1-2\delta s=1-2s/3 - 2\delta s$.

\item For $v(x) = \pm |x|^{-1}$, we use Theorem~\ref{thm:estimates_terms_alpha_dot_beta_n}, i.e., we have to verify \eqref{alpha_dot_n_ass_2}. By H\"older's inequality we find for any $R>0$,
\begin{align}
\left( \Big(v^{(N)}\Big)^2*\rho_N^t \right)(y) &= N^{-4/3} \left( \int_{B_R(y)} |x-y|^{-2} \rho_N^t(x)\,d^3x + \int_{B_R(y)^c} |x-y|^{-2} \rho_N^t(x)\,d^3x \right) \nonumber \\
&\leq N^{-4/3} \left( \int_{B_R(0)} |x|^{-2p} \,d^3x \right)^{1/p} \left( \int_{\RRR^3} \left(\rho_N^t\right)^q \right)^{1/q} + \left( \sup_{x \in B_R(0)^c} |x|^{-2} \right) \norm[1]{\rho_N^t}.
\end{align}
Now $\int_{B_R(0)} |x|^{-2p} \,d^3x \leq C R^{3-2p}$ for $p < 3/2$, i.e., $q>3$. For $q < \infty$ we use the kinetic energy inequality \eqref{Lieb_Thirring} with $a=3(q-1)/2>3$ and for $q=\infty$ we use $\norm[\infty]{\rho_N^t} \leq C$. Then we find for $R=N^{1/3}$ (recall $1/p + 1/q = 1$),
\begin{align}
\left( \Big(v^{(N)}\Big)^2*\rho_N^t \right)(y) \leq C N^{-4/3} \left( R^{3/p-2} N^{1/q} + R^{-2} N \right) \leq C N^{-1}.
\end{align}
\end{enumerate}
\end{proof}

\begin{proof}[Proof of Proposition~\ref{pro:coulomb_free_N_epsilon}]
\begin{enumerate}[(a)]
\item We start from the expression \eqref{alpha_derivative_n_remark} which yields, by using $p_m+q_m = 1$, $V_1^{(N)} = 0$ and $v_{12}^{(N)} = N^{-\delta}v_{12}$,
\begin{equation}\label{proof_prop_E_kin}
\partial_t \alpha_n(t) = 2(N-1)N^{-\delta} \, \Im\, \bigSCP{\psi^t}{v_{12} p_1^t \psi^t}.
\end{equation}
By energy conservation we have $\SCP{\psi^t}{(-\Delta_1)\psi^t} \leq EN^{-1}$. Using the many-particle Hardy inequality for fermions from \cite{frank:fermions,hoffmann-ostenhof:2008}, we find
\begin{equation}\label{calc_proof_prop}
\partial_t \alpha_n(t) \leq 2 N^{1-\delta} \SCP{\psi^t}{v_{12}^2 \psi^t}^{1/2} \leq C N^{1-\delta} \left( N^{-2/3}EN^{-1} \right)^{1/2} \leq CN^{1/6-\delta} E^{1/2}.
\end{equation}
Integration and application of Lemma~\ref{lem:density_conv} yields the desired bound \eqref{proposition_result}.

\item Taking the time derivative of $\alpha_n(\psi^t,\varphi_1,\ldots,\varphi_N)$ and then using Cauchy-Schwarz, the same steps as in \eqref{calc_proof_prop}, and \eqref{proof_prop_E_kin}, we find
\begin{align}
\partial_t \alpha_n(\psi^t,\varphi_1,\ldots,\varphi_N) &= 2 \Im \bigSCP{\psi^t}{(-\Delta_1)\psi^t} + 2 (N-1)N^{-\delta} \Im \bigSCP{\psi^t}{v_{12}p_1\psi^t} \nonumber \\
&\leq  2\left( \bigSCP{\psi^t}{(-\Delta_1)\psi^t}\bigSCP{\psi^t}{p_1(-\Delta_1)p_1\psi^t}\right)^{1/2} + CN^{1/6-\delta} E^{1/2} \nonumber \\
&\leq EN^{-1} + E_{\kin}^{\mf}N^{-1} + CN^{1/6-\delta} E^{1/2}.
\end{align}
Again, integration and application of Lemma~\ref{lem:density_conv} yields the desired bound \eqref{proposition_result2}.
\end{enumerate} 
\end{proof}

\section{Proof of Results for Density $\propto N$ Regime}\label{sec:proof_sc_scaling}
Let us first state a result about the propagation of the semiclassical initial data that was obtained in \cite{benedikter:2013}. We state this result in a slightly less general form than \cite[Propostition~$3.4$]{benedikter:2013}. Note that \cite[Propostition~$3.4$]{benedikter:2013} holds also without exchange term, i.e., for the fermionic Hartree equations. Recall that $\hat{v}$ denotes the Fourier transform of $v$ and $p^t = \sum_{j=1}^N \ketbr{\varphi_j^t}$.

\begin{lemma}\label{lem:sc_prop_sc}
Let $v \in L^1(\RRR^3)$ be such that
\begin{equation}
\int d^3k \, (1+|k|^2)\, |\hat{v}(k)| < \infty.
\end{equation}
Let $p^0$ be such that
\begin{equation}
\sup_{k\in\RRR^3} (1+|k|)^{-1} \, \Big\lvert\Big\lvert \big[ p^0, e^{ik\cdot x} \big] \Big\rvert\Big\rvert_{\tr} \leq C N^{2/3},
\end{equation}
\begin{equation}
\Big\lvert\Big\lvert\big[ p^0, \nabla \big] \Big\rvert\Big\rvert_{\tr} \leq C N.
\end{equation}
Let $\varphi_1^t,\ldots,\varphi_N^t$ be solutions to the Hartree-Fock equations \eqref{outline_hartree_fock_scaled_sc_app} or the Hartree equations \eqref{hartree_scaled_sc} with initial data $\varphi_1^0,\ldots,\varphi_N^0$. Then, there exist constants $c_1,c_2>0$, only depending on $v$, such that
\begin{equation}
\sup_{k\in\RRR^3} (1+|k|)^{-1} \, \Big\lvert\Big\lvert \big[ p^t, e^{ik\cdot x} \big] \Big\rvert\Big\rvert_{\tr} \leq c_1N^{2/3} \exp(c_2|t|),
\end{equation}
\begin{equation}
\Big\lvert\Big\lvert\big[ p^t, \nabla \big] \Big\rvert\Big\rvert_{\tr} \leq c_1 N \exp(c_2|t|),
\end{equation}
for all $t \in \RRR$.
\end{lemma}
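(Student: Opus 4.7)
The idea is to recast the Hartree(-Fock) dynamics as a Heisenberg-type equation for the rank-$N$ projector $p^t=\sum_j\ketbr{\varphi_j^t}$, pass to the ``interaction picture'' so that the trace-norm-preserving part of the evolution drops out, and close a coupled Gronwall inequality for the two quantities
\begin{equation*}
Q_1(t):=\sup_{k\in\RRR^3}(1+|k|)^{-1}\norm[\tr]{[p^t,e^{ik\cdot x}]},\qquad Q_2(t):=\norm[\tr]{[p^t,\nabla]}.
\end{equation*}

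First I would verify that $p^t$ obeys $iN^{-1/3}\partial_t p^t=[h^{\mf}(t),p^t]$, with $h^{\mf}(t)=-N^{-2/3}\Delta+N^{-1}(v*\rho_N^t)-N^{-1}X^t$ (the exchange operator $X^t$, with integral kernel $N^{-1}v(x-y)p^t(x,y)$, being absent in the pure Hartree case). For any bounded time-independent $B$, the Jacobi identity gives
\begin{equation*}
iN^{-1/3}\partial_t[p^t,B]=[h^{\mf}(t),[p^t,B]]+[p^t,[h^{\mf}(t),B]].
\end{equation*}
Conjugation by the unitary propagator generated by $h^{\mf}(t)$ eliminates the first term, and since unitary conjugation preserves $\norm[\tr]{\cdot}$, Duhamel yields
\begin{equation*}
\norm[\tr]{[p^t,B]}\leq\norm[\tr]{[p^0,B]}+N^{1/3}\int_0^t\norm[\tr]{[p^s,[h^{\mf}(s),B]]}\,ds.
\end{equation*}

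Next I would apply this with $B=e^{ik\cdot x}$ and with $B=\nabla$, and bound $[p^s,[h^{\mf}(s),B]]$ by $Q_1(s),Q_2(s)$. The multiplication operator $v*\rho_N^s$ commutes with $e^{ik\cdot x}$; a short calculation gives $[-N^{-2/3}\Delta,e^{ik\cdot x}]=N^{-2/3}(|k|^2-2ik\cdot\nabla)e^{ik\cdot x}$. For the exchange contribution I would expand $v(x)=(2\pi)^{-3/2}\int\hat v(q)\,e^{iq\cdot x}\,dq$, so that $X^s=(2\pi)^{-3/2}N^{-1}\int\hat v(q)\,e^{iq\cdot x}p^s e^{-iq\cdot x}\,dq$ and the inner commutator $[X^s,e^{ik\cdot x}]$ becomes a $\hat v$-weighted integral of expressions of the form $e^{iq\cdot x}[p^s,e^{ik\cdot x}]e^{-iq\cdot x}$ and similar. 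Using Jacobi plus the elementary rewrite $[p^s,AB]=[p^s,A]B+A[p^s,B]$, I would reduce everything to $\hat v$-integrals of $[p^s,e^{iq\cdot x}]$ and $[p^s,\nabla]$; the hypothesis $\int(1+|q|^2)|\hat v(q)|\,dq<\infty$ is exactly what makes the resulting $q$-integrals finite, yielding
\begin{equation*}
N^{1/3}(1+|k|)^{-1}\norm[\tr]{[p^s,[h^{\mf}(s),e^{ik\cdot x}]]}\leq C_v\bigl(Q_1(s)+N^{-1/3}Q_2(s)\bigr),
\end{equation*}
and an analogous computation for $B=\nabla$ (now only the interactions contribute, as $[-\Delta,\nabla]=0$) produces
\begin{equation*}
N^{1/3}\norm[\tr]{[p^s,[h^{\mf}(s),\nabla]]}\leq C_v\bigl(N^{2/3}Q_1(s)+Q_2(s)\bigr),
\end{equation*}
where the $N^{2/3}$ prefactor tracks the size of $[v*\rho_N^s,\nabla]=\nabla(v*\rho_N^s)$ and the exchange contribution.

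Combining these estimates with Duhamel gives the coupled system $Q_1(t)\leq Q_1(0)+C_v\int_0^t(Q_1+N^{-1/3}Q_2)\,ds$ and $Q_2(t)\leq Q_2(0)+C_v\int_0^t(N^{2/3}Q_1+Q_2)\,ds$. Applying the Gronwall lemma to $\widetilde Q(t):=N^{-2/3}Q_1(t)+N^{-1}Q_2(t)$ (which is $\bigO(1)$ initially, by hypothesis) yields $\widetilde Q(t)\leq C e^{C_v|t|}$, from which $Q_1\leq c_1 N^{2/3}e^{c_2|t|}$ and $Q_2\leq c_1 Ne^{c_2|t|}$ follow. The principal obstacle is the bookkeeping of commutators involving the unbounded operators $\Delta$ and $\nabla$: each application of Jacobi must consistently route unbounded factors into either a single $[p^s,\nabla]$ slot or into the bounded multiplier $\nabla(v*\rho_N^s)$, and it is precisely the regularity of $v$ encoded by $\int(1+|q|^2)|\hat v(q)|\,dq<\infty$ that makes this possible in both the direct and exchange terms.
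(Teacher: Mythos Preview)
The paper does not give its own proof of this lemma: it is stated as a citation of \cite[Proposition~3.4]{benedikter:2013}, and the paper only uses the conclusion (in the form of \eqref{sc_prop}) as a black box in the proof of Theorem~\ref{thm:sc_main_thm}. Your proposal is essentially the argument from \cite{benedikter:2013}: write the von Neumann equation for $p^t$, use the Jacobi identity and the unitary propagator of $h^{\mf}$ to obtain a Duhamel formula for $[p^t,B]$, Fourier-decompose $v$ to reduce all interaction commutators to $[p^s,e^{iq\cdot x}]$ and $[p^s,\nabla]$, and close a coupled Gronwall inequality for $Q_1$ and $Q_2$. The weighting by $(1+|k|)^{-1}$ is precisely what absorbs the factor $|k|^2$ coming from $[-\Delta,e^{ik\cdot x}]$, and the hypothesis $\int(1+|q|^2)|\hat v(q)|\,dq<\infty$ controls the $q$-integrals arising from both direct and exchange terms. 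So your sketch is correct and matches the original source; there is simply nothing in the present paper to compare it against.
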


From this lemma it follows that, using $p_1q_1=0$ and \eqref{tr_HS_op_ineqs},
\begin{equation}\label{sc_prop}
\norm[\tr]{q_1^t e^{ikx} p_1^t} = \norm[\tr]{q_1^t \left[ p_1^t, e^{ikx} \right]} \leq \norm[\tr]{\left[ p_1^t, e^{ikx} \right]} \leq N^{2/3} \, Ce^{Ct} \, (1 + \lvert k \rvert). 
\end{equation}

In the following proof, we often make use of the singular value decomposition for compact operators (see, e.g., \cite[Thm.\ VI.17]{reedsimon1:1980}). We state this decomposition for later reference in a separate lemma.

\begin{lemma}[Singular value decomposition]\label{lem:sing_value}
Let $A$ be a compact operator on a Hilbert space $\Hilbert$. Then there exist (not necessarily complete) orthonormal sets $\{ \phi_{\ell}\}_{\ell \in \NNN}$ and $\{ \tilde{\phi}_{\ell}\}_{\ell \in \NNN}$ and positive real numbers $\mu_{\ell}$ such that
\begin{equation}
A = \sum_{\ell} \mu_{\ell} \ketbra{\phi_{\ell}}{\tilde{\phi}_{\ell}}.
\end{equation}
The singular values $\mu_{\ell}$ are the eigenvalues of $|A|$, such that in particular $\norm[\tr]{A} = \sum_{\ell} \mu_{\ell}$.
\end{lemma}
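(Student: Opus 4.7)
The plan is to obtain the singular value decomposition by applying the spectral theorem for compact self-adjoint operators to $A^*A$. First I would observe that compactness of $A$ implies compactness of $A^*A$, which is also self-adjoint and non-negative. The Hilbert-Schmidt spectral theorem then yields an orthonormal set $\{\tilde{\phi}_\ell\}_{\ell\in\NNN}$ of eigenvectors with strictly positive eigenvalues $\lambda_\ell$ such that
\begin{equation}
A^*A = \sum_\ell \lambda_\ell \ketbra{\tilde{\phi}_\ell}{\tilde{\phi}_\ell},
\end{equation}
with $\{\tilde{\phi}_\ell\}$ together with an orthonormal basis of $\ker(A^*A)$ exhausting $\Hilbert$. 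Setting $\mu_\ell:=\sqrt{\lambda_\ell}>0$, the continuous functional calculus gives $|A|=\sqrt{A^*A}=\sum_\ell \mu_\ell\ketbra{\tilde{\phi}_\ell}{\tilde{\phi}_\ell}$, so the $\mu_\ell$ are already identified as the eigenvalues of $|A|$, and $\norm[\tr]{A}=\tr|A|=\sum_\ell \mu_\ell\langle\tilde{\phi}_\ell,\tilde{\phi}_\ell\rangle=\sum_\ell\mu_\ell$ follows immediately.

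Next I would define the second orthonormal set by $\phi_\ell:=\mu_\ell^{-1}A\tilde{\phi}_\ell$, which is well defined because $\mu_\ell>0$. Orthonormality is a short computation,
\begin{equation}
\langle \phi_k,\phi_\ell\rangle = (\mu_k\mu_\ell)^{-1}\langle A\tilde{\phi}_k,A\tilde{\phi}_\ell\rangle = (\mu_k\mu_\ell)^{-1}\langle\tilde{\phi}_k,A^*A\tilde{\phi}_\ell\rangle = (\mu_k\mu_\ell)^{-1}\lambda_\ell\delta_{k\ell}=\delta_{k\ell}.
\end{equation}

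It then remains to verify $A=\sum_\ell \mu_\ell\ketbra{\phi_\ell}{\tilde{\phi}_\ell}$. For any $f\in\Hilbert$ I would write $f=\sum_\ell \langle\tilde{\phi}_\ell,f\rangle\tilde{\phi}_\ell + f_0$ with $f_0\in\ker(A^*A)$. The one technical point worth noting is the kernel identity $\ker(A)=\ker(A^*A)$, which follows from $\norm{Af_0}^2=\langle f_0,A^*Af_0\rangle=0$; hence $Af_0=0$ and
\begin{equation}
Af=\sum_\ell \langle\tilde{\phi}_\ell,f\rangle A\tilde{\phi}_\ell = \sum_\ell \mu_\ell \langle\tilde{\phi}_\ell,f\rangle \phi_\ell,
\end{equation}
which is the claimed representation (convergent in the strong operator topology, and in trace norm whenever $\sum_\ell \mu_\ell<\infty$). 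I do not anticipate any genuine obstacle here: beyond the spectral theorem, the argument is essentially bookkeeping, with the only substantive ingredient being the kernel identity used to dispose of the $f_0$ component.
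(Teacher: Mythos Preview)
Your proof is correct and follows the standard route via the spectral theorem for compact self-adjoint operators applied to $A^*A$. The paper itself does not give a proof of this lemma at all: it simply states the result ``for later reference'' and cites \cite[Thm.~VI.17]{reedsimon1:1980}, so there is nothing to compare against beyond noting that your argument is precisely the textbook proof one would find in that reference.
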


\begin{proof}[Proof of Theorem~\ref{thm:sc_main_thm}]
The strategy of the proof is again to bound
\begin{equation}\label{gronwall_sc}
\left| \partial_t \alpha_n(t) \right| \leq Ce^{Ct} \big(\alpha_n(t) + N^{-1}\big)
\end{equation}
and use the Gronwall Lemma and then Lemmas~\ref{lem:density_conv} and \ref{lem:density_conv_alpha_f} to conclude the desired bound \eqref{main_alpha_ineq_sc}. Recall that we use the weight function $n(k)=k/N$ here, i.e., we can use the form \eqref{alpha_derivative_n_remark} for the time derivative of $\alpha_n(t)$. Using the scaling $v^{(N)}=N^{-1}v$ and noting the additional $N^{-1/3}$ in front of the time derivatives in the Schr\"odinger and mean-field equations, we find that $\partial_t \alpha_n(t)$ is given by the sum of the three terms
\begin{align}
(I)_n &= 2 N^{-2/3} \, \Im\, \bigSCP{\psi^t}{q_1^t\Big( (N-1)p_2^tv_{12}p_2^t - V_1 \Big) p_1^t \psi^t}, \nonumber \\
(II)_n &= 2 N^{-2/3} \, \Im\, \bigSCP{\psi^t}{q_1^tq_2^t\, (N-1)v_{12} \,p_1^tp_2^t \psi^t}, \nonumber \\
(III)_n &= 2 N^{-2/3} \, \Im\, \bigSCP{\psi^t}{q_1^tq_2^t\, (N-1)v_{12} \,p_1^tq_2^t \psi^t},
\end{align}
with $V_1 = V_1^{\dir}$ in the case of the fermionic Hartree equations, and $V_1 = V_1^{\dir} + V_1^{\exch}$ in the case of the Hartree-Fock equations. For ease of notation we write $\psi^t = \psi$, $\varphi_i^t = \varphi_i$ in the following. The double exponential in \eqref{main_alpha_ineq_sc} comes from one exponential in Lemma~\ref{lem:sc_prop_sc} and another exponential from the Gronwall Lemma applied to \eqref{gronwall_sc}. In the following estimates, we decompose $v(x) = \int d^3k \, \hat{v}(k) e^{ikx}$. Note that the assumption $\int d^3k \, (1+|k|^2)\, |\hat{v}(k)| < \infty$ in particular implies that $\int d^3k \, |\hat{v}(k)| < \infty$ and $\int d^3k \, |k| \, |\hat{v}(k)| < \infty$.

\textbf{Term} $\boldsymbol{(I)_n.}$ Let us first bound the contribution from the exchange term. Using the Fourier decomposition of $v$ and Cauchy-Schwarz we find
\begin{align}\label{V_1_exch}
N^{-2/3} \, \Big\lvert \bigSCP{\psi}{q_1 V^{\exch}_1 p_1 \psi} \Big\rvert &= N^{-2/3} \, \Big\lvert \bigSCP{\psi}{q_1 \sum_{j,\ell=1}^N \big(v_{12} * (\varphi_{\ell}^*\varphi_j)\big)(x_1) \ketbra{\varphi_{\ell}}{\varphi_j}_1 \psi} \Big\rvert \nonumber \\
&= N^{-2/3} \Big\lvert \int d^3k \, \hat{v}(k) \, \bigSCP{\psi}{q_1 e^{ikx_1} \sum_{j,\ell=1}^N \scp{\varphi_{\ell}}{e^{-ikx} \varphi_j} \ketbra{\varphi_{\ell}}{\varphi_j}_1 \psi} \Big\rvert \nonumber \\
&= N^{-2/3} \Big\lvert \int d^3k \, \hat{v}(k) \, \bigSCP{\psi}{q_1 e^{ikx_1} p_1 e^{-ikx_1} p_1 \psi} \Big\rvert \nonumber \\
&\leq N^{-2/3} \int d^3k \, \lvert\hat{v}(k)\rvert  \, \norm{q_1\psi} \nonumber \\
&\leq C N^{-2/3} \sqrt{\alpha_n} \nonumber \\
&\leq C \Big( \alpha_n + N^{-4/3} \Big).
\end{align}
Here we see explicitly that the contribution from the exchange term is of lower order in $N$. Let us now bound $(I)_n$ only with direct interaction. Using the Fourier decomposition of $v$ we find
\begin{align}\label{sc_term1_1}
&N^{-2/3} \bigSCP{\psi}{q_1\Big( (N-1)p_2v_{12}p_2 - V^{\dir}_1 \Big) p_1 \psi} \nonumber \\*
&\quad = N^{-2/3} \int d^3k \, \hat{v}(k) \bigSCP{\psi}{\Big( (N-1) p_2 e^{-ikx_2} p_2 - \sum_{j=1}^N \scp{\varphi_j}{e^{-ikx}\varphi_j} \Big) q_1 e^{ikx_1} p_1 \psi}.
\end{align}
Similar to Lemma~\ref{lem:estimates_terms_alpha_dot_beta} we would like to diagonalize the operator $p_2e^{-ikx_2}p_2$. However, since it is not self-adjoint, we decompose $e^{-ikx} = \cos(kx) - i \sin(kx)$ and diagonalize the self-adjoint operators
\begin{equation}\label{sc_diag}
p_2 \cos(kx_2) p_2 = \sum_{j=1}^N \lambda_j p_2^{\chi_j}, \quad\quad p_2 \sin(kx_2) p_2 = \sum_{j=1}^N \tilde{\lambda}_j p_2^{\tilde{\chi}_j},
\end{equation}
where the real eigenvalues $\lambda_j, \tilde{\lambda}_j$ and orthonormal eigenvectors $\chi_j, \tilde{\chi}_j$ depend on $k$. Note that $\lambda_j = \scp{\chi_j}{\cos(kx)\chi_j}$, so $|\lambda_j| \leq \norm{\chi_j}^2 = 1$ and $\sum_{j=1}^N \lambda_j  = \sum_{j=1}^N \scp{\chi_j}{\cos(kx) \chi_j} = \sum_{j=1}^N \scp{\varphi_j}{\cos(kx) \varphi_j}$, and analogous for $\tilde{\lambda}_j$. In the following, we use the projector $q^{\chi_j}_{\neq 1} = 1 - \sum_{m=2}^N p_m^{\chi_j}$ introduced in \eqref{q_neq_1} and the singular value decomposition $q_1e^{ikx_1}p_1 = \sum_{\ell} \mu_{\ell}\ketbra{\phi_{\ell}}{\tilde{\phi}_{\ell}}_1$ (see Lemma~\ref{lem:sing_value}), with  $\sum_{\ell} \mu_{\ell} = \norm[\tr]{q_1e^{ikx_1}p_1}$. Let us now decompose term $(I)_n$ by using $e^{-ikx} = \cos(kx) - i \sin(kx)$. For the $\cos$-term we find, using the antisymmetry of $\psi$ and Cauchy-Schwarz,
\begin{align}\label{sc_term1_part1}
&N^{-2/3} \, \bigg\lvert \int d^3k \, \hat{v}(k) \bigSCP{\psi}{\Big( (N-1) p_2 \cos(kx_2) p_2 - \sum_{j=1}^N \scp{\varphi_j}{\cos(kx) \varphi_j} \Big) q_1 e^{ikx_1} p_1 \psi} \bigg\rvert \nonumber \\
&\qquad = N^{-2/3} \, \bigg\lvert \int d^3k \, \hat{v}(k) \sum_{j=1}^N \lambda_j \bigSCP{\psi}{\Big( (N-1) p_2^{\chi_j} - 1 \Big) q_1 e^{ikx_1} p_1 \psi} \bigg\rvert \nonumber \\
&\qquad = N^{-2/3} \, \bigg\lvert \int d^3k \, \hat{v}(k) \sum_{j=1}^N \lambda_j \sum_{\ell} \mu_{\ell} \bigSCP{\psi}{ q^{\chi_j}_{\neq 1} \ketbra{\phi_{\ell}}{\tilde{\phi}_{\ell}}_1 q^{\chi_j}_{\neq 1} \psi} \bigg\rvert \nonumber \\
&\qquad \leq N^{-2/3} \, \int d^3k \, |\hat{v}(k)| \sum_{j=1}^N |\lambda_j| \sum_{\ell} \mu_{\ell} \norm{\bra{\phi_{\ell}}_1q^{\chi_j}_{\neq 1}\psi} \norm{\bra{\tilde{\phi}_{\ell}}_1q^{\chi_j}_{\neq 1}\psi} \nonumber \\
\eqexpl{by \eqref{p_neq_1}} &\qquad\leq N^{-2/3} \, \int d^3k \, |\hat{v}(k)| \sum_{\ell} \mu_{\ell} \sqrt{\bigSCP{\psi}{\ketbra{\phi_{\ell}}{\phi_{\ell}}_1 (Nq_2+p_2) \psi}} \,\times \nonumber \\* 
&\qquad \quad \quad \times \sqrt{\bigSCP{\psi}{ \ketbra{\tilde{\phi}_{\ell}}{\tilde{\phi}_{\ell}}_1 (Nq_2+p_2)  \psi}} \nonumber \\
\eqexpl{by Lem.~\ref{lem:projector_norms}} &\qquad \leq N^{-2/3} \, \int d^3k \, |\hat{v}(k)| \norm[\tr]{q_1e^{ikx_1}p_1} \bigSCP{\psi}{(q_2+N^{-1}p_2) \psi} \nonumber \\ 
\eqexpl{by \eqref{sc_prop}} &\qquad \leq Ce^{Ct} \int d^3k \, |\hat{v}(k)| (1+|k|) \big( \alpha_n + N^{-1} \big) \nonumber \\
&\qquad \leq Ce^{Ct} \big( \alpha_n + N^{-1} \big).
\end{align}
The same bound holds for the $\sin$-term.

\textbf{Term} $\boldsymbol{(II)_n.}$ Similarly to Lemma~\ref{lem:estimates_terms_alpha_dot_beta}, we use the antisymmetry of $\psi$ to shift one $q$ to the right side of the scalar product, i.e.,
\begin{align}\label{sc_term_II}
\left\lvert (II)_n \right\rvert &= 2 N^{-2/3} \, \Big\lvert \Im\, \bigSCP{\psi}{q_1q_2\, (N-1)v_{12} \,p_1p_2 \psi} \Big\rvert \nonumber \\
&= 2 N^{-2/3} \, \Big\lvert \Im\, \bigSCP{\psi}{q_1 \sum_{m=2}^N q_m v_{1m} p_1p_m \psi} \Big\rvert \nonumber \\
&\leq 2 N^{-2/3} \norm{q_1 \psi} \norm{q_1 \sum_{m=2}^N q_m v_{1m} p_1p_m \psi} \nonumber \\
&\leq 2 N^{-2/3} \norm{q_1 \psi} \bigg[ N^2 \bigSCP{\psi}{q_3p_1p_2v_{12}q_1v_{13}p_1p_3q_2\psi} + N \bigSCP{\psi}{p_1p_2v_{12}q_1q_2v_{12}p_1p_2\psi} \bigg]^{1/2}.
\end{align}
For the following estimate, note that for all trace class operators $A_1$, $B_2$, we find by the singular value decomposition and Lemma~\ref{lem:projector_norms},
\begin{align}\label{tr_tr_aux}
\bigSCP{\psi}{q_3 \, A_1 B_2 \, q_3 \psi} &= \sum_{j,\ell} \mu_j \mu'_{\ell} \, \bigSCP{\psi}{q_3 \, \ketbra{\phi_j}{\tilde{\phi}_j}_1 \, \ketbra{\phi'_{\ell}}{\tilde{\phi}'_{\ell}}_2 \, q_3 \psi} \nonumber \\
&\leq C \norm[\tr]{A} \norm[\tr]{B} N^{-2} \, \bigSCP{\psi}{q_3 \psi}.
\end{align}
Using first Cauchy-Schwarz, then the Fourier decomposition of $v$ and $\big|\big| q_1e^{ikx_1}p_1 \big|\big|_{\op} \leq 1$, we find
\begin{align}
&\bigSCP{\psi}{q_3p_1p_2v_{12}q_1v_{13}p_1p_3q_2\psi} \nonumber \\
&\qquad\leq \bigSCP{\psi}{q_3p_1p_2v_{12}q_1q_2v_{12}p_1p_2q_3\psi} \nonumber \\
&\qquad= \int d^3k d^3k' \, \hat{v}(k) \hat{v}(k') \bigSCP{\psi}{q_3 \left( p_1e^{ikx_1}q_1e^{ik'x_1}p_1 \right) \left( p_2e^{-ikx_2}q_2e^{-ik'x_2}p_2 \right) q_3 \psi} \nonumber \\
\eqexpl{by \eqref{tr_tr_aux}}&\qquad\leq C N^{-2} \int d^3k d^3k' \, |\hat{v}(k)| |\hat{v}(k')| \norm[\tr]{p_1e^{ikx_1}q_1e^{ik'x_1}p_1} \norm[\tr]{p_2e^{-ikx_2}q_2e^{-ik'x_2}p_2} \bigSCP{\psi}{q_3 \psi} \nonumber \\
\eqexpl{by \eqref{tr_HS_op_ineqs}}&\qquad\leq C N^{-2} \int d^3k d^3k' \, |\hat{v}(k)| |\hat{v}(k')| \norm[\tr]{p_1e^{ikx_1}q_1} \norm[\tr]{q_2e^{-ik'x_2}p_2} \alpha_n \nonumber \\
\eqexpl{by \eqref{sc_prop}}&\qquad\leq Ce^{Ct} N^{-2/3} \alpha_n,
\end{align}
and, by doing the same calculation with $q_3$ replaced by $1$,
\begin{equation}
\bigSCP{\psi}{p_1p_2v_{12}q_1q_2v_{12}p_1p_2\psi} \leq Ce^{Ct} N^{-2/3}.
\end{equation}
Thus, continuing from \eqref{sc_term_II}, we find
\begin{equation}
\left\lvert (II)_n \right\rvert \leq Ce^{Ct} N^{-2/3} \sqrt{\alpha_n} \left[ N^2 N^{-2/3} \alpha_n + N N^{-2/3} \right]^{1/2} \leq Ce^{Ct} \Big( \alpha_n + N^{-1} \Big).
\end{equation}

\textbf{Term} $\boldsymbol{(III)_n.}$ Using the Fourier decomposition of $v$, the singular value decomposition of $q_1e^{ikx_1}p_1$ and Cauchy-Schwarz we find
\begin{align}
\Big\lvert (III)_n \Big\rvert &\leq N^{1/3} \, \Big\lvert \int d^3k \, \hat{v}(k) \bigSCP{\psi}{q_1e^{ikx_1}p_1q_2 e^{-ikx_2}q_2 \psi} \Big\rvert \nonumber \\
&= N^{1/3} \Big\lvert \int d^3k \, \hat{v}(k) \sum_{\ell} \mu_{\ell} \bigSCP{\psi}{q_2\ketbra{\phi_{\ell}}{\tilde{\phi}_{\ell}}_1 e^{-ikx_2}q_2 \psi} \Big\rvert \nonumber \\
&\leq N^{1/3} \int d^3k \, |\hat{v}(k)| \sum_{\ell} \mu_{\ell} \, \Big|\Big|\bra{\phi_{\ell}}_1 q_2\psi\Big|\Big| \norm{\bra{\tilde{\phi}_{\ell}}_1 q_2 \psi} \nonumber \\
\eqexp{by Lem.~\ref{lem:projector_norms}} &\leq N^{1/3} \int d^3k \, |\hat{v}(k)| \norm[\tr]{q_1e^{ikx_1}p_1} N^{-1} \norm{q_2\psi}^2 \nonumber \\
\eqexp{by \eqref{sc_prop}} &\leq Ce^{Ct} \alpha_n.
\end{align}
\end{proof}

\bigskip

\noindent{\it Acknowledgments.} We thank Detlef D\"urr, L\'{a}szl\'{o} Erd\H{o}s, Maximilian Jeblick, David Mitrouskas and Robert Seiringer for many helpful discussions. We thank the referee for many comments that helped to improve the presentation of the article. S.P.\ gratefully acknowledges support from Cusanuswerk. S.~P.'s research has received funding from the People Programme (Marie Curie Actions) of the European Union's Seventh Framework Programme (FP7/2007-2013) under REA grant agreement n\textdegree~291734. The article is based on one of the author's (S.P.) PhD thesis \cite{petrat:2014_phd}.

\bibliographystyle{plain}
\bibliography{references}

\end{document}